\newif\ifstoc
  \newfont{\mycrnotice}{ptmr8t at 7pt}
  \newfont{\myconfname}{ptmri8t at 7pt}
\newcommand{\full}[1]{\ifstoc\else#1\fi}
\newcommand{\short}[1]{\ifstoc#1\fi}
\definecolor{Darkblue}{rgb}{0,0,0.4}
\definecolor{Brown}{cmyk}{0,0.61,1.,0.60}
\definecolor{Purple}{cmyk}{0.45,0.86,0,0}
\newcommand{\lref}[2][]{\hyperref[#2]{#1~\ref*{#2}}}
\newtheorem{theorem}{Theorem}[section]
\newtheorem{definition}[theorem]{Definition}
\newtheorem{lemma}[theorem]{Lemma}
\newtheorem{fact}[theorem]{Fact}
\newtheorem{assumption}[theorem]{Assumption}
\newtheorem{claim}[theorem]{Claim}
\newtheorem{corollary}[theorem]{Corollary}
\newenvironment{proof}{

\noindent{\bf Proof:}}
{\hfill$\blacksquare$

}
\newenvironment{proofof}[1]{

\noindent{\bf Proof of {#1}:}}
{\hfill$\blacksquare$

}
\newcommand{\junk}[1]{}
\newcommand{\ignore}[1]{}
\def\floor#1{\lfloor #1 \rfloor}
\def\ceil#1{\lceil #1 \rceil}
\def\abs#1{\mathopen| #1 \mathclose|}   
\newcommand{\poly}{\operatorname{poly}}
\newcommand{\sse}{\subseteq}
\newcommand{\I}{{\mathcal{I}}}
\newcommand{\G}{{\mathcal{G}}}
\newcommand{\e}{\varepsilon}
\newcommand{\ts}{\textstyle}
\newcommand{\bs}[1]{\boldsymbol{#1}}
\newcounter{note}[section]
\newcommand{\initOneLiners}{%
    \setlength{\itemsep}{0pt}
    \setlength{\parsep }{0pt}
    \setlength{\topsep }{0pt}
}
\newenvironment{OneLiners}[1][\ensuremath{\bullet}]
    {\begin{list}
        {#1}
        {\initOneLiners}}
    {\end{list}}
\newcommand{\initTwoLiners}{%
    \setlength{\itemsep}{1pt}
    \setlength{\parsep }{1pt}
    \setlength{\topsep }{1pt}
}
\newenvironment{TwoLiners}[1][\ensuremath{\bullet}]
    {\begin{list}
        {#1}
        {\initTwoLiners}}
    {\end{list}}
\newcommand{\alg}{\ensuremath{\mathsf{alg}}}
\newcommand{\sdp}{\ensuremath{\mathsf{sdp}}}
\newcommand{\SAplus}{\ensuremath{SA^+}\xspace}
\newcommand{\SA}{\ensuremath{SA}\xspace}
\newcommand{\cc}{\overline{\chi}}
\newcommand{\Exp}{\mathbb{E}}
\begin{document}

\ifstoc

\title{On the Lov\'{a}sz Theta function for Independent
  Sets in Sparse Graphs\titlenote{N.\ Bansal supported by NWO grant 639.022.211 and an ERC
    consolidator grant 617951. A.\ Gupta and G.\ Guruganesh supported by
    NSF awards CCF-1016799 and CCF-1319811.}}

\numberofauthors{3}
\author{
\alignauthor Nikhil Bansal \\
\affaddr{Eindhoven University of Technology} \\
\email{n.bansal@tue.nl}
\alignauthor Anupam Gupta \\
\affaddr{Carnegie Mellon
    University} \\
\affaddr{Pittsburgh, PA} \\
\email{anupamg@cs.cmu.edu}
\alignauthor Guru Guruganesh \\
\affaddr{Carnegie Mellon
    University} \\
\affaddr{Pittsburgh, PA} \\
\email{ggurugan@cs.cmu.edu}
}
\else

\title{On the Lov\'{a}sz Theta function for Independent
  Sets in Sparse Graphs}

\author{Nikhil Bansal\thanks{Eindhoven University of Technology. Email: n.bansal@tue.nl. Supported by NWO grant 639.022.211 and an ERC consolidator grant 617951.} \and Anupam Gupta\thanks{Computer Science Department, Carnegie Mellon
    University, Pittsburgh, PA 15213, USA. Research partly supported by
    NSF awards CCF-1016799 and CCF-1319811.} \and Guru
  Guruganesh$^\dagger$ } 
\fi

\date{}

\maketitle

\full{
\thispagestyle{empty}
}

\begin{abstract} 
  We consider the maximum independent set problem on sparse graphs with
  maximum degree~$d$. We show that the integrality gap of the Lov\'asz
  $\vartheta$-function based SDP is 
  \[ \widetilde{O}(d/\log^{3/2} d). \] This improves on the previous
  best result of $\widetilde{O}(d/\log d)$, and almost matches the 
  integrality gap of $\widetilde{O}(d/\log^2 d)$
  recently shown for stronger SDPs, namely those obtained using
  $\poly\log(d)$ levels of the \SAplus semidefinite hierarchy.  The
  improvement comes from an improved Ramsey-theoretic bound on the
  independence number of $K_r$-free graphs for large values of $r$.

  We also show how to obtain an algorithmic version of the
  above-mentioned \SAplus-based integrality gap result, via a coloring
  algorithm of Johansson. The resulting approximation guarantee of
  $\widetilde{O}(d/\log^2 d)$ matches the best unique-games-based
  hardness result up to lower-order $\poly (\log\log d)$ factors.
\end{abstract}

\full{
\newpage
\setcounter{page}{1}
}

\section{Introduction}

Given a graph $G=(V,E)$, an independent set is a subset of vertices $S$
such that no two vertices in $S$ are adjacent.  The maximum independent
set problem is one of the most well-studied problems in algorithms and
graph theory, and its study has led to various remarkable developments
such as the seminal result of Lov\'{a}sz~\cite{Ltheta} in which he introduced
the $\vartheta$-function based on semidefinite programming, as well as
several surprising results in Ramsey theory and extremal combinatorics.

In general graphs, the problem is notoriously hard to approximate.
Given a graph on $n$ vertices, the best known algorithm is due to
Feige~\cite{Feige04}, and achieves an approximation ratio of
$\widetilde{O}(n/\log^3 n)$; here $\widetilde{O}(\cdot)$ suppresses some
$\log \log n$ factors. On the hardness side, a result of
H{\aa}stad~\cite{Hastad96} shows that no $n^{1-\e}$ approximation
exists for any constant $\e>0$, assuming NP $\not \subseteq$
ZPP. The hardness has been improved more recently to $n/\exp((\log
n)^{3/4+\e})$ by Khot and Ponnuswami~\cite{KhotP06}. 

In this paper, we focus on the case of bounded-degree graphs, with
maximum degree $d$. Recall that the na\"{\i}ve algorithm (that
repeatedly picks an arbitrary vertex $v$ and deletes its neighborhood)
produces an independent set of size at least $n/(d+1)$, and hence is a
$d+1$-approximation. The first $o(d)$-approximation was obtained by
Halld\'{o}rsson and Radhakrishnan~\cite{HR94}, who gave a $O(d /\log
\log d)$ guarantee, based on a Ramsey theoretic result of Ajtai et al.~\cite{AEKS81}. 
Subsequently, an $O(d\,\frac{\log \log d}{\log d})$-approximation was 
obtained independently by several
researchers~\cite{AlonK98,Halperin02,Hall99} using the ideas of Karger,
Motwani and Sudan~\cite{KMS98} to round the natural SDP for the problem,
which was itself based on the Lov\'asz $\vartheta$-function.

On the negative side, Austrin, Khot and Safra~\cite{AKS11} showed an
$\Omega(d/\log^2 d)$ hardness of approximation, assuming the Unique
Games Conjecture. Assuming P $\neq$ NP, a hardness of $d/\log^4d$ was
recently shown by Chan~\cite{Chan}.  We remark that these hardness
results only seem to hold when $d$ is a constant or a very mildly
increasing function of $n$. In fact for $d=n$, the $\Omega(d/\log^2 d)$
hardness of~\cite{AKS11} is inconsistent with the known $O(n/\log^3 n)$
approximation~\cite{Feige04}. Hence throughout this paper, it will be
convenient to view $d$ as being a sufficiently large but fixed constant.

 
Roughly speaking, the gap between the $\Omega(d/\log^2 d)$-hardness and
the $\widetilde{O}(d/\log d)$-approximation arises for the following
fundamental reason. Approaches based on the SDP work extremely well if
the $\vartheta$-function has value more than $\widetilde{O}(n/\log d)$,
but not below this threshold. In order to to show an $\Omega(d/\log
d)$-hardness result, at the very least, one needs an instance with SDP
value around $n/\log d$, but optimum integral value about $n/d$.  While
graphs with the latter property clearly exist (e.g., a graph consisting
of $n/(d+1)$ disjoint cliques $K_{d+1}$), the SDP value for such graphs
seems to be low. In particular, having a large SDP value imposes
various constraints on the graph (for example, they cannot contain many
large cliques) which might allow the optimum to be non-trivially larger
than $n/d$, for example due to Ramsey-theoretic reasons.

Recently, Bansal~\cite{Ban15} leveraged some of these ideas to improve
the approximation guarantee by a modest $O(\log \log d)$ factor to
$d/\log d$ using polylog$(d)$ levels of the \SAplus hierarchy. His
improvement was based on combining properties of the \SAplus hierarchies
together with the ideas of~\cite{AEKS81}. He also showed that the
$O(\log^4 d)$-level \SAplus relaxation has an \emph{integrality gap} of
$\widetilde{O}(d/\log^2 d)$, where $\widetilde{O}(\cdot)$ suppresses
some $\log \log d$ factors.  The main observation was that as the
\SAplus relaxation specifies a local distribution on independent sets,
and if the relaxation has high objective value then it must be that any
polylog$(d)$ size subset of vertices $X$ must contain a large
independent subset. One can then use a result of Alon~\cite{Alon96}, in turn  based on an elegant entropy-based approach of
Shearer~\cite{Shearer95}, to show that such graphs have non-trivially
large independents sets. However, this argument is non-algorithmic; it
shows that the lifted SDP has a small integrality gap, but does not give
a corresponding approximation algorithm with running time
sub-exponential in $n$. This leads to the question whether this approach
can be converted into an approximation algorithm that outputs a set of
size $\widetilde{\Omega}(\log^2 d/d)$ times the optimal independent set,
or if there is a gap between the approximability and estimability of
this problem (as recently shown for an NP problem by Feige and
Jozeph~\cite{FJ14}).



\subsection{Our Results.}

Our results resolve some of these questions. For our first result, we
consider the standard SDP relaxation for independent set (without
applying any lift-and-project steps) and show that it is surprisingly
more powerful than the guarantee given by Alon and Kahale~\cite{AlonK98}
and Halperin~\cite{Halperin02}.

\begin{theorem}
  \label{th:sdp}
  On graphs with maximum degree $d$, the standard
  $\vartheta$-function-based SDP formulation for the independent set
  problem has an integrality gap of $\widetilde{O}(d/\log^{3/2}
  d)$.\footnote{Here and subsequently, $\widetilde{O}(\cdot)$ suppresses
    $\poly(\log \log d)$ factors.}
\end{theorem}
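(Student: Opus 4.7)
The plan is to refine the Karger--Motwani--Sudan style rounding of the $\vartheta$-SDP (which underlies the $\widetilde{O}(d/\log d)$ bounds of Alon--Kahale and Halperin) by combining it with an improved Ramsey-theoretic bound for $K_r$-free graphs, using $r$ slowly growing with $d$. Throughout, I would view an SDP solution as unit vectors $v_0, v_1, \ldots, v_n$ with $v_i \cdot v_j = 0$ for edges and SDP value $\vartheta = \sum_i (v_0 \cdot v_i)^2 \geq n/\gamma$; the goal is to produce an independent set of size $\widetilde{\Omega}(n \log^{3/2} d / d^2)$ for $\gamma = \widetilde{O}(d/\log^{3/2} d)$.

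First I would perform the standard preprocessing: bucket vertices by the magnitude of $(v_0 \cdot v_i)^2$ and restrict to a single bucket, paying only a $\log n$ factor. After scaling, this gives a set $S$ of $m$ vertices whose vectors have inner product $\geq -1/k$ along edges, for an effective ``vector chromatic number'' $k$, and such that $|S|/k \gtrsim \vartheta/\log n$. Hyperplane rounding then yields an independent set of expected size $\Omega(m/d^{2/k})$, which when optimized over $k$ recovers the classical $\widetilde{\Omega}(m \log d/d)$ guarantee. So if $\vartheta \gg n \log^{1/2} d / d$, we are already done; henceforth assume $\vartheta$ lies in the moderate regime $n \log^{3/2} d / d^2 \lesssim \vartheta \lesssim n \log^{1/2} d/d$.

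In this moderate regime the plan is a dichotomy. Either (i) the relevant subgraph on $S$ contains many copies of $K_r$ for an appropriately chosen $r = \Theta(\log d / \log\log d)$ or so, in which case one can leverage the SDP constraints on a clique (unit vectors pairwise orthogonal cost $1$ in norm) to derive an upper bound on $\vartheta$ that contradicts our assumption; or (ii) $S$ induces a $K_r$-free graph of maximum degree $d$, and we apply a strengthened Ramsey bound, building on Shearer and Alon, that says every $K_r$-free graph on $m$ vertices with max degree $d$ contains an independent set of size $\Omega(m \log d/d) \cdot g(r)$ with $g(r) = \widetilde{\Omega}(\sqrt{\log d})$ for this choice of $r$. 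Combined with the bound $m/k \gtrsim \vartheta/\log n$ from hyperplane rounding, this yields the target $\widetilde{\Omega}(\vartheta \cdot \log^{3/2} d/ d)$.

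The main obstacle is the Ramsey side: establishing the right quantitative improvement for $K_r$-free graphs with $r$ growing, and choosing $r$ so as to balance the two cases of the dichotomy. The SDP-side arguments (hyperplane rounding, clique-bound on $\vartheta$) are relatively standard, but making the coupling tight, so that the improvement from $\log d$ to $\log^{3/2} d$ actually materializes rather than being eaten by $\log\log d$ factors in the Ramsey step, is the delicate technical heart of the proof.
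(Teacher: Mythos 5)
Your high-level architecture does match the paper's (handle vertices with large $x_i=\|v_i\|^2$ by Halperin/KMS-style rounding, argue the rest is $K_r$-free, and boost via an improved Ramsey bound), but as a proof it has genuine gaps. First, the improved Ramsey bound \emph{is} the technical content of the theorem, and you only assert it; moreover, as stated it is false. A $K_r$-free graph of maximum degree $d$ on $m$ vertices need not contain an independent set of size $\Omega(\frac{m\log d}{d})\cdot \widetilde{\Omega}(\sqrt{\log d})$: taking $m/d$ disjoint copies of a Ramsey graph on $d$ vertices gives $K_r$-free graphs with $\alpha = O\big(\frac{m}{d}\cdot\frac{\log d}{\log r}\big)$, which for $r$ near $\log d$ is $O\big(\frac{m}{d}\cdot\frac{\log d}{\log\log d}\big)$, far below your claim. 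What is actually provable --- and what the paper proves, by first lower-bounding the \emph{number} of independent sets in $K_r$-free graphs (their Theorem on $\log\I \gtrsim \log^2 x/\log r$) and feeding that into the Shearer/Alon random-independent-set argument --- is $\alpha = \Omega\big(\frac{m}{d}\sqrt{\log d/\log r}\big)$, i.e.\ only a $\widetilde{\Omega}(\sqrt{\log d})$ gain over $m/d$. Second, your dichotomy is miscalibrated: you cannot extract $K_r$-freeness for $r=\Theta(\log d/\log\log d)$ from the SDP in the relevant regime, and ``many copies of $K_r$'' does not upper-bound $\vartheta$ globally (the SDP can put its value elsewhere). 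The correct step is pointwise: after discarding vertices with $x_i\le \beta/2$ (where $\beta n$ is the SDP value and $\beta\ge 2/\log^{3/2}d$), any clique among survivors would carry SDP mass exceeding one, so the surviving graph is $K_r$-free only for $r=2/\beta\approx \log^{3/2}d$ --- which is precisely why one needs a Ramsey bound that stays non-trivial for such large $r$, where Shearer's $\frac{\log d}{r\log\log d}$ term is useless.

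Finally, the arithmetic producing the exponent $3/2$ is missing. With only a $\sqrt{\log d}$ Ramsey gain you also need the surviving vertex count $n'$ to exceed $\vartheta$ by a $\widetilde{\Omega}(\log d)$ factor. This comes from the threshold $\eta=\Theta(\log\log d/\log d)$: if at least $n/\log^2 d$ vertices have $x_i\ge\eta$, Halperin's rounding on them already yields an independent set of size $\Omega(n\log^{3/2}d/d)$ and you are done; otherwise those vertices can be deleted with negligible loss of SDP value, and then every survivor contributes at most $\eta$, so $n'\gtrsim \vartheta/\eta$. Your substitute inequality $m/k\gtrsim \vartheta/\log n$ both obscures this step and loses a $\log n$ factor, which the statement cannot afford: the gap must depend on $d$ alone, and $n$ may be arbitrarily large compared to $d$ (a $\log d$-factor bucketing loss would also kill the improvement, since the whole gain over the known $\widetilde{O}(d/\log d)$ bound is a single $\sqrt{\log d}$).
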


The proof of Theorem~\ref{th:sdp} is non-constructive; while it shows
that the SDP value is within the claimed factor of the optimal IS size,
it does not give an efficient algorithm to find such an approximate
solution. Finding such an algorithm remains an open question.

The main technical ingredient behind Theorem~\ref{th:sdp} is the
following new Ramsey-type result about the existence of large
independent sets in $K_r$-free graphs. This builds on a long line of
previous results in Ramsey theory (some of which we discuss in
Section~\ref{s:prel}), and is of independent interest.
(Recall that $\alpha(G)$ is the maximum independent set size in $G$.)

\begin{theorem}
  \label{th2} 
  For any $r > 0$, if $G$ is a $K_r$-free graph with maximum degree $d$
  then
  \begin{equation}
	\label{ramsey-term}
    \alpha(G) = \Omega\left(\frac{n}{d} \cdot\max\left( \frac{\log d}{r \log
          \log d}, \left(\frac{\log d}{\log
            r}\right)^{1/2}\right)\right).
  \end{equation}
\end{theorem}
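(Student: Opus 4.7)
The plan is to establish each of the two terms inside the $\max$ in \eqref{ramsey-term} separately; the final bound then follows by taking whichever lower bound is larger.

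For the first term $\frac{\log d}{r \log \log d}$, I would follow a Shearer / Ajtai--Erd\H{o}s--Koml\'os--Szemer\'edi style argument for $K_r$-free graphs with max degree $d$, explicitly tracking the dependence on $r$ that was left implicit in older proofs. The approach is to analyse a random-greedy independent set (for instance, process vertices in a uniformly random order and greedily include each one none of whose previously-processed neighbours has been chosen), and compute its expected size. The $K_r$-free hypothesis enters recursively: the neighbourhood $N(v)$ of every vertex is $K_{r-1}$-free, which limits how quickly vertices get ``killed'' during the greedy process. Combining a Shearer-style entropy / concavity argument with this recursion should yield expected independent set size $\Omega(n \log d/(d \cdot r \log \log d))$, with the $1/r$ factor arising from the $r$-step recursive unwinding and the $1/\log\log d$ factor being inherent to the Shearer bound already in the triangle-free case.

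For the novel second term $(\log d/\log r)^{1/2}$, the plan is an iterative/recursive argument with $k = O(\log\log r)$ levels. Define a decreasing sequence $r_0 = r$ with $r_{i+1} = \sqrt{r_i}$, so that $r_k = O(1)$ after $k$ iterations. At level $i$, sample a random vertex subset $W_i$ at a carefully chosen density $p_i$: with high probability $G[W_i]$ has reduced max degree, and since cliques shrink in expectation under sampling, $G[W_i]$ can be argued to be effectively $K_{r_{i+1}}$-free. After $k$ iterations the residual graph is $K_{O(1)}$-free, at which point one applies the triangle-free (Shearer) bound directly. The sampling densities $p_i$ are chosen so that the per-level gains telescope to the target $\sqrt{\log d/\log r}$ factor and the final independent set comes out of size $\Omega\bigl((n/d)\,\sqrt{\log d/\log r}\bigr)$.

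The main obstacle I anticipate is the tight quantitative analysis of the second-term recursion. Three things must go right at every level: (i) the sampled subgraph must have enough surviving vertices; (ii) the max degree must drop commensurately so the recursion can continue; and (iii) the reduction from $K_{r_i}$-free to $K_{r_{i+1}}$-free must hold with high probability on the sample. Most delicate is calibrating the parameters so the compound gain lands on the $1/2$ exponent on $\log d$ (rather than, say, $1/3$ or $2/3$), which likely requires a careful potential or entropy-style accounting across the $O(\log\log r)$ levels.
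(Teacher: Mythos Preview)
Your proposal diverges substantially from the paper's proof, and the second half contains a genuine gap.

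\medskip
\textbf{What the paper actually does.} Both terms in the $\max$ are obtained in one unified argument, not two separate ones. The paper follows the Shearer--Alon framework: let $W$ be a \emph{uniformly random} independent set in $G$, set $X_v = d\,|\{v\}\cap W| + |N(v)\cap W|$, and show $\Exp[X_v] \ge c\gamma$ for every $v$ by conditioning on $W$ outside $N(v)\cup\{v\}$. The new ingredient is a lower bound on the \emph{number} of independent sets in a $K_r$-free graph on $x$ vertices: the paper proves $\log |\mathcal{I}| \ge \Omega(\log^2 x / \log r)$ by sampling each vertex with probability $2/\sqrt{x}$, noting that the sample (still $K_r$-free, on $\sim\sqrt{x}$ vertices) has an independent set of size $\Omega(\log x / \log r)$ by the elementary Ramsey bound $R(s,t)\le\binom{s+t-2}{s-1}$, and then counting how many independent sets in $G$ could have produced it. Plugging this count into the standard Shearer--Alon machinery (via a lemma on average set size in a large family) and balancing against the trivial $\log d$ term gives the $\sqrt{\log d/\log r}$ factor directly; the $\frac{\log d}{r\log\log d}$ term drops out of the same calculation using the cruder bound $\log|\mathcal{I}|\ge x^{1/r}/2$.

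\medskip
\textbf{The gap in your plan.} For the second term you propose to sample vertices at some density and argue that the sampled subgraph becomes ``effectively $K_{r_{i+1}}$-free'' with $r_{i+1}=\sqrt{r_i}$. This step does not go through. A $K_r$-free graph of maximum degree $d$ can contain enormously many copies of $K_{r-1}$ (indeed, up to order $n\cdot d^{r-2}$, since every neighbourhood is only constrained to be $K_{r-1}$-free), and hence also enormously many copies of $K_{\sqrt{r}}$. Independent vertex sampling at any rate $p$ that keeps enough vertices around will, in expectation, retain a huge number of these smaller cliques; there is no mechanism by which random sparsification lowers the clique number. Your own item (iii) in the list of obstacles is not just ``delicate'' --- it is false as stated, and no calibration of the densities $p_i$ can rescue it. The paper's approach sidesteps this entirely: it never tries to reduce the clique number, but instead exploits $K_r$-freeness only through the Ramsey lower bound on $\alpha$ applied to a \emph{single} random sample, and converts that into a count of independent sets.

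Your first-term sketch is also not quite the paper's argument (the paper does not use random-greedy; it uses the uniform random independent set), but that term is Shearer's known bound and there are multiple routes to it, so this is a lesser concern.
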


Previously, the best known bound for $K_r$-free graphs was
$\Omega(\frac{n}{d}\, \frac{\log d}{r \log \log d})$ given by
Shearer~\cite{Shearer95}. Observe the dependence on $r$: when $r \geq
\frac{\log d}{\log \log d}$, i.e., when we are only guaranteed to
exclude very large cliques, this result does not give anything better
than the trivial $n/d$ bound.  It is in this range of $r \geq \log d$
that the second term in the maximization in \eqref{ramsey-term} starts to perform better and
give a non-trivial improvement. In particular, if $G$ does not contain
cliques of size $r = O(\log^{3/2}d)$ (which will be the interesting case
for Theorem~\ref{th:sdp}), Theorem~\ref{th2} gives a bound of
$\widetilde{\Omega}(\frac{n}{d}\,(\log d)^{1/2})$. Even for
substantially larger values such as $r = \exp(\log^{1-2\e} d)$, this
gives a non-trivial bound of $\widetilde{O}(\frac{n}{d}\, \log^\e d)$.

Improving on Shearer's bound has been a long-standing open problem in
the area, and it is conceivable that the right answer for $K_r$-free
graphs of maximum degree $d$ is $\alpha(G) \geq \frac{n}{d} \frac{\log
  d}{\log r}$. This would be best possible, since in Section~\ref{sec:lbd} we
give a simple construction showing a upper bound of $\alpha(G) =
O(\frac{n}{d} \frac{\log d}{\log r})$ for $r \geq \log d$, which to the
best of our knowledge is the smallest upper bound currently known.  The gap
between our lower bound and this upper bound remains an intriguing one
to close; in fact it follows from our proof of Theorem~\ref{th:sdp} that such a lower bound would imply an $\widetilde{O}(d/\log^2d)$ integrality gap for the standard SDP. Alon~\cite{Alon96} shows that this bound is achievable under
the stronger condition that the neighborhood of each vertex is
$(r-1)$-colorable.


We then turn to the approximation question. Our third result shows how
to make Bansal's result  algorithmic, thereby resolving the
approximability of the problem (up to lower order $\poly(\log \log d)$ factors), at
least for moderate values of~$d$.

\begin{theorem}
  \label{th:alg}
  There is an $\widetilde{O}(d /\log^2 d)$-approximation algorithm with
  running time\footnote{ While a $d$-level \SAplus relaxation has size $n^{O(d})$ in general, our relaxation
only uses variables corresponding to subsets of vertices that lie in the neighborhood of some vertex $v$, and thus has  $ n\cdot 2^{O(d)}$ variables}.
 $\poly(n)\cdot 2^{O(d)}$, based on rounding a $d$-level
  \SAplus semidefinite relaxation.
\end{theorem}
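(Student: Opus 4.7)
The plan is to round the $d$-level \SAplus SDP relaxation using Johansson's constructive coloring algorithm, turning Bansal's integrality gap into an algorithm.

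First, I would set up and solve the $d$-level \SAplus SDP. As the footnote observes, the full $d$-level \SAplus has $n^{\Theta(d)}$ variables, but it suffices to keep only variables $x_S$ indexed by subsets $S \subseteq V$ that lie inside some closed neighborhood $N[v]$. Since $|N[v]| \leq d+1$, there are at most $n \cdot 2^{d+1}$ such variables, so the restricted SDP can be solved in $\poly(n)\cdot 2^{O(d)}$ time. This restriction is still strong enough because Bansal's non-algorithmic analysis of the integrality gap also only inspects marginals on single closed neighborhoods.

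Next, I would use the \SAplus local distributions to certify that each induced subgraph $G[N(v)]$ is locally sparse in a quantitative sense. The \SAplus distribution on $N[v]$ is supported on independent sets of $N[v]$, so for every clique $K \subseteq N(v)$ we have $\sum_{u \in K} x_u \leq 1$. When the SDP value is $\Omega(n \log^2 d / d)$, this forces $N(v)$ to be $K_r$-free for $r$ below $\polylog d$, and in fact satisfies the stronger local-colorability hypothesis used in Alon's~\cite{Alon96} Ramsey bound; such a certificate can be read off directly from the SDP solution. The rounding step then invokes an algorithmic version of Johansson's nibble-based coloring for graphs with such locally sparse neighborhoods, yielding a proper coloring with $\widetilde{O}(d/\log^2 d)$ colors, whose largest color class is an independent set of size $\widetilde{\Omega}(n\log^2 d/d)$.

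The main obstacle I expect is bridging Bansal's existential Shearer/Alon argument and a constructive nibble. Shearer's entropy-compression argument for $K_r$-free graphs is inherently non-algorithmic, while Johansson's standard nibble on triangle-free graphs delivers only the weaker $\log d$ savings. The technical crux will be to drive the nibble using the stronger local IS information certified on each $N[v]$ by \SAplus, and to show that the per-neighborhood coloring choices glue into a globally valid coloring of $G$ with $\widetilde{O}(d/\log^2 d)$ colors while only paying $\poly(\log\log d)$ overhead in the nibble's concentration analysis.
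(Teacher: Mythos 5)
Your overall skeleton matches the paper's: solve the $d$-level \SAplus relaxation keeping only variables on subsets of neighborhoods (so $n\cdot 2^{O(d)}$ variables and $\poly(n)\cdot 2^{O(d)}$ time), use the local distributions to certify that each neighborhood is colorable with $\polylog(d)$ colors, and invoke Johansson's algorithmic coloring theorem for locally colorable graphs. Note that the "crux" you flag at the end is not actually an obstacle: the paper uses Johansson's locally-$k$-colorable theorem (Theorem~\ref{thm:jo-local}) as a black box, so no new nibble analysis or gluing of per-neighborhood colorings is required.

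However, there is a genuine gap in your accounting, plus a missing case. First, a large global objective does not by itself force $N(v)$ to be $K_r$-free or locally colorable for every $v$; one gets local colorability only after restricting to vertices whose marginals are bounded below (the paper keeps $x_i \geq 1/(4\log^2 d)$ and uses the \SA local distribution over independent sets in $N(v)$ --- not merely clique inequalities, which would only give $K_r$-freeness --- to obtain a fractional $O(\log^2 d)$-coloring of $N(v)$ and hence $\chi(N(v)) = O(\log^3 d)$). Second, and more importantly, Johansson's theorem on a locally $\polylog(d)$-colorable graph gives a coloring with $\Theta(d \log\log d/\log d)$ colors, i.e., an independent set of size only $\widetilde{\Omega}(|V'| \log d / d)$; it does not give a coloring with $\widetilde{O}(d/\log^2 d)$ colors as you assert. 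The missing $\log d$ factor in the approximation ratio comes from an upper bound on the marginals: the paper also removes vertices with $x_i \geq \eta = 3\log\log d/\log d$, so that the surviving set $V'$ satisfies $|V'| \geq \sdp(G)/(2\eta) = \widetilde{\Omega}(\sdp(G)\log d)$, and then $\widetilde{\Omega}(|V'|\log d/d) = \widetilde{\Omega}(\sdp(G)\log^2 d/d)$. This forces a case analysis your proposal omits: if many vertices (at least $n/(4\log^2 d)$) have marginal $\geq \eta$, local colorability plus Johansson is insufficient, and the paper instead applies Halperin's SDP rounding (Theorem~\ref{th:halp}) to those vectors --- the only place where the PSD part of \SAplus is used --- to directly extract an independent set of size $\Omega(n\log^2 d/d)$. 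Without the marginal bucketing and the Halperin step, your argument only yields the weaker $\widetilde{O}(d/\log d)$ guarantee of Corollary~\ref{cor:lp}.
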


The improvement is simple, and is based on bringing the right tool to
bear on the problem. As in~\cite{Ban15}, the starting point is the
observation that if the $d$-level \SAplus relaxation has objective value
at least $n/s$ (and for Theorem~\ref{th:alg} the value $s=\log^2 d$
suffices), then the neighborhood of every vertex in the graph is
$k$-colorable for $k = s \cdot \textrm{polylog}(d)$ --- they are
``locally colorable''. By Alon's result mentioned above, such graphs
have $\alpha(G)=\Omega(\frac{n}{d}\, \frac{\log d}{\log k})$. However,
instead of using~\cite{Alon96} which relies on Shearer's entropy based
approach, and is not known to be constructive, we use an ingenious and
remarkable (and stronger) result of Johansson~\cite{Joh}, who shows that
the list-chromatic number of such locally-colorable graphs is
$\chi_{\ell}(G) = O(d \frac{\log k}{\log d})$. His result is based on a
very clever application of the R\"odl ``nibble'' method, together with
Lov\'asz Local Lemma to tightly control the various parameters of the
process at every vertex in the graph. Applying Johansson's result to our
problem gives us the desired algorithm.

Unfortunately, Johansson's preprint (back from 1996) was never
published, and cannot be found on the Internet.\footnote{We thank Alan
  Frieze for sharing a copy with us.}  For completeness (and to
facilitate verification), we give the proof in its entirety in the
Appendix. We essentially follow his presentation, but streamline some
arguments based on recent developments such as concentration bounds for
low-degree polynomials of random variables, and the algorithmic version
of Local Lemma. His manuscript contains many other results that build
upon and make substantial progress on a long line of work (we give more
details in Section~\ref{s:prel}).  We hope that this will make
Johansson's ideas and results accessible to a wider
audience. (Johansson's previous preprint~\cite{Joh-k3} showing the
analogous list-coloring result for triangle-free graphs is also
unavailable publicly, but is presented in the graph coloring book by
Molloy and Reed~\cite{MR02}, and has received considerable attention
since, both in the math~\cite{Alon99,Vu2002,Frieze2013} and computer
science communities~\cite{GrableP00,CPS14}.)

The proof of Theorem~\ref{th:alg} also implies the following new results
about the LP-based Sherali-Adams (\SA) hierarchies, without any SDP
constraints.
\begin{corollary}
  \label{cor:lp}
  The LP relaxation with clique constraints on sets of size up to $\log d$
  (and hence the relaxation $\SA_{(\log d)}$) has an integrality gap of
  $\widetilde{O}(d/\log d)$.  Moreover, the relaxation $\SA_{(d)}$ can
  be used to find an independent set achieving an $\widetilde{O}(d/\log
  d)$ approximation in time $\poly(n)\cdot2^{O(d)}$.
\end{corollary}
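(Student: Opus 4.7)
The plan is to mirror the proof of Theorem~\ref{th:alg} closely, but to replace the SDP-based structural consequence by an LP-only analog, and to replace Johansson's list-coloring step by the Ramsey-theoretic bound from Theorem~\ref{th2}. Recall that the proof of Theorem~\ref{th:alg} extracts, from a large-valued $\SA^+_{(d)}$ solution, the fact that every vertex has a $k$-colorable neighborhood, and then invokes Johansson's theorem on such locally colorable graphs. Without the SDP constraints, the natural LP analog is that LP-heavy vertices cannot lie in large cliques, since the clique constraints present at level $\log d$ of the Sherali--Adams hierarchy read $\sum_{v \in C} x_v \le 1$ for every clique $C$ of size at most $\log d$.

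For the integrality gap, I would case-split on the LP value. Let $x$ be an optimal fractional solution and write the LP value as $n/s$. If $s \ge \log d$, the trivial greedy bound $\alpha(G) \ge n/(d+1)$ already certifies an integrality gap of at most $(d+1)/s = O(d/\log d)$. Otherwise $s < \log d$, and I would focus on the heavy set $B := \{v : x_v \ge 1/(2s)\}$; since vertices outside $B$ contribute at most $n/(2s)$ to the LP value, $|B| \ge n/(2s)$. For any clique $C \subseteq B$ of size at most $\log d$, the clique constraint combined with $x_v \ge 1/(2s)$ on $B$ forces $|C|/(2s) \le 1$, so $G[B]$ is $K_{2s+1}$-free. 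Applying the Shearer-style first term of Theorem~\ref{th2} to $G[B]$ yields an independent set of size $\widetilde{\Omega}\bigl(n \log d / (d s^2)\bigr)$. For the algorithmic half, we would solve $\SA_{(d)}$, which has only $n \cdot 2^{O(d)}$ variables (as noted in the footnote of Theorem~\ref{th:alg}) and is therefore solvable in $\poly(n) \cdot 2^{O(d)}$ time; we then repeat the same case analysis using a constructive version of the Ramsey bound, obtainable from the random-greedy or nibble-style proof of Shearer's bound together with the algorithmic Lov\'asz Local Lemma, exactly as in the algorithmic ingredients for Theorem~\ref{th:alg}.

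The main technical obstacle I expect is tuning the case split so that both sides uniformly deliver the target $\widetilde{O}(d/\log d)$ ratio. The naive combination above --- greedy with ratio $d/s$ for $s \ge \log d$, and the structural bound with ratio $\widetilde{O}(sd/\log d)$ for $s < \log d$ --- balances only at $s \approx \sqrt{\log d}$, giving the weaker intermediate bound $\widetilde{O}(d/\sqrt{\log d})$. To close this gap one would need to either amplify the clique-freeness by iterating the restriction $B \mapsto B' \subseteq B$ on the induced subgraph (using the stronger SA constraints still available at higher levels), or exploit the full strength of the $\SA_{(\log d)}$ local distributions on size-$\log d$ subsets --- which encode genuine independent-set distributions rather than just clique inequalities --- to argue $K_{O(1)}$-freeness on a linear-sized induced subgraph rather than merely $K_{O(s)}$-freeness on an $n/s$-sized one.
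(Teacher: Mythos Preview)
Your proposal identifies the right structural ingredient (clique constraints force $K_r$-freeness on the heavy part, then invoke Shearer), but it stops one idea short of the target bound, and you correctly diagnose this yourself. The fix is much simpler than the two you suggest.

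The missing step is a geometric bucketing of the vertices by their $x_i$ values. Instead of a single heavy set $B=\{v:x_v\ge 1/(2s)\}$, partition the support into $k=\log\log d$ levels $B_j=\{v:x_v\in(2^{j-1}/\log d,\,2^j/\log d]\}$ for $j=1,\dots,k$ (and a light bucket $B_0$). Some level $B_j$ with $j\ge 1$ carries at least a $1/(2k)$ fraction of the LP objective. Writing $\gamma=2^{j-1}/\log d$, every vertex in $B_j$ has $x_i\in(\gamma,2\gamma]$, so the clique constraints give $K_{1/\gamma}$-freeness on $G[B_j]$, while $|B_j|\ge \frac{1}{2\gamma}\cdot\frac{\beta n}{2k}$. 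Shearer's bound then yields
\[
\alpha(B_j)\;=\;\Omega\!\left(|B_j|\cdot\frac{\gamma\log d}{d\log\log d}\right)
\;=\;\Omega\!\left(\frac{\beta n\,\log d}{d\,(\log\log d)^2}\right),
\]
and the point is that the $\gamma$ from the clique size and the $1/\gamma$ from the bucket size \emph{cancel}, so the bound is uniform over levels. Your single-threshold set $B$ does not enjoy this cancellation, which is exactly why you were forced to balance at $s\approx\sqrt{\log d}$. Neither iterating the restriction nor appealing to richer $\SA$ marginals is needed.

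For the algorithmic half, your plan to use a ``constructive version of Shearer's bound'' is problematic: the paper explicitly notes that Shearer's entropy method is not known to be algorithmic. The paper instead passes through local colorability: with the full $\SA_{(d)}$ solution in hand, the same bucketing argument shows $G[B_j]$ is locally $O(\log^2 d)$-colorable (since the local distributions on each $N(v)$ give a fractional coloring), and then Johansson's constructive list-coloring theorem yields the independent set. So the algorithmic ingredient is the same as in Theorem~\ref{th:alg}, not a hypothetical algorithmic Shearer.
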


Since LP-based relaxations have traditionally been found to be very weak
for the independent set problem, it may be somewhat surprising that a
few rounds of the \SA-hierarchy improves the integrality gap by a
non-trivial amount.

All our results extend to the case when $d$ is the average degree of the
graph; by first deleting the (at most $n/2$) vertices with degree more
than $2\overline{d}$ and then applying the results.


\section{Preliminaries}
\label{s:prel}

Given the input graph $G=(V,E)$, we will denote the vertex set $V$ by
$[n]=\{1,\ldots,n\}$.  Let $\alpha(G)$ denote the size of a maximum
independent set in $G$, and $d$ denote the maximum degree in $G$.  The
naive greedy algorithm implies $\alpha(G) \geq n/(d+1)$ for every $G$.
As the greedy guarantee is tight in general (e.g., if the graph is a
disjoint union of $n/(d+1)$ copies of the clique $K_{d+1}$), the trivial
upper bound of $\alpha(G) \leq n$ cannot give an approximation better
than $d+1$ and hence stronger upper bounds are needed. A natural bound
is the clique-cover number $\cc(G)$, defined as the minimum number of
vertex-disjoint cliques needed to cover $V$.  As any independent set can
contain at most one vertex from any clique, $\alpha(G) \leq \cc(G)$.

\medskip\noindent {\bf Standard LP/ SDP Relaxations.} In the standard LP
relaxation for the independent set problem, there is variable $x_i$ for
each vertex $i$ that is intended to be $1$ if $i$ lies in the
independent set and $0$ otherwise. The LP is the following:
\ifstoc
    \begin{align}
    \label{lp}
     \max &\sum_i x_i  \notag \\
     \textrm{s.t.~~~~}  
     x_i + x_j &\leq 1 && \forall (i,j)\in E \notag \\
     x_i &\in [0,1] && \forall i\in [n].
    \end{align}
\else
    \begin{equation}
    \label{lp}
     \max \sum_i x_i,  \quad \textrm{s.t.}\quad   x_i + x_j \leq 1 \quad \forall (i,j)\in E,  \quad \textrm{and}  
    \quad  x_i \in [0,1] \quad \forall i\in [n].
    \end{equation}
\fi

Observe that this linear program is very weak, and cannot give an
approximation better than $(d+1)/2$: even if the graph consists of
$n/(d+1)$ copies of $K_{d+1}$, the solution $x_i=1/2$ for each $i$ is a
feasible one.

In the standard SDP relaxation, there is a special unit vector $v_0$ (intended to indicate $1$) and a vector $v_i$ for each vertex $i$. The vector $v_i$ is intended to be $v_0$ if $i$ lies in the independent set and be $\bf{0}$ otherwise. This gives the following relaxation:
\ifstoc
  \begin{align}
  \label{sdp}
  \max &\sum_i v_i \cdot v_0  \notag\\
      \textrm{s.t.~~~~}
      v_0 \cdot v_0 &=1 \notag \\ 
      v_0 \cdot v_i &=v_i \cdot v_i && \forall i\in [n] \notag \\ 
      v_i\cdot v_j  &=0 && \forall (i,j)\in E.
  \end{align}
\else
  \begin{equation}
  \label{sdp}
  \max \sum_i v_i \cdot v_0,  \quad \textrm{s.t.}\quad  v_0 \cdot v_0 =1, \quad v_0 \cdot v_i=v_i \cdot v_i \quad \forall i\in [n], \quad \textrm{and}  \quad   v_i\cdot v_j =0 \quad \forall (i,j)\in E.
  \end{equation}
\fi

Let $Y$ denote the $(n+1) \times (n+1)$ Gram matrix  with entries $y_{ij}= v_i \cdot v_j$, for $i,j\in\{0,\ldots,n\}$. Then we have the equivalent relaxation
\ifstoc 
  \begin{align}
  \label{gram}
  \max \sum_i y_{0i}, \notag \\ 
  \textrm{s.t.~~~~}
  y_{00} &= 1, \notag \\ 
  y_{0i} &= y_{ii} &&  \forall i\in [n], \notag \\
  y_{ij} &= 0  && \forall (i,j) \in E \notag \\ 
  Y &\succeq 0.
  \end{align}
\else 
  \begin{equation}
  \label{gram}
  \max \sum_i y_{0i}, \quad \textrm{s.t.}\quad y_{00} = 1, \quad  y_{0i} = y_{ii}\quad \forall i\in [n], \quad  
  y_{ij} = 0 \quad \forall  (i,j) \in E \quad \textrm{and} \quad Y \succeq 0.
  \end{equation}
\fi

The above SDP which is equivalent to 
the well-known $\vartheta$-function of Lov\'{a}sz \cite{Laurent-notes}(Lemma 3.4.4), satisfies $\alpha(G)
\leq \vartheta(G) \leq \cc(G)$.  The $O(d \frac{\log \log d}{\log d})$
approximations due to \cite{AlonK98,Halperin02,Hall99} are all based on
SDPs. 

We will use the following important result due to Halperin
\cite{Halperin02} about the performance of the SDP. The form below
differs slightly from the one in \cite{Halperin02} as he works with a
$\{-1,1\}$ formulation. A proof for the form below can be found in
\cite[Theorem~3.1]{Ban15}. 
\begin{theorem} [Halperin \cite{Halperin02}, Lemma 5.2]
\label{th:halp}  Let $\eta \in [0,\frac{1}{2}]$ be a parameter and let 
$Z$ be the collection of vectors $v_i$ satisfying $\|v_i\|^2 \geq \eta$ in the SDP solution. Then there is an algorithm that returns an independent set of 
size $\Omega \left( \frac{d^{2\eta}}{d \sqrt{\ln d}}   |Z|\right)$.
\end{theorem}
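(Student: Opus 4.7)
The plan is to use a Gaussian projection rounding along the lines of Karger--Motwani--Sudan, but exploiting the extra structure coming from $\|v_i\|^2 \geq \eta$ to improve on the standard $\Omega(1/(d\sqrt{\ln d}))$ per-vertex guarantee by the factor $d^{2\eta}$. The starting point is to \emph{strip off} the component of each $v_i$ along $v_0$: for $i \in Z$, write $v_i = a_i\, v_0 + b_i\, w_i$ with $a_i = \|v_i\|^2$, $b_i = \sqrt{a_i(1-a_i)}$, and $w_i \perp v_0$ a unit vector. For an edge $(i,j)$ with $i,j \in Z$, expanding $v_i \cdot v_j = 0$ yields
\[
    w_i \cdot w_j \;=\; -\,\sqrt{\tfrac{a_i\, a_j}{(1-a_i)(1-a_j)}} \;\leq\; -\,\tfrac{\eta}{1-\eta},
\]
since $a/(1-a)$ is increasing on $[0,1)$ and $a_i,a_j \geq \eta$. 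So edges within $Z$ correspond to unit vectors with a guaranteed amount of \emph{negative} correlation, which is exactly what we will exploit.

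Next I would apply the usual hyperplane-style rounding: draw $r \sim \mathcal{N}(\mathbf{0}, I)$, and form the candidate set
\[
    S \;=\; \{\, i \in Z : w_i \cdot r \geq t \,\}
\]
for a threshold $t$ to be chosen. For a single $i \in Z$, $X_i := w_i \cdot r$ is a standard normal, so $\Pr[i \in S] = \bar\Phi(t) \geq \tfrac{c}{t}\,e^{-t^2/2}$. For an edge $(i,j) \subseteq Z$, the pair $(X_i, X_j)$ is jointly Gaussian with correlation $\rho \leq -\eta/(1-\eta)$; by Slepian-type monotonicity in $\rho$, the joint tail is maximized at $\rho = -\eta/(1-\eta)$, and the classical bivariate-normal tail asymptotic gives
\[
    \Pr[i,j \in S] \;\lesssim\; \frac{1}{t^{2}}\, \exp\!\left(-\frac{t^{2}}{1+\rho}\right),
\]
with $1+\rho = (1-2\eta)/(1-\eta)$. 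Balancing $d \cdot \Pr[i,j \in S] \approx \Pr[i \in S]$ and solving $t^{2}\bigl[\tfrac{1}{1+\rho} - \tfrac{1}{2}\bigr] = \ln d$ picks out the sweet spot $t^{2} = 2(1-2\eta)\ln d$. At this threshold, $\Pr[i \in S] = \Omega\!\left( d^{2\eta - 1}/\sqrt{\ln d}\right) = \Omega\!\left( d^{2\eta}/(d\sqrt{\ln d})\right)$, while the expected number of intra-$S$ edges is at most $(d|Z|/2)\cdot \Pr[i,j \in S] = O(|S|)$. Deleting one endpoint from each remaining edge (greedily, or via a matching argument) then yields an independent set of the advertised size $\Omega\!\left(\tfrac{d^{2\eta}}{d \sqrt{\ln d}}\, |Z|\right)$.

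Two technical issues will need to be handled carefully, and I expect the first to be the main source of work. First, the per-vertex bound $\|v_i\|^2 \geq \eta$ is only one-sided; different vertices may have very different values of $a_i$, which mildly complicates the joint-tail bound. I would resolve this by dyadically bucketing $Z$ according to $a_i \in [2^{-k-1}, 2^{-k}]$ at the cost of an $O(\log n)$ factor (absorbed into the $\widetilde{O}$), or more cleanly by truncating to vertices with $a_i \in [\eta, 2\eta]$ and observing that Slepian monotonicity still gives $\rho \leq -\eta/(1-\eta)$ in that bucket. Second, one needs a quantitative bivariate-normal tail bound valid for the relevant range of $t = \Theta(\sqrt{\ln d})$, rather than just the $t \to \infty$ asymptotic; a direct integration (or the explicit bound in Halperin's paper) suffices, and the $\sqrt{\ln d}$ factor in the denominator of the final guarantee is exactly what one loses from the $1/t$ prefactor of the univariate tail. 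Finally, the $\eta = 1/2$ boundary case degenerates ($t = 0$) and can be handled by taking a half-space through the origin and removing edges, with trivial bounds. Combining all of this yields the algorithmic guarantee of Theorem~\ref{th:halp}.
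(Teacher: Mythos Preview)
Your proposal is correct and is precisely the standard Karger--Motwani--Sudan / Halperin argument that the paper defers to (the paper does not reprove Theorem~\ref{th:halp}; it cites Halperin and points to \cite[Theorem~3.1]{Ban15} for this formulation). The decomposition $v_i = a_i v_0 + b_i w_i$, the edge-correlation bound $w_i\cdot w_j \le -\eta/(1-\eta)$, the Gaussian thresholding at $t^2 = 2(1-2\eta)\ln d$, and the clean-up step are exactly how that proof goes.

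One remark: your ``first technical issue'' is a non-issue, and you should not resolve it by bucketing. The inequality $w_i\cdot w_j \le -\eta/(1-\eta)$ already holds for \emph{every} edge in $Z$ regardless of the individual $a_i$'s, and Slepian's lemma then gives a uniform upper bound on the joint tail; no stratification by $a_i$ is needed. (In particular, the parenthetical about an $O(\log n)$ loss ``absorbed into the $\widetilde{O}$'' is both unnecessary and inadmissible here, since the theorem is stated with a clean $\Omega(\cdot)$.) Your alternative observation that Slepian monotonicity suffices is the right one---just lead with it.
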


Note that if $\eta = c \log \log d/ \log d$, then for $c\leq 1/4$
Theorem~\ref{th:halp} does not return any non-trivial independent
set. On the other hand, for $c \geq 1/4 $ the size of the independent
set returned rises exponentially fast with $c$.

For more details on SDPs, and the Lov\'{a}sz $\vartheta$-function, we
refer the reader to \cite{GLS88,MatG12}. 

\medskip\noindent{\bf Lower Bounds on the Independence Number.} 
As SDPs can handle cliques, looking at $\vartheta(G)$ naturally leads to
Ramsey theoretic considerations.  
In particular, if $\vartheta(G)$ is small then the trivial $n/(d+1)$
solution already gives a good approximation. Otherwise, if
$\vartheta(G)$ is large, then this essentially means that there are no
large cliques and one must argue that a large independent set exists
(and can be found efficiently). 

For bounded degree graphs, a well-known result of this type is that
$\alpha(G) = \Omega(n \frac{\log d}{d})$ for triangle-free graphs
\cite{AKS80,Shearer83} (i.e.~if there are no cliques of size $3$). A
particularly elegant proof (based on an idea due to Shearer
\cite{Shearer95}) is in \cite{AS}. Moreover this bound is tight, and
simple probabilistic constructions show that this bound cannot be
improved even for graphs with large girth.  


For the case of $K_r$-free graphs with $r \geq 4$, the situation is less
clear.  Ajtai et al.~\cite{AEKS81} showed that $K_r$-free graphs have
$\alpha(G)=\Omega(n (\log (\log d/r))/d)$, which implies that $\alpha(G)
= \Omega(n\log \log d/d)$ for $r\ll \log d$.  This result was the basis
of the $O(d/\log \log d)$ approximation due to
\cite{HR94}. Shearer~\cite{Shearer95} improved this result substantially
and showed that $\alpha(G) \geq \Omega(\frac1r \frac{n}{d} \frac{\log d}{\log \log d}) $
for $K_r$-free graphs. His result is based on an elegant entropy based
approach that has subsequently found many applications. However, it is
not known how to make this method algorithmic.
Removing the $\log \log d$ factor above is a major open question, even
for $r=4$. Also, note that his bound is trivial when $r \geq \frac{\log
  d}{\log \log d}$.

Interestingly, this result also implies another (non-algorithmic) proof
that the SDP has integrality gap $d \frac{\log \log d}{\log d}$.  In
particular, if the SDP objective is about $n/r$ this essentially implies
that the graph is $K_r$-free (as roughly each vertex contributes about 
$x_i=1/r$).  Thus, by Shearer's bound the integrality gap is
$(n/r)/\alpha(G) \leq d \frac{\log \log d}{\log d}$. It is interesting
to note that both Halperin's approach and Shearer's bound seem to get
stuck at the same point. 

Alon~\cite{Alon96} generalized the triangle-free result in a different
direction, also using the entropy method. He considered locally
$k$-colorable graphs, where the neighborhood of every vertex is
$k$-colorable and showed that $\alpha(G) = \Omega\left(\frac{n}{d}
  \frac{\log d}{\log k + 1} \right)$. Note that triangle-free graphs are
locally $1$-colorable.  This result also holds under weaker conditions,
and plays a key role in the results of \cite{Ban15} on bounding the
integrality gap of \SAplus relaxations.

\medskip\noindent{\bf Bounds on the chromatic number.}
Most of the above results generalize to the much more demanding setting
of list coloring.  All of them are based on ``nibble'' method, but
require increasingly sophisticated ideas. The intuition for why
$O(d/\log d)$ arises can be seen via a coupon-collector argument: if each vertex
in the neighborhood $N(v)$ chooses a color from $s$ colors independently
and u.a.r., they will use up all $s$ colors unless $d \leq O(s \log s)$,
or $s \geq \Omega(d/\log d)$. (Of course, the colors at the neighbors are
not chosen uniformly or independently.)  Kim showed that $\chi_\ell(G) =
O(d/\log d)$ for graphs with girth at least~$5$ \cite{Kim95}. His idea
was that for any $v$, and $u,w \in N(v)$, $N(u) \cap N(w) = \{v\}$
because of the girth, and hence the available colors at $u,w$ evolve
essentially independently, and hence conform to the intuition.

These ideas fail for triangle-free graphs (of girth $4$): we could have
a vertex $v$, with $u,w \in N(v)$, and $N(u) = N(w)$ (i.e., all their
neighbors are common). In this case the lists of available colors at $u$
and $w$ are far from independent: they would be \emph{completely
  identical}. Johansson~\cite{Joh-k3} had the crucial insight that this
positive correlation is not a problem, since there is no edge between
$u$ and $w$ (because of triangle-freeness!).  His clever proof
introduced the crucial notions of entropy and energy to capture and
control the positive correlation along edges in such $K_3$-free graphs.



If there are triangles, say if the graphs are only locally
$k$-colorable, then using these ideas na\"{\i}vely fails. A next key new
idea, also introduced by Johansson~\cite{Joh}, is to actually modify the
standard nibble process by introducing a {\em probability reshuffling}
step at each vertex depending on its local graph structure, which makes
it more complicated. In
Section~\ref{sec:jojo}, we give his result for locally-colorable graphs
in its entirety.


\medskip
\noindent {\bf Lift-and-project Hierarchies.} An excellent introduction
to hierarchies and their algorithmic uses can be found in
\cite{CT,L03}. 
Here, we only describe here the most basic facts that we need.  

The Sherali-Adams (\SA) hierarchy defines a hierarchy of linear programs
with increasingly tighter relaxations. At level $t$, there is a variable
$Y_S$ for each subset $S\subseteq [n]$ with $|S|\leq t+1$. Intuitively,
one views $Y_S$ as the probability that all the variables in $S$ are set
to $1$. Such a solution can be viewed as specifying a local distribution
over valid $\{0,1\}$-solutions for each set $S$ of size at most $t+1$.
A formal description of the $t$-round Sherali-Adams LP $\SA_{(t)}$ for
the independent set problem can be found in \cite[Lemma 1]{CT}.






For our purposes, we will also impose the PSD constraint on the
variables $y_{ij}$ at the first level (i.e., we add the constraints in
\eqref{gram} on $y_{ij}$ variables). We will call this the $t$-level \SAplus
formulation and denote it by $\SAplus_{(t)}$.  
To keep the notation consistent with the LP~(\ref{lp}), we will use $x_i$
to denote the marginals $y_{ii}$ on singleton vertices. 

\section{Integrality Gap}
\label{sec:intgap-theta}

In this section, we show Theorem~\ref{th:sdp}, that the integrality gap
of the standard Lov\'asz $\vartheta$-function based SDP relaxation is 
\[ \ts O\big(d \big(\frac{\log \log d}{\log d}\big)^{3/2}\big) =
\widetilde{O}\big(d/\log^{3/2} d\big). \] To show this we prove the
following result (which is Theorem~\ref{th2}, restated):

\begin{theorem}
  \label{th:2}
  Let $G$ be a $K_r$-free graph with maximum degree
  $d$. Then $$\alpha(G) = \Omega\left(\frac{n}{d} \max\left( \frac{\log
        d}{r \log \log d}, \left(\frac{\log d}{\log
          r}\right)^{1/2}\right)\right).$$ In particular, for $r =
  \log^{c}d$ with $c\geq 1$, we get $\alpha(G) = \Omega\big(\frac{n}{d}
    \big(\frac{\log d}{c\, \log \log d}\big)^{1/2}\big).$
\end{theorem}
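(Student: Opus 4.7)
The bound is a maximum of two terms. The first, $\tfrac{\log d}{r\log\log d}$, is exactly Shearer's classical Ramsey-type bound for $K_r$-free graphs~\cite{Shearer95}, which I would invoke as a black box. So the real task is the new term:
\[
\alpha(G)\;\ge\;\Omega\!\Bigl(\tfrac{n}{d}\sqrt{\tfrac{\log d}{\log r}}\Bigr).
\]

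\emph{Step 1: a local Ramsey bound.} For every vertex $v\in V$, the induced subgraph $G[N(v)]$ has at most $d$ vertices and is $K_{r-1}$-free (a $K_{r-1}$ inside $N(v)$ together with $v$ would yield a $K_r$ in $G$). Plugging into the standard Ramsey upper bound $R(r-1,k+1)\le \binom{r+k-2}{k}$ shows that whenever $k\le c\log d/\log r$ the right side is at most $d$, so $\alpha(G[N(v)])\ge \ell$ for $\ell:=\Theta(\log d/\log r)$. Thus every neighborhood hides an independent set of size $\ell$.

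\emph{Step 2: combining the local independent sets.} The obvious greedy use of Step~1 (pick a vertex $v$, extract an IS $I_v\subseteq N(v)$ of size $\ell$, delete $N[\{v\}\cup I_v]$, iterate) spends $\Theta(\ell d)$ vertices per round for only $\ell$ gain and therefore recovers just the trivial $n/d$. To break this barrier I would combine Step~1 with a sampling argument. Pass to a random subset $S\subseteq V$ where each vertex is retained with some probability $p$; then $G[S]$ is still $K_r$-free, its expected size is $pn$, and its maximum degree concentrates around $pd$. Inside $G[S]$ one can either run Shearer's bound directly or extract local IS's of the form $I_v\cap S$ for $v\in S$, and then merge them into a global independent set in $G$. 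Choosing $p$ so that the ``horizontal'' gain from sampling balances against the ``vertical'' gain from the local Ramsey structure is precisely what should force the geometric-mean factor $\sqrt{\log d/\log r}$ to appear, rather than the full $\ell$.

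\emph{Step 3: assembling the max.} Taking the better of the bound produced in Step~2 and Shearer's classical bound yields the maximum in the statement. The ``in particular'' clause for $r=\log^c d$ then follows by substituting $\log r=c\log\log d$.

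\emph{Main obstacle.} The technical heart is Step~2. The appearance of a square root in $\ell$ (and not $\ell$ itself, which is what a blunt use of Step~1 would give) strongly hints at a Cauchy--Schwarz or second-moment flavored calculation, or an entropy inequality that simultaneously budgets the sampling probability $p$ and the size of the local IS's. Getting these two ingredients to pinch at exactly the exponent $1/2$, so that neither the sampled Shearer bound nor the local Ramsey contribution dominates, is where I expect the bulk of the work (and the cleverness) to lie.
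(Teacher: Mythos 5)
There is a genuine gap: your Step~2 is the entire content of the theorem, and as described it does not work. If you sparsify globally --- keep each vertex with probability $p$, so that $G[S]$ has $\approx pn$ vertices and maximum degree $\approx pd$ --- then any independence bound you apply to $G[S]$ of the form $\alpha \geq \frac{\#\text{vertices}}{\text{degree}}\cdot f(\text{degree})$ yields $\frac{pn}{pd}f(pd) = \frac{n}{d}f(pd) \leq \frac{n}{d}f(d)$ for any increasing $f$, so global sampling can never beat what the same bound gives on $G$ itself; and merging the local independent sets $I_v\cap S$ across different vertices $v$ runs into exactly the cross-edge problem that makes your greedy attempt collapse to $n/d$. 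Your Step~1 is also not the right local ingredient: knowing each neighborhood contains \emph{one} independent set of size $\Theta(\log d/\log r)$ is too weak, and the paper never uses it in that form.

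What the paper actually does is stay inside the Shearer--Alon ``random independent set'' framework: let $W$ be uniform over \emph{all} independent sets of $G$, and lower-bound $\Exp\bigl[d\cdot|\{v\}\cap W| + |N(v)\cap W|\bigr]$ conditioned on any fixed configuration of $W$ outside $N[v]$. Writing $X$ for the unblocked part of $N(v)$ and $2^{\e x}$ for the number of independent sets of $G[X]$, either $2^{\e x}\leq \sqrt d$, in which case the event $v\in W$ alone contributes $\geq\sqrt d$, or else one uses Lemma~\ref{l:al} to say the \emph{average} independent set in $G[X]$ is large. The genuinely new ingredient is Theorem~\ref{bound:i}: a $K_r$-free graph on $x$ vertices has at least $2^{\Omega(\log^2 x/\log r)}$ independent sets --- proved by a sampling argument, but used \emph{locally and only as a counting device} (sample down to $\sqrt{x}$ vertices, find one Ramsey independent set there, and observe each fixed set survives with probability $p^{s/2}$, so there must be $p^{-\Omega(s)}$ of them). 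The square root in the theorem then falls out of $\max(a,b)\geq\sqrt{ab}$ applied in \eqref{eq:main} to the two branches $a=\Theta(\log d)$ (the $v\in W$ term) and $b=\Theta(\log^2 x/\log r)$ (the counting term), after dividing by the $\log(x+1)$ from Lemma~\ref{l:al}. So your instinct that a geometric-mean balancing is at play is right, but the quantity being balanced is the \emph{number} of independent sets in the neighborhood versus the probability that $v$ itself enters the random independent set --- not a sampling rate versus a local Ramsey set size --- and without that counting theorem and the hard-core conditioning argument, the $\sqrt{\log d/\log r}$ term is not established. (Your treatment of the first term via Shearer as a black box, and the ``in particular'' substitution, are fine.)
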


We need the following basic facts. The first follows from a simple
counting argument (see~\cite[Lemma~2.2]{Alon96} for a proof).

\begin{lemma}
  \label{l:al}
  Let $F$ be a family of $2^{\e x}$ distinct subsets of an
  $x$-element set $X$. Then the average size of a member of $F$ is at
  least $ \e x/ (10 \log(1+ 1/\e))$.
\end{lemma}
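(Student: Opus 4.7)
The plan is to argue by a standard averaging-plus-entropy estimate. Let $\mu$ denote the average size of a set in $F$. By Markov's inequality, at least half the sets in $F$, i.e., at least $2^{\e x - 1}$ sets, have size at most $k := 2\mu$. On the other hand, there are at most
\[
  \binom{x}{\le k} \;\le\; \bigl(ex/k\bigr)^{k}
\]
subsets of $X$ of size at most $k$ (for $k \le x/2$; the range $k>x/2$ is easy to handle separately, since then $\mu > x/4$ already dominates the target bound). Combining the two estimates and taking logarithms yields
\[
  \e x - 1 \;\le\; k \log_2(ex/k) \;=\; 2\mu \log_2\!\bigl(ex/(2\mu)\bigr).
\]
Setting $p := 2\mu/x \in (0,1]$, this becomes $p \log_2(e/p) \ge \e - 1/x$, and after absorbing the additive $1/x$ (the lemma is vacuous when $\e x < 2$), essentially $p \log_2(1/p) \gtrsim \e$.

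The next step is to invert the function $p \mapsto p \log_2(1/p)$ to get a lower bound on $p$, and hence on $\mu = px/2$. Concretely, if $p \log_2(1/p) \ge c\e$ for some absolute constant $c$, I plan to verify by direct substitution that this forces
\[
  p \;\ge\; \frac{\e}{C \log(1 + 1/\e)}
\]
for some absolute constant $C$: plugging this candidate value into $p \log_2(1/p)$ gives $\tfrac{\e}{C \log(1+1/\e)} \cdot \bigl(\log(C/\e) + \log\log(1+1/\e)\bigr)$, which is $\Omega(\e)$ uniformly in $\e \in (0,1]$. The reason the lemma uses $\log(1+1/\e)$ instead of $\log(1/\e)$ is precisely to handle $\e$ close to $1$, where $\log(1/\e)$ would vanish; in that regime the inequality reduces to the essentially trivial statement that $\mu = \Omega(x)$, which follows directly from the fact that $F$ occupies a constant fraction of the Boolean cube.

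The main obstacle is mostly bookkeeping: tracking the constants through the binomial-tail estimate, inverting the transcendental inequality $p\log(1/p)=\Theta(\e)$ cleanly to land on the $\log(1+1/\e)$ form, and verifying that the constant $10$ suffices in all regimes of $\e \in (0,1]$. Nothing in the argument is deep; the whole proof should fit in half a page once these constants are chased, and it mirrors the standard ``a family of large density in the cube must contain a set of linearly large size'' intuition.
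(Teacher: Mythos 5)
The paper does not prove this lemma at all --- it is quoted from Alon \cite{Alon96} (Lemma~2.2 there), whose proof is exactly the counting argument you outline, just run in the other direction: Alon counts the subsets of size below the explicit threshold $\e x/(2\log_2(1+1/\e))$, shows there are fewer than $2^{\e x -1}$ of them, concludes that at least half the members of $F$ are larger than the threshold, and hence the average is at least $\e x/(4\log_2(1+1/\e))$, comfortably inside the stated constant $10$. Your Markov-plus-binomial-tail version is the same idea, so the architecture is fine; the concern is that the two regimes you wave off are precisely where the constant-chasing lives, and your stated reasons for dismissing them are not right. First, the case $\e x<2$ is not vacuous: the lemma still asserts a positive lower bound there (it does hold, but by a direct check on families of at most three sets, not by vacuity). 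Second, for $\e$ bounded away from $0$ your fallback ``$F$ occupies a constant fraction of the Boolean cube'' is false unless $\e = 1-O(1/x)$; e.g.\ $\e=1/2$ gives $|F|=2^{x/2}$, an exponentially small fraction. This matters because the factor $2$ you lose through Markov, together with the absorption $\e-1/x\ge \e/2$, makes the inversion with the constant you need (namely $p\ge \e/(5\log(1+1/\e))$) fail outright once $\e$ exceeds a small constant: assuming $p<\e/(5t)$ with $t=\log_2(1+1/\e)$ only gives $p\log_2(e/p)<\e\cdot\frac{t+\log_2 t+\log_2(5e)}{5t}$, which is larger than $\e/2$ for $t\le 3$, so no contradiction. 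The fix is either to keep the un-absorbed inequality $\e x-1\le 2\mu\log_2(ex/2\mu)$ and handle the finitely many small-$\e x$ cases separately, or (cleaner) to follow Alon's original formulation, which avoids the Markov factor of $2$ and lands on constant $4$ with room to spare. As written, then, your proposal is the right approach but has a genuine unfinished step where the claimed shortcuts do not work.
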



\begin{fact}
  \label{f1}
  Let $G$ be a $K_r$-free graph on $x$ vertices, then 
  \begin{gather*}
    \alpha(G) \geq \max\left(\frac{x^{1/r}}{2}, \frac{\log x}{\log
        (2r)}\right).
  \end{gather*}
  Note that the latter bound is stronger when $r$ is large, i.e.,
  roughly when $r \geq \log x/\log \log x$.
\end{fact}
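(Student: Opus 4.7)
The plan is to prove the two lower bounds separately, both relying on the central observation that if $G$ is $K_r$-free, then for every vertex $v$, the induced subgraph $G[N(v)]$ is $K_{r-1}$-free (since $v$ together with any $K_{r-1}$ in $N(v)$ would create a $K_r$ in $G$).

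For the polynomial bound $\alpha(G) \geq x^{1/r}/2$, I would induct on $r$. The base case $r=2$ is trivial because a $K_2$-free graph is edgeless, giving $\alpha(G) = x \geq x^{1/2}/2$. For the inductive step, split on whether there is a high-degree vertex. If some $v$ has degree $d_v \geq x^{1-1/r}$, then $G[N(v)]$ is $K_{r-1}$-free on at least $x^{1-1/r}$ vertices, and the inductive hypothesis yields an independent set inside $N(v)$ of size at least $(x^{1-1/r})^{1/(r-1)}/2 = x^{1/r}/2$, using the identity $(1-1/r)/(r-1) = 1/r$. Otherwise every vertex has degree strictly less than $x^{1-1/r}$, in which case the na\"ive greedy algorithm (repeatedly picking a vertex and removing its closed neighborhood) produces an independent set of size at least $x/x^{1-1/r} = x^{1/r} \geq x^{1/r}/2$. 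Both cases deliver the desired bound.

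For the Ramsey-style bound $\alpha(G) \geq \log x / \log(2r)$, I would recognize this as the classical Erd\H{o}s--Szekeres estimate and prove it by simultaneous induction on $x$ and $r$. Concretely, pick any vertex $v$: on the one hand, $\alpha(G) \geq 1 + \alpha(G - N(v) - \{v\})$, and $G - N(v) - \{v\}$ is $K_r$-free on $x - 1 - d_v$ vertices; on the other hand, $\alpha(G) \geq \alpha(G[N(v)])$, and $G[N(v)]$ is $K_{r-1}$-free on $d_v$ vertices. Applying the inductive hypothesis to whichever branch is larger and unrolling the recursion recovers the Ramsey bound $R(r,s) \leq \binom{r+s-2}{r-1}$. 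Setting $s = \lfloor \log x / \log(2r) \rfloor$ and using the estimate $\binom{r+s-2}{r-1} \leq (2r)^{s-1} \leq x$ (which I would verify by the usual inductive argument $(2(r-1))^{s-1} + (2r)^{s-2} \leq (2r)^{s-1}$) forces $\alpha(G) \geq s$, as required.

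The main obstacle is the second bound, where closing the induction with the specific constant $2r$ inside the logarithm requires careful bookkeeping; in particular one must verify $(1-1/r)^{s-1} + 1/(2r) \leq 1$ in the Ramsey recursion, which is a routine but fiddly calculation. The first bound is essentially self-contained via the high-degree/low-degree dichotomy and provides the stronger estimate precisely in the regime $r \lesssim \log x/\log \log x$, matching the comment after the fact statement.
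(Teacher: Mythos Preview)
Your proposal is correct, but takes a somewhat different route from the paper. The paper simply invokes the classical Erd\H{o}s--Szekeres bound $R(s,t)\le\binom{s+t-2}{s-1}$ as a black box and then reads off both inequalities from two crude estimates on the binomial coefficient: $\binom{s+r-2}{s-1}\le (2s)^r$ gives the $x^{1/r}/2$ bound, and $\binom{s+r-2}{s-1}\le (2r)^s$ gives the $\log x/\log(2r)$ bound. Your argument is more self-contained: for the polynomial bound you run a direct high-degree/low-degree dichotomy induction on~$r$ (which is in effect a hands-on proof of the first Ramsey estimate), and for the logarithmic bound you essentially re-derive Erd\H{o}s--Szekeres via the Pascal recursion and then plug in. The paper's approach is more economical---one citation, two approximations---while yours avoids any external citation and makes the mechanism visible; either is perfectly adequate for this routine fact. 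One small bookkeeping point: with $s=\lfloor\log x/\log(2r)\rfloor$ you only get $\alpha(G)\ge\lfloor\log x/\log(2r)\rfloor$, but since you actually prove $R(s,r)\le(2r)^{s-1}$ you can take $s=\lfloor\log x/\log(2r)\rfloor+1$ and recover $\alpha(G)\ge\log x/\log(2r)$ exactly.
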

\begin{proof}
  Let $R(s,t)$ denote the off-diagonal $(s,t)$-Ramsey number, defined as
  the smallest number $n$ such that any graph on $n$ vertices contains
  either an independent set of size $s$ or a clique of size~$t$.

  It is well known that $R(s,t) \leq \binom{s+t-2}{s-1}$ \cite{ES35}.
  Approximating the binomial gives us the bounds $R(s,t) \leq (2s)^t$
  and $R(s,t) \leq (2t)^s$; the former is useful for $t \leq s$ and the
  latter for $s \leq t$. If we set $R(s,t) = x$ and $t=r$, the first
  bound gives $s \geq (1/2) x^{1/r}$ and the second bound gives $s \geq
  \log x/\log (2r)$.
\end{proof}


\vspace{2mm}

We will be interested in lower bounding the number of independent sets
$\I$ in a $K_r$-free graph.  Clearly, $\I \geq 2^{\alpha( G)}$ (consider
every subset of maximum independent set). However the following improved
estimate will play a key role in Theorem~\ref{th:2}. Roughly speaking it
says that if $\alpha(G)$ is small, in particular of size logarithmic in
$x$, then the independent sets are spread all over $G$, and hence their
number is close to $x^{\Omega(\alpha(G))}$.

\begin{theorem}
  \label{bound:i}
  Let $G$ be a $K_r$-free graph on $x$ vertices, and let $\I$ denote the
  number of independent sets in $G$. Then we have
  \begin{gather*}
    \log \I \geq \max\left( \frac{x^{1/r}}{2}, \frac{\log^2 x}{18 \log
        2r}\right).
  \end{gather*}
\end{theorem}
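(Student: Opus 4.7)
The plan is to prove the two bounds separately. The first bound $\log \I \geq x^{1/r}/2$ is immediate from the trivial inequality $\I \geq 2^{\alpha(G)}$ combined with the estimate $\alpha(G) \geq x^{1/r}/2$ from Fact~\ref{f1}.

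For the second bound I would use a sampling-and-double-counting argument that exploits the $K_r$-free structure on every induced subgraph, not just on $G$ itself. Fix a parameter $y$ (eventually $y \approx \sqrt{x}$), and observe that every $y$-subset $Y \subseteq V$ induces a $K_r$-free subgraph, so Fact~\ref{f1} yields an independent set of size $k := \lfloor \log y/\log 2r \rfloor$ inside $Y$. Let $N_k$ denote the number of independent sets of size exactly $k$ in $G$. Counting pairs $(Y,I)$ with $|Y|=y$, $|I|=k$, $I \subseteq Y$, and $I$ independent, each $y$-subset contributes at least one such pair while each independent $k$-set lies in exactly $\binom{x-k}{y-k}$ such subsets, so
\[
N_k \binom{x-k}{y-k} \;\geq\; \binom{x}{y}, \qquad \text{i.e.,} \qquad N_k \;\geq\; \frac{\binom{x}{k}}{\binom{y}{k}}.
\]

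The ratio on the right is $\prod_{i=0}^{k-1}(x-i)/(y-i)$, and since $(x-i)/(y-i) \geq x/y$ whenever $x \geq y$, it is at least $(x/y)^k$. Using $\I \geq N_k$ and taking logarithms gives $\log \I \geq k\log(x/y)$. Choosing $y = \lceil\sqrt{x}\rceil$ yields $k \approx \log x/(2\log 2r)$ and $\log(x/y) \approx (\log x)/2$, so the product is $\approx \log^2 x/(4\log 2r)$, comfortably beating the target $\log^2 x/(18\log 2r)$.

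The only real obstacle is the boundary regime where $\sqrt{x} < 2r$, since then $k$ can be zero and the sampling step becomes vacuous. But in that regime $\log 2r \gtrsim (\log x)/2$, so the target bound $\log^2 x/(18\log 2r)$ is itself at most $O(\log x)$, and the trivial observation $\I \geq x+1$ (empty set and singletons) already suffices. The loose constant $18$, compared to the sharper $4$ that the main argument actually delivers, is exactly what absorbs these boundary cases together with the floor/ceiling losses.
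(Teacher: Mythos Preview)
The proposal is correct and follows essentially the same strategy as the paper: apply the Ramsey bound (Fact~\ref{f1}) to induced subgraphs on roughly $\sqrt{x}$ vertices to find independent sets of size $\approx \log x/(2\log 2r)$, then argue there must be many such sets in $G$. The only difference is cosmetic---the paper phrases the second step probabilistically (sample each vertex with probability $2/\sqrt{x}$ and bound survival probabilities), whereas you use the equivalent deterministic double-counting over all $\lceil\sqrt{x}\rceil$-subsets; the parameter choice, the resulting bound $\approx \log^2 x/(4\log 2r)$, and the handling of the boundary regime all match.
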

\begin{proof}
  The first bound follows trivially from Fact~\ref{f1}, and hence we
  focus on the second bound. Also, assume $r \geq 3$ and $x \geq 64$
  else the second bound is trivial.

  Define $s := \log x/\log (2r)$.  Let $G'$ be the graph obtained by
  sampling each vertex of $G$ independently with probability
  $p:=2/x^{1/2}$. The expected number of vertices in $G'$ is $px =
  2x^{1/2}$. Let $\G$ denote the good event that $G'$ has at least
  $x^{1/2}$ vertices. Clearly, $Pr[\G] \geq 1/2$ (in fact it is
  exponentially close to $1$).  Since the graph $G'$ is also $K_r$-free,
  conditioned on the event $\G$, it has an independent set of size at
  least $\log (x^{1/2})/\log (2r) = s/2$. Thus the expected number of
  independent sets of size $s/2$ in $G'$ is at least $1/2$.

  Now consider some independent set $Y$ of size $s/2$ in $G$. The
  probability that $Y$ survives in $G'$ is exactly $p^{s/2}$.  As the
  expected number of independent sets of size $s/2$ in $G'$ is at least
  $1/2$, it follows that $G$ must contain at least $(1/2) (1/p^{s/2}) $
  independent sets of $s/2$. This gives us that
  \begin{gather*}
    \log \I \geq \frac{s}{2} \log \left(\frac{1}{p}\right) - 1 \geq
    \frac{s}{2} \log x^{1/2} - \frac{s}{2} -1 \geq \frac{s}{18} \log x,
  \end{gather*}
  where the last inequality assumes that $x$ is large enough. 
\end{proof}

We are now ready to prove Theorem~\ref{th:2}.

\begin{proof}
  We can assume that $d \geq 16$, else the claim is trivial.  Our
  arguments follow the probabilistic approach
  of~\cite{Shearer95,Alon96}. Let $W$ be a random independent set of
  vertices in $G$, chosen uniformly among all independent sets in $G$.
  For each vertex $v$, let $X_v$ be a random variable defined as $X_v =
  d |{v} \cap W| + |N(v)\cap W|$.

  Observe that $|W|$ can be written as $\sum_v |{v} \cap W| $; moreover,
  it satisfies $|W| \geq (1/d) \sum_v |N(v)\cap W|$, since a vertex in
  $W$ can be in at most $d$ sets $N(v)$. Hence we have that 
  \begin{gather*}
    |W| \geq \frac{1}{2d} \sum_v X_v.
  \end{gather*}
  Let $\gamma = \max\big( \frac{\log d}{r \log \log d}, \big(\frac{\log d}{\log
    r}\big)^{1/2}\big)$ denote the improvement factor in
  Theorem~\ref{th:2} over the trivial bound of $n/d$.  Thus to show that
  $\alpha(G)$ is large, it suffices to show that
  \begin{equation}
    \label{eq13}
    \Exp[X_v] \geq c \gamma
  \end{equation}
  for each vertex $v$ and some fixed constant $c$.

  In fact, we show that \eqref{eq13} holds for every conditioning of the
  choice of the independent set in $V -(N(v) \cup \{v\})$. In
  particular, let $H$ denote the subgraph of $G$ induced on $V -(N(v)
  \cup \{v\})$. For each possible independent set $S$ in $H$, we will
  show that
  \begin{gather*}
    \Exp[X_v \mid W \cap V(H) =S] \geq c \gamma.
  \end{gather*}
  Fix a choice of $S$. Let $X$ denote the non-neighbors of $S$ in
  $N(v)$, and let $x = |X|$. Let $\e$ be such that $2^{\e
    x}$ denotes the number of independent sets in the induced subgraph
  $G[X]$.  Now, conditioning on the intersection $W \cap V (H) = S$,
  there are precisely $2^{\e x} + 1$ possibilities for W: one in
  which $W = S \cup \{v\}$, and $2^{\e x}$ possibilities in which
  $v \notin W$ and $W$ is the union of $S$ with an independent set in
  $G[X]$.

  By Lemma~\ref{l:al}, the average size of an independent set in $X$ is
  at least $\frac{\e x}{10 \log 1/\e + 1}$ and thus we have
  that
  \begin{equation}
    \label{eq:al1} 
    \Exp[X_v \mid W \cap V(H) =S]
    \geq    d \frac{1}{2^{\e x} +1 }  + \frac{\e x}{10 \log
      (1/\e +1) } \frac{2^{\e x}} {2^{\e x} + 1}  
  \end{equation}
  Now, if $2^{\e x} +1 \leq \sqrt{d}$, then the first term is at
  least $\sqrt{d}$, and we've shown~\eqref{eq13} with room to spare.  So
  we can assume that $\e x \geq (1/2) \log d$. Moreover, by
  Theorem~\ref{bound:i},
  \begin{gather*}
    \e x \geq \max \bigg( \frac{x^{1/r}}{2}, \frac{\log^2 x}{18 \log (2r)} \bigg)
  \end{gather*}
  and hence the right hand side in (\ref{eq:al1}) is at least 
  \begin{gather}
    \frac{1}{40 \log (1/\e +1) } \max \left(\frac{\log d}{2},
      \frac{x^{1/r}}{2}, \frac{\log^2 x}{18 \log 2r} \right)  \notag \\
    \geq \frac{1}{40 \log (x +1) } \max \left(\frac{\log d}{2},
      \frac{x^{1/r}}{2}, \frac{\log^2 x}{18 \log 2r }  \right), 
    \label{eq:main}
  \end{gather}
  where the inequality uses $\e \geq 1/x$ (since $\e x
  \geq (1/2) \log d \geq 1$).

  First, let's consider the first two expressions in~(\ref{eq:main}).
  If $x \geq \log^r d$, then as $x^{1/r}/\log (x+1)$ is increasing in
  $x$, it 
  follows that the right hand side of \eqref{eq:main} is at least
  \begin{gather*}
    \frac { x^{1/r} }{80 \log (x+1) } = \Omega\left(\frac{\log d}{r \log
        \log d} \right).
  \end{gather*}
  On the other hand if $x \leq \log^r d$, then we have that the right
  hand side is again at least
  \begin{gather*}
    \frac{1}{40 \log (x+1)}  \frac{\log d}{2} = \Omega\left(\frac{\log d}{r \log
        \log d} \right).
  \end{gather*}
  Now, consider the first and third expressions in in~(\ref{eq:main}). 
  Using the fact that $\max(a,b) \geq \sqrt{ab}$ with $a = (\log d)/2$
  and $b= (\log^2 x)/(18 \log 2r)$, we get that (\ref{eq:main}) is at
  least $\Omega \left(\frac{\log d}{\log r}\right)^{1/2}$. Hence, for
  every value of $x$ we get that~(\ref{eq:main}) is at least
  $\Omega(\gamma)$ as desired in~(\ref{eq13}); this completes the proof of Theorem~\ref{th:2}.
\end{proof}
  


  

We can now show the main result of this section. 

\begin{theorem} 
  The standard SDP for independent set has an integrality gap of
  \[ O\left(d \left(\frac{\log \log d}{\log d}\right)^{3/2}\right). \]
\end{theorem}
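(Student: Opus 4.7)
The plan is to translate a large SDP value $\mathrm{SDP}(G) = \beta n$ into an independent set of size $\widetilde{\Omega}(\beta n \log^{3/2} d /d)$, which immediately yields the claimed integrality gap. First, if $\beta = \widetilde{O}(1/\log^{3/2} d)$ then the trivial greedy solution $\alpha(G) \ge n/(d+1)$ already meets the bound, so I may assume $\beta$ is above this threshold. The strategy is then to threshold the vertices by their SDP norm $\|v_i\|^2$ and marry Halperin's SDP rounding (Theorem~\ref{th:halp}) with the new Ramsey bound (Theorem~\ref{th:2}). Set $\eta := (\log\log d)/\log d$ (the natural Halperin threshold) and $\eta_1 := \beta/4$ (a smaller threshold, chosen so that vertices with norm below $\eta_1$ contribute at most $\beta n/4$ to the SDP objective), and partition $V = L' \cup M \cup H$ with $L' := \{i: \|v_i\|^2 < \eta_1\}$, $M := \{i: \eta_1 \le \|v_i\|^2 < \eta\}$, and $H := \{i: \|v_i\|^2 \ge \eta\}$. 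Since the total $L'$-contribution to the SDP is at most $n\eta_1 = \beta n/4$, one of $M, H$ must carry at least $3\beta n/8$ of the SDP mass; I handle the two cases separately.

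If $H$ dominates, then $|H| \ge 3\beta n/8$ since $\|v_i\|^2 \le 1$, and Theorem~\ref{th:halp} with parameter $\eta$ immediately delivers an independent set of size $\Omega(d^{2\eta}/(d\sqrt{\log d})) \cdot |H|$. Substituting $d^{2\eta} = \log^2 d$ produces an independent set of size $\widetilde{\Omega}(\beta n \log^{3/2} d /d)$, as needed.

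If instead $M$ dominates, I invoke the standard $\vartheta$-SDP clique inequality: every clique $K$ in $G$ satisfies $\sum_{i\in K}\|v_i\|^2 \le 1$, since the $v_i$ are pairwise orthogonal across $K$ and Cauchy--Schwarz against $v_0$ then forces $\sum_{i\in K}\|v_i\|^2 \le (\sum_{i\in K}\|v_i\|^2)^{1/2}$. Because every vertex of $M$ has norm at least $\eta_1$, this means $G[M]$ is $K_r$-free for $r = \lfloor 1/\eta_1\rfloor + 1 = \widetilde{O}(\log^{3/2} d)$, so $\log r = O(\log\log d)$. Since $|M| \ge (3\beta n/8)/\eta = \widetilde{\Omega}(\beta n \log d /\log\log d)$ and $G[M]$ has maximum degree at most $d$, the second maximand of Theorem~\ref{th:2} applied to $G[M]$ (which dominates for this range of $r$) yields $\alpha(G[M]) = \Omega(|M|/d \cdot (\log d/\log r)^{1/2}) = \widetilde{\Omega}(\beta n \log^{3/2} d /d)$, matching the target bound.

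The main obstacle is tuning the two thresholds so both branches land at the same final bound. The $3/2$ exponent on the Halperin side comes from picking $\eta$ so that $d^{2\eta}/\sqrt{\log d} = \log^{3/2} d$; on the Ramsey side, $\eta_1$ is chosen so that $1/\eta_1 = \widetilde{O}(\log^{3/2} d)$, making Theorem~\ref{th:2} contribute $\sqrt{\log d/\log r} = \widetilde{\Theta}(\sqrt{\log d})$ which, combined with the $1/\eta = \widetilde{\Theta}(\log d/\log\log d)$ amplification of $|M|$, again gives $\log^{3/2} d$. A minor subtlety is ensuring $\eta_1 < \eta$ so that $M$ is nonempty; this holds throughout the interesting regime $\beta \lesssim \log\log d/\log d$, and for larger $\beta$ the Halperin branch alone (with $H$ carrying almost all of the SDP mass) is already strictly stronger than needed.
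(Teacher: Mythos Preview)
Your proposal is correct and follows essentially the same approach as the paper: threshold vertices by their SDP value at $\eta \approx (\log\log d)/\log d$ and at $\approx \beta$, apply Halperin's rounding (Theorem~\ref{th:halp}) when many vertices lie above $\eta$, and otherwise use the clique inequality $\sum_{i\in K}\|v_i\|^2\le 1$ to deduce $K_r$-freeness with $r=O(\log^{3/2} d)$ on the middle band before invoking Theorem~\ref{th:2}. The only cosmetic differences are that the paper deletes the low and high bands sequentially (with the Halperin trigger being $|H|\ge n/\log^2 d$ rather than an SDP-mass condition) while you do a single three-way split, and the paper's $\eta$ carries an extra factor of $2$; neither affects the argument.
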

\begin{proof}
  Given a graph $G$ on $n$ vertices, let $\beta \in [0,1]$ be such that
  the SDP on $G$ has objective value $\beta n$. If $\beta \leq
  2/\log^{3/2}d$, the na\"{\i}ve greedy algorithm already implies a
  $d/\log^{3/2}d$ approximation. Thus, we will assume that $\beta \geq
  2/\log^{3/2}d$.

  Let us delete all the vertices that contribute $x_i \leq \beta/2$ to
  the objective. The residual graph has objective value at least $\beta
  n - (\beta/2) n = \beta n/2$.

  Let $\eta = 2 \log \log d/ \log d$. If there are more than $n/\log^2 d$ vertices with $x_i \geq \eta$, 
  applying Theorem~\ref{th:halp} to the collection of these vertices
  already gives independent set of size at least
  \begin{gather*}
    \Omega\left( \frac{d^{2 \eta}}{d \sqrt{\ln d}} \cdot \frac{n}{\log^2
        d}\right) = \Omega\left(\frac{n \log^{3/2} d}{d}\right),
  \end{gather*}
  and hence a $O(d/\log^{3/2}d)$ approximation.
	
	Thus we can assume that fewer than $n/\log^2 d$ vertices have $x_i \geq \eta$. 
	As each vertex can contribute at most
  $1$ to the objective, the SDP objective on the residual graph obtained by deleting the vertices with $x_i \geq \eta$ is at
  least $\beta n/2 - n/(\log^2 d)$ which is at least $ \beta n/3$, since
  $\beta \geq 2/\log^{3/2} d$.

  So we have a feasible SDP solution on a subgraph $G'$ of
  $G$, where the objective is at least $\beta n/3$ (here $n$ is the
  number of vertices in $G$ and not $G'$) and each surviving vertex $i$
  has value $x_i$ in the range $[\beta/2, \eta]$. 

  As $x_i \leq \eta$ for each $i$, and the SDP objective
  is at least $\beta n/3$, the number of vertices $n'$ in $G'$ satisfies
  $n' \geq (\beta n/3)/\eta = \Omega(n\beta/\eta)$.  Moreover, as $x_i
  \geq \beta/2$ for each vertex $i\in G'$, and the SDP does not put more
  than one unit of probability mass on any clique, it follows that $G'$
  is $K_r$-free for $r=2/\beta = \log^{3/2} d$.  Applying
  Theorem~\ref{th:2} to $G'$ with parameter $r = \log^{3/2} d$, we obtain
  that $G'$ has an independent set of size
  \ifstoc
    \begin{gather*}
      \Omega\left(\frac{n'}{d} \sqrt{\frac{\log d}{\log r}}\right) =
      \Omega\left(\frac{n'}{d} \sqrt{\frac{\log d}{\log \log d}}\right)  \\
       = \Omega\left(\frac{n\,\beta}{d\,\eta} \sqrt{1/\eta} \right) =
      \Omega\left(\frac{\beta n}{d} \cdot \eta^{-3/2}\right).
    \end{gather*}
  \else 
    \begin{gather*}
      \Omega\left(\frac{n'}{d} \sqrt{\frac{\log d}{\log r}}\right) =
      \Omega\left(\frac{n'}{d} \sqrt{\frac{\log d}{\log \log d}}\right)  
       = \Omega\left(\frac{n\,\beta}{d\,\eta} \sqrt{1/\eta} \right) =
      \Omega\left(\frac{\beta n}{d} \cdot \eta^{-3/2}\right).
    \end{gather*}
  \fi
  
  The SDP objective for $G$ was $\beta n$, so the integrality gap is
  $O(d \eta^{3/2}) = O(d (\frac{\log \log d}{\log d})^{3/2}) $.
\end{proof}

\subsection{An upper bound}
\label{sec:lbd}

We give a simple construction that $\alpha(G) \leq \frac{n}{d}
\frac{\log d}{\log r}$ for $r \geq \log d$.  We use the standard lower
bound $R(s,t) = \Omega(t^{s/2})$ for off-diagonal Ramsey numbers for $t
\geq s$. Setting $t=r$ with $r \geq \log d$, it follows that there exist
$K_r$-free graphs $H$ on $d$ vertices such that $\alpha(H) = O(\log
d/\log r)$. Now set $G$ to be $n/d$ disjoint copies of~$H$.

\ifstoc
\section{Lift-and-Project Algorithms}
\else
\section{An Algorithm using Lift-and-Project}
\fi
\label{sec:algo}

In this section, we briefly illustrate how to make Bansal's argument
about the integrality gap of the lifted SDP~\cite{Ban15} algorithmic.
Consider the $\SAplus_{(d)}$ relaxation on $G$, and let $\sdp(G)$ denote its
value.  We can assume that 
\begin{gather}
  \sdp(G) \geq n/\log^2 d, \label{eq:sdp-large}
\end{gather}
otherwise the naive
algorithm already gives a $d/\log^2 d$ approximation.

Let $\eta = 3 \log \log d/\log d$, and $Z$ denote the set of vertices
$i$ with $x_i \geq \eta$.  We can assume that $|Z| \leq n/(4 \log^2 d)$,
otherwise applying Theorem~\ref{th:halp} gives an independent set of
size $ \Omega(|Z| \cdot d^{2\eta}/(d \sqrt{\log d})) = \Omega(n\log^2
d/d)$. Applying Theorem~\ref{th:halp} is fine, since our solution 
belongs to \SAplus and hence is a valid SDP solution. Hence,
\begin{gather*}
  \sdp(G) \leq |Z|\cdot 1 + (n - |Z|)\cdot \eta \leq (n/(4 \log^2 d)) \cdot 1 + n \cdot \eta = 2 \eta n.
\end{gather*}

Let $V'$ denote the set of vertices $i$ with $x_i \in [1/(4\log^2 d), \eta]$.
\begin{claim}
  \label{large:v'}
  $|V'| \geq \sdp(G)/(2 \eta).$
\end{claim}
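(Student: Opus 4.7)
The plan is to bound $\sdp(G)$ from above by splitting the sum $\sum_i x_i$ into three pieces according to the value of $x_i$: the ``large'' contribution from $Z$ (vertices with $x_i \geq \eta$), the ``middle'' contribution from $V'$ (vertices with $x_i \in [1/(4\log^2 d), \eta]$), and the ``small'' contribution from the remaining set $V'' := V \setminus (Z \cup V')$ (vertices with $x_i < 1/(4\log^2 d)$). Each contribution is then controlled by the most trivial possible bound on $x_i$ in that range.

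More concretely, I will argue:
\begin{gather*}
\sdp(G) \;=\; \sum_{i \in Z} x_i + \sum_{i \in V'} x_i + \sum_{i \in V''} x_i
\;\leq\; |Z| + \eta\,|V'| + \frac{n}{4\log^2 d},
\end{gather*}
using $x_i \leq 1$ for $i \in Z$, $x_i \leq \eta$ for $i \in V'$, and $x_i < 1/(4\log^2 d)$ for $i \in V''$ together with $|V''| \leq n$. Plugging in the assumed bound $|Z| \leq n/(4\log^2 d)$ established just before the claim gives
\begin{gather*}
\sdp(G) \;\leq\; \frac{n}{2\log^2 d} + \eta\,|V'|.
\end{gather*}

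Finally, I will invoke the standing assumption \eqref{eq:sdp-large}, namely $\sdp(G) \geq n/\log^2 d$, which lets me absorb the additive term: $n/(2\log^2 d) \leq \sdp(G)/2$, so rearranging yields $\sdp(G)/2 \leq \eta\,|V'|$, i.e., $|V'| \geq \sdp(G)/(2\eta)$, as claimed. There is no real obstacle here; the only thing to be careful about is correctly using the previously established bound $|Z| \leq n/(4\log^2 d)$ and the hypothesis \eqref{eq:sdp-large} in the right direction, so that the ``tail'' mass from $V''$ and the ``head'' mass from $Z$ together contribute at most half of $\sdp(G)$, leaving the other half to be borne by $V'$.
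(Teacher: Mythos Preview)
Your proof is correct and follows essentially the same approach as the paper: split $\sdp(G)$ into the contributions from $Z$, $V'$, and the small-value vertices, bound the first and last by $n/(4\log^2 d)$ each, use~\eqref{eq:sdp-large} to absorb these into $\sdp(G)/2$, and conclude. The only cosmetic difference is that the paper compares each of the two ``bad'' contributions separately to $\sdp(G)/4$, whereas you sum them first and compare to $\sdp(G)/2$; the arithmetic is identical.
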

\begin{proof}
  The total contribution to $\sdp(G)$ of vertices $i$ with $x_i \leq
  1/(4 \log^2 d)$ can be at most $n/(4 \log^2d)$, which
  by~(\ref{eq:sdp-large}) is at most $\sdp(G)/4$. Similarly, the
  contribution of vertices in $Z$ is at most $|Z|$, which is again at
  most $\sdp(G)/4$. Together this gives $\sdp(G') \geq \sdp(G)/2$.  As each
  vertex in $V'$ has $x_i \leq \eta$, the claim follows.
\end{proof}

\begin{lemma}
  The graph $G' = G[V']$ induced on $V'$ is locally $k$-colorable for
  $k=O(\log^3 d)$.
\end{lemma}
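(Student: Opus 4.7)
The plan is to extract a fractional coloring of each neighborhood in $G'$ directly from the Sherali--Adams variables, and then convert it to an integral one by standard randomized rounding. Fix any $v \in V'$ and let $S := N(v) \cup \{v\}$, so $|S| \leq d+1$. Because the relaxation is $\SAplus_{(d)}$, the level-$d$ Sherali--Adams variables on subsets of $S$ genuinely arise from a probability distribution $\mu_v$ over $\{0,1\}^S$ supported on independent sets of $G[S]$, whose singleton marginals are exactly the $x_u$. Restricting $\mu_v$ to $N(v) \cap V' = N_{G'}(v)$ then yields a distribution over independent sets of $H := G'[N_{G'}(v)]$ in which every vertex $u$ appears with probability $x_u \geq 1/(4\log^2 d)$. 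This is exactly a fractional $k$-coloring certificate for $H$ with $k = 4\log^2 d$.

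To promote this to an integral coloring, I would draw $t = C\,\log^2 d \cdot \log d$ independent samples $I_1, \ldots, I_t \sim \mu_v$ and use each $I_j$ as a color class (which is legitimate since $I_j$ is an independent set). A fixed $u \in V(H)$ fails to be covered with probability at most $(1 - 1/(4\log^2 d))^t$, and choosing $C$ to be a suitably large constant pushes this below $1/d$. Since $|V(H)| \leq d$, a union bound over the vertices of $H$ produces a proper coloring using $O(\log^3 d)$ colors, which is precisely what is needed for local $k$-colorability with $k = O(\log^3 d)$.

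The main conceptual step is the first one: observing that the $\SAplus_{(d)}$ local distribution on $N(v) \cup \{v\}$ immediately witnesses a fractional chromatic bound $\chi_f(G'[N_{G'}(v)]) \leq 4\log^2 d$. The passage from fractional to integral chromatic number is routine and accounts for exactly the extra $\log d$ factor in the final bound. I don't expect a serious obstacle; the argument amounts to a clean packaging of the Sherali--Adams marginal structure together with a textbook randomized-rounding lemma, and the real work of the algorithmic theorem happens afterwards when Johansson's theorem is applied to exploit this local colorability.
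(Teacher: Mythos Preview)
Your proposal is correct and follows essentially the same approach as the paper: extract from the level-$d$ Sherali--Adams local distribution on each neighborhood a fractional coloring with $4\log^2 d$ colors, then convert it to an integral $O(\log^3 d)$-coloring. The only cosmetic difference is that the paper phrases the second step as ``a set-covering argument'' rather than spelling out the randomized sampling and union bound as you do, but these are the same standard rounding.
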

\begin{proof}
  Consider the solution $\SAplus_{(d)}$ restricted to $G'$.  For a vertex $v
  \in V'$, let $N(v)$ denote its neighborhood in $G'$. As $|N(v)|\leq d$
  and $x_i \geq 1/(4 \log^2 d)$ for all $i \in N(v)$, the $\SAplus_{(d)}$
  solution defines a ``local distribution'' $\{X_S\}_{S \sse N(v)}$ over
  subsets of each neighborhood with the following properties:
  \begin{TwoLiners}
    \item[(i)] $X_S \geq 0$ and $\sum_{S \subset N(v)} X_S = 1$,  
    \item[(ii)] $X_S>0$ only if $S$ is independent in the subgraph
      induced on $N(v)$, and 
    \item[(iii)] for each vertex $i \in N(v)$, it holds that 
      \[x_i = \sum_{S \subseteq N(v): i\in S} x_S \geq 1/(4 \log^2 d).\]
  \end{TwoLiners}
  Scaling up the solution $X_S$ by $4 \log^2 d$ thus gives a valid
  fractional coloring of $N(v)$ using $4 \log^2 d$ colors, which by a
  set-covering argument implies that $\chi(N(v)) = O(\log^2 d\cdot \log N(v))
  = O(\log^3 d)$.
\end{proof}

Using Johansson's coloring algorithm for locally $k$-colorable graphs
(Theorem~\ref{thm:jo-local}) we can find an independent set of $G'$ with size
\[ \alg(G') = \Omega \left( \frac{|V'|}{d} \cdot \frac{\log d}{\log
    (k+1)} \right). \] Using $k = O(\log^3 d)$ and Claim~\ref{large:v'}
this implies an algorithm to find independent sets in degree~$d$ graphs, 
with an integrality gap of
\begin{gather*}
  \frac{\sdp(G)}{\alg(G)} \leq \frac{\sdp(G)}{\alg(G')} \leq
  O\left(\frac{d \eta \log (k+1)}{\log d}\right) =
  \widetilde{O}\left(\frac{d}{\log^2 d}\right).
\end{gather*}
Our algorithm only required a fractional coloring on the neighborhood of vertices.
Since they are at most $2^d$ independent sets in each neighborhood,
there are at most $n\cdot 2^d$ relevant variables in our SDP. Hence, we can compute 
the relevant fractional coloring in time $\poly(n)\cdot2^{O(d)}$.

\section{LP-based guarantees}
\label{sec:lp-based}

We prove Corollary \ref{cor:lp}.  Consider the standard LP~\eqref{lp}
strengthened by the clique inequalities $\sum_{i \in C} x_i \leq 1$ for
each clique $C$ with $|C|\leq \log d$.  As each clique lies in the
neighborhood of some vertex, the number of such cliques is at most $n
\cdot \binom{d}{\log d}$.  Let $\beta n$ denote the objective value of
this LP relaxation.  We assume that $\beta \geq 2/\log d$, otherwise the
na\"{\i}ve algorithm already gives a $d/\log d$ approximation.

Let $B_0$ denote the set of vertices with $x_i \leq 1/\log d = \beta/2$.
For $j=1,\ldots,k$, where $k=\log \log d$, let $B_j$ denote the set of
vertices with $x_i \in (2^{j-1}/\log d, 2^j/\log d]$.  Note that
$\sum_{j \geq 1} \sum_{i \in B_j} x_i = \beta n - \sum_{i \in B_0} x_i
\geq \beta n/2$, and thus there exists some index $j$ such that $\sum_{i
  \in B_j} x_i \geq \beta n/(2k)$.

Let $\gamma = 2^{j-1}/\log d$; for each $i \in B_j$, $x_i \in (\gamma,
2\gamma]$. Since $x_i > \gamma$ for each $i\in B_j$, the clique
constraints ensure that the graph induced on $B_j$ is $K_r$-free for $r=
1/\gamma$. Moreover, since $x_i \leq 2 \gamma$ for each $i \in B_j$,
$|B_j| \geq \frac{1}{2\gamma} \cdot \frac{\beta n}{2k}$. By Shearer's
result for $K_r$-free graphs we obtain
$$\alpha(B_j)  = \Omega\left( |B_j| \cdot \frac{\gamma \log d}{d \log \log d} \right)=  \Omega\left( \frac{\beta n \log d} {d (\log \log d)^2  }\right).$$
This implies the claim about the integrality gap.

A similar argument implies the constructive result. Let $\beta n$ denote
the value of the $\SA_{(d)}$ relaxation.  As before, we assume that
$\beta \geq 2/\log d$ and divide the vertices into $1+ \log \log d$
classes. Consider the class $B_j$ with $j\geq 1$ that contributes most
to the objective, and use the fact that the graph induced on $B_j$ is
locally $k$-colorable for $k = (\log d/2^{j-1} \cdot \log d) = O(\log^2
d)$. As in Section~\ref{sec:algo}, we can now use Johansson's coloring
algorithm Theorem~\ref{thm:jo-local} to find a large
independent set.


%
%

\subsection*{Acknowledgments} 
We thank Alan Frieze for sharing a copy of the manuscript of Johansson
with us. We thank Noga Alon, Tom Bohman, Alan Frieze, and Venkatesan
Guruswami for enlightening discussions.

\ifstoc
\bibliographystyle{abbrv}
\bibliography{deg-d}
\else
{\small
\bibliographystyle{alpha}
\bibliography{deg-d}
}
\fi

\short{\appendix}

\newcommand{\bee}{b}
\newcommand{\cee}{\hat{c}}
\newcommand{\pc}{p_c}
\newcommand{\pa}{p_a}

\newcommand{\phat}{\widehat{p}}
\newcommand{\pstar}{{p^\star}}
\newcommand{\crowd}{100 \ln \Delta}
\newcommand{\g}{\gamma}
\newcommand{\vg}{{(v,\gamma)}}
\newcommand{\ug}{{(u,\gamma)}}
\newcommand{\wg}{{(w,\gamma)}}
\newcommand{\whpD}{\ensuremath{\mathbf{whp}_\Delta}\xspace}
\newcommand{\Ew}{{E^w_{uv}(\gamma)}}

\newcommand{\one}[1]{\mathbf{1}_{(#1)}}
\newcommand{\nrg}{\xi}
\newcommand{\nrgy}[2]{\xi(#1, #2)}
\newcommand{\sprod}{\overline{\prod}}
\newcommand{\sProd}{\overline{\Prod}}
\newcommand{\morv}{{m.o.\ r.v.}\xspace}
\newcommand{\morvs}{{m.o.\ r.v.s}\xspace}
\newcommand{\khat}{\widehat{\kappa}}
\newcommand{\Mtilde}{\smash{\widetilde{M}}}

\newcommand{\B}[0]{{\ensuremath{\mathcal{B}}}\xspace}
\newcommand{\E}[0]{{\ensuremath{\mathcal{E}}}\xspace}
\newcommand{\F}[0]{{\ensuremath{\mathcal{F}}}\xspace}
\newcommand{\T}[0]{{\ensuremath{\mathcal{T}}}\xspace}
\newcommand{\C}[0]{{\ensuremath{\mathcal{C}}}\xspace}
\renewcommand{\P}[0]{{\ensuremath{\mathcal{P}}}\xspace}

\section{Johansson's Algorithm for Coloring Sparse Graphs}
\label{sec:jojo}

For completeness, \short{in the full version of the paper} we give proofs for two results of
Johansson~\cite{Joh} on coloring degree-$d$ graphs: one about graphs
where vertex neighborhoods can be colored using few colors
(``locally-colorable'' graphs), and another about $K_r$-free graphs.

\begin{theorem}
  \label{thm:jo-local}
  For any $r, \Delta$, there exists a randomized algorithm that, given a
  graph $G$ with maximum degree $\Delta$ such that the neighborhood of
  each vertex is $r$-colorable, outputs a proper coloring of $V(G)$
  using $O(\frac{\Delta }{\ln\Delta} \ln r)$ colors in expected
  $\poly(n 2^\Delta)$ time.
\end{theorem}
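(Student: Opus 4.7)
The plan is to adapt the R\"odl semi-random (``nibble'') method, following Johansson, with his crucial probability-reshuffling step keyed to the local structure. First I would fix $C = O(\Delta \ln r / \ln \Delta)$ colors and set up the invariant state: at each time step $t$, each vertex $v$ carries a list $L_t(v) \sse [C]$ of still-available colors together with a probability distribution $p_{v,t}(\cdot)$ supported on $L_t(v)$. Initially $L_0(v)=[C]$ and $p_{v,0}$ is uniform. A round proceeds as in the standard nibble: every uncolored vertex $v$ independently samples a tentative color from $p_{v,t}$, activates with some small ``keep'' probability $\beta$, and finalizes its color iff no neighbor $u$ also tentatively chose the same color; colors retained by neighbors are deleted from $L_t(v)$. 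The target quantities to track are the list sizes $\ell_v(t):=|L_t(v)|$ and, following Johansson, a per-vertex \emph{entropy} $H_v(t) = \sum_{c} p_{v,t}(c) \ln(1/p_{v,t}(c))$ (plus an \emph{energy} term summing $p_{v,t}(c)^2$ along edges). The goal is to maintain the invariants that $\ell_v(t)$ shrinks by a controlled multiplicative factor per round, while $H_v(t)$ stays close to $\ln \ell_v(t)$, so that the distributions $p_{v,t}$ remain nearly uniform and the process continues to make progress.

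The second step, which is the heart of the argument, is the \emph{reshuffling}. A na\"ive update that just re-normalizes $p_{v,t}$ after deleting colors retained by neighbors can be derailed by positive correlations: two non-adjacent neighbors $u,w\in N(v)$ with $N(u)\approx N(w)$ see essentially the same deletions, which concentrates probability mass onto the same few colors and makes clashes at $v$ persist. Johansson's idea is to exploit the hypothesis that $N(v)$ is $r$-colorable: fixing a proper $r$-coloring $\varphi_v$ of $G[N(v)]$, one reshuffles probabilities inside each color class of $\varphi_v$ (where the vertices are pairwise non-adjacent, so positive correlation does no harm) so as to push $p_{v,t}$ back toward uniform on $L_t(v)$. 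I would track how the entropy on each edge behaves under reshuffling and show that the ``energy along an edge'' decreases by a factor related to $1/r$ per nibble, so that after $T = O(\ln \Delta / \ln r)$ rounds the list sizes have shrunk to $O(\Delta/\ln\Delta)$ while the expected number of conflicts at each vertex is $o(\ell_v(T))$.

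Concentration would be established round-by-round using the two tools the authors mention: Azuma-type tail bounds for low-degree polynomials of independent random variables (Kim--Vu / Schudy--Sviridenko) applied to $\ell_v(t)$, $H_v(t)$ and the energy, and the Lov\'asz Local Lemma to stitch the per-vertex ``bad events'' (deviation from the deterministic trajectory) into a global good event. Because each such bad event depends only on randomness at vertices within distance $O(1)$ in $G$, it shares variables with at most $\poly(\Delta)$ others, and the LLL slack allows the invariants to be maintained with positive probability. The algorithmic counterpart is the Moser--Tardos resampling procedure, which runs in $\poly(n)$ time per round once we can verify bad events; since verification of these low-degree polynomial deviations within a radius-$O(1)$ neighborhood can be done by brute force in time $2^{O(\Delta)}$, the whole process runs in $\poly(n)\cdot 2^{O(\Delta)}$ expected time. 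After $T$ rounds, the residual graph has list sizes $\Omega(\ln\Delta)$ times larger than the maximum residual degree, so a direct application of the standard symmetric LLL (or a greedy algorithm) finishes the list-coloring.

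The main obstacle, and where I expect the argument to be delicate, is the joint tracking of list size and entropy through the reshuffling step: one must show that the entropy does not collapse (so $p_{v,t}$ stays nearly uniform and colors are sampled with probability $\Theta(1/\ell_v(t))$, which is what drives the coupon-collector calculation) while simultaneously the energy along each edge contracts enough that neighboring color distributions remain essentially ``transversal''. Balancing these two requires choosing the keep-probability $\beta$, the number of rounds $T$, and the reshuffling weights as tightly coupled functions of $r$ and $\Delta$, and proving that the low-degree polynomial concentration error is small against the invariant gap; this is precisely where Johansson's introduction of entropy and energy, rather than just list sizes, becomes essential, and where I would follow his bookkeeping most closely.
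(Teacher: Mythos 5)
Your plan follows essentially the same route as the paper's (i.e., Johansson's) proof: a nibble process in which each vertex carries a probability distribution over colors, a reshuffling step keyed to a proper $r$-coloring of each neighborhood that dictates where compensating probability mass is placed, entropy and edge-energy potentials, concentration bounds for low-degree polynomials of independent variables, a Moser--Tardos/LLL argument to enforce per-vertex invariants (each bad event depending only on randomness within bounded radius, hence $\poly(\Delta)$ dependency degree), and a greedy finish on the low-degree residual graph. So the architecture matches. However, several of the concrete invariants and quantitative claims in your plan are not the ones that can be maintained, and as stated the accounting would not close.

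First, the invariant that ``$H_v(t)$ stays close to $\ln \ell_v(t)$, so the distributions remain nearly uniform'' is too strong: the reshuffle multiplies individual color probabilities by factors of order $r$ (mass removed when a neighbor accepts a color is returned, scaled up by $r$, onto a random color class of the neighborhood coloring), so the distributions drift far from uniform. The actual proof never maintains near-uniformity; it maintains only a hard cap on each probability (a threshold $\widehat{p}$ enforced via a stopped product) together with the aggregate lower bound $h(v,p^t) \geq (1-\varepsilon)\ln\Delta$, and the quantity whose multiplicative decay is tracked is the \emph{degree} of the uncolored graph, not the list size --- indeed the whole point is that the set of usable colors should \emph{not} shrink appreciably. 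Second, your quantitative scheme --- energy contracting by a factor related to $1/r$ per round and $T = O(\ln\Delta/\ln r)$ rounds --- cannot work: a nibble that colors only a $\theta$-fraction of each neighborhood per round needs $\Omega(\theta^{-1}\ln\Delta)$ rounds to drive degrees down (the paper takes $\theta = \Delta^{-1/4+2\varepsilon}$ and $T = \Theta(\theta^{-1}\ln\Delta)$), and $r$ enters not through the energy contraction rate (which is $1-\Theta(\theta)$ per round, independent of $r$) but through the per-step \emph{entropy loss}: each modifier $M$ satisfies $\Exp[M\ln M] \leq \ln(2r)$, so the cumulative entropy loss is proportional to $\ln r$ times the accumulated energy, and choosing the color budget $s = O(\Delta \ln r/\ln\Delta)$ is exactly what makes that bookkeeping close. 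Finally, a smaller but genuine omission: your algorithm needs an explicit proper $r$-coloring of each $G[N(v)]$ to perform the reshuffle, but the hypothesis only asserts that such colorings exist; computing them (by exact exponential-time coloring, about $2^{O(\Delta)}$ per vertex) is precisely where the $\poly(n2^{\Delta})$ running time comes from, not from verifying the bad events, which are sums and low-degree polynomials of local random variables and are checkable in polynomial time.
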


\begin{theorem}
  \label{thm:jo-kr}
  For any $r, \Delta$, there exists a randomized algorithm that, given a
  graph $G$ with maximum degree $\Delta$ which excludes $K_r$ as an
  subgraph, outputs a proper coloring of $V(G)$ using
  $O\big(\frac{\Delta }{\ln\Delta} (r^2 + r \ln \ln \Delta)\big)$ colors
  in expected $\poly(n)$ time.
\end{theorem}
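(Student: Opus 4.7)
The plan is to follow the Rödl nibble / semi-random method used in Johansson's proof of Theorem~\ref{thm:jo-local}, but with two substantive adaptations to handle the weaker $K_r$-free hypothesis in place of local $r$-colorability. I would set $k = C \frac{\Delta}{\ln \Delta}(r^2 + r \ln \ln \Delta)$ for a large constant $C$, and carry out $T = \Theta(\ln \Delta)$ iterations of the nibble. Throughout the process, each surviving vertex $v$ carries a probability distribution $p_v$ on its surviving palette $L_v$, initialized to the uniform distribution on $[k]$. At each step, every vertex independently proposes a color $c$ drawn from $p_v$ and ``activates'' it with a small probability $\pa \approx 1/\ln \Delta$; an activated pair $(v,c)$ commits the color to $v$ provided no edge-neighbor of $v$ is also trying to commit $c$. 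Then one removes committed vertices, purges their committed colors from neighbors' palettes, and updates $p_v$.

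The first adaptation is the probability-reshuffling step that distinguishes the $K_r$-free proof from the triangle-free one. After the straightforward update $p_v(c) \leftarrow p_v(c)\cdot \prod_{u \in N(v)}(1 - \pa p_u(c))$, one must correct for the positive correlation introduced by cliques in $N(v)$: if $K_s$-cliques sit inside a neighborhood, several neighbors can simultaneously place weight on the same color, and the naive update overstates the effective removal probability. I would introduce Johansson's ``energy'' $\nrgy{v}{c}$, a local measure of how much mass is concentrated on $c$ across $N(v)$, and reshuffle $p_v$ by an explicit factor that grows with $\nrgy{v}{c}$. For $K_r$-free graphs, any such local concentration can only involve at most $r-1$ truly clustered neighbors, so the reshuffling factor can be bounded by a function of $r$, and this is precisely where the $r^2$ and $r \ln \ln \Delta$ terms enter the palette-size bound.

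The main invariants I would inductively maintain after each nibble round are (i) the entropy $H(p_v) \geq \ln |L_v| - o(\ln |L_v|)$, (ii) the ``$\ell_\infty$'' bound $\max_c p_v(c) \leq (1+o(1))/|L_v|$, and (iii) the palette shrinkage estimate $|L_v| \approx k \cdot (1-\pa/k)^{t \Delta}$, yielding $|L_v| \to \Theta(k \Delta^{-1})$ after $T$ rounds. Verifying these invariants requires sharp concentration of the update: the quantities of interest are low-degree polynomials in the independent proposal/activation variables, so I would apply Kim--Vu or Talagrand's inequality for low-degree polynomials to show each quantity is concentrated around its expectation within a $1 \pm \Delta^{-\eps}$ factor, ``with high probability in $\Delta$.'' The bad events (large deviation at any single vertex) have polynomial decay and depend only on variables in a constant-radius neighborhood, so the constructive Lovász Local Lemma of Moser--Tardos lets us union-bound them away while keeping everything algorithmic in expected $\poly(n)$ time.

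After $T$ rounds, each remaining vertex has degree at most $\Delta^{1-\Omega(1)}$ in the residual graph while its palette still has size comparable to $k/\Delta \cdot \Delta^{\Omega(1)}$, so greedy coloring finishes the job. The hardest step will be the energy-based reshuffling analysis: one needs to show that the reshuffling preserves invariants (i) and (ii), that the induced change in $\sum_c p_v(c)$ is $1 \pm o(1)$ at every vertex, and that the $K_r$-freeness is strong enough to cap the energy at $O(r)$ so that the final palette size bound is the claimed $O(\frac{\Delta}{\ln \Delta}(r^2 + r \ln \ln \Delta))$. Once these are in place, telescoping the one-step guarantees across the $T$ nibble rounds and applying the constructive LLL yields the theorem.
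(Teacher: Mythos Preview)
Your overall architecture matches the paper's: nibble, maintain per-vertex distributions updated by mean-one modifiers, track entropy and energy invariants, concentrate via polynomial tail bounds, glue together with the constructive LLL, and finish greedily once degrees drop below $\Delta^{1-\eps}$. The gap is in the reshuffling step, which is the \emph{only} place the $K_r$-free proof differs from the locally-$r$-colorable one, and your description of it is not workable. In the locally-$r$-colorable case, when $(w,\gamma)$ is tentatively chosen the modifier picks a random color class $S$ from an $r$-coloring of $N(w)$ and sets $M^w_{(v,\gamma)} = 2r \cdot \one{v \in S}$; the point is that $S$ is independent, so for any edge $uv \subseteq N(w)$ the product $M^w_{(u,\gamma)} M^w_{(v,\gamma)}$ vanishes, killing the triangle contribution to the energy. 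For a merely $K_r$-free graph no efficient $r$-coloring of $N(w)$ is available --- $N(w)$ is only $K_{r-1}$-free --- and Ramsey bounds give independent sets of size only $\Delta^{1/(r-1)}$ or $\log \Delta/\log r$, far too small for a modifier with bounded $\kappa$. Your sentences ``any such local concentration can only involve at most $r-1$ truly clustered neighbors'' and ``cap the energy at $O(r)$'' do not yield a construction, and the latter is in fact wrong: the energy is not capped at $O(r)$ but at $2K$ with $K = \Theta((\ln \Delta)/C)$.

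What the paper actually does (Johansson's ``trimming modifier'', Theorem~\ref{thm:kr-mod}) is \emph{recursive}: given the $K_{r-1}$-free neighborhood $H = G[N(w)]$, sample a random set $X \subseteq V(H)$ with $\Pr[v \in X] \propto p(v,\gamma)$ via dependent rounding, partition $V(H)$ into $V_0 = V(H) \setminus N(X)$ and pieces $V_i \subseteq N(x_i)$ for $x_i \in X$; each $V_i$ with $i \geq 1$ is $K_{r-2}$-free, so one recurses on it with a halved contraction parameter. The recursion has depth $r-2$, and the resulting modifier satisfies $\kappa(M) = O(r^2 + r \ln(\sum_{v \in N(w)} p(v,\gamma)))$; combined with the $p_c$ truncation that forces $\sum_{v \sim w} p(v,\gamma) \leq O(\ln \Delta)$, this gives $C = O(r^2 + r \ln \ln \Delta)$, which is exactly where the claimed color count comes from. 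The modifier is a mean-one random variable built from this recursion, not a deterministic factor depending on the energy as you suggest. Without this recursive construction (or an equivalent), there is no modifier with the required $\kappa$ bound and the argument does not close. (Minor: your $T = \Theta(\ln \Delta)$ and activation probability $\approx 1/\ln \Delta$ are also off; the paper uses $\theta \approx \Delta^{-1/4}$ and $T \approx \theta^{-1} \ln \Delta$.)
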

 
We emphasize that Johansson's manuscript contains proofs of other
results and extensions, such as colorability under weaker conditions
than above, and extensions to list-coloring; we omit these extensions
for now. Our presentation largely follows his, but streamlines some of
the proofs using techniques that have developed since, such as
concentration bounds for low-degree polynomials of variables, and
dependent rounding techniques.  
\ifstoc 
  We now give the intuition behind these theorems in
  the following Section~\ref{sec:overview}; for details of the 
  algorithms and proofs please 
  see the full version of this paper.  
\else 
  Roadmap: we first give the intuition
  in Section~\ref{sec:overview}. We give the proof of
  Theorem~\ref{thm:jo-local} in \S\ref{sec:randproc}--\S\ref{sec:inv}, 
  and then show how to extend it to $K_r$-free graphs in
  \S\ref{sec:kr-coloring}.  
\fi

\subsection{Overview and Ideas}
\label{sec:overview}

Johansson's algorithm for locally-colorable graphs uses the ``nibble''
approach: in each round, some $\theta > 0$ fraction of vertices get
colored from their currently-allowable colors. The goal is to argue
(using concentration of measure, and the Local Lemma) that the degree of
each surviving vertex goes down exponentially like $(1 - O(\theta))^t$,
whereas the number of colors does not decrease very fast. This means
that after $\approx (\e/\theta) \ln \Delta$ rounds the degree of the
remaining vertices is be smaller than $\Delta^{1-\e}$ before running out
of the prescribed number of colors, at which point even the na\"ive
greedy algorithm can color the remaining vertices with a few more
colors. The proof for the degree reduction uses concentration bounds for
quadratic polynomials of random variables. The real challenge is to
lower-bound the number of remaining colors. Johansson's argument shows
that the entropy of the probability distribution of a vertex over its
colors remains high throughout the process, and hence there must be many
colors available. This requires a carefully orchestrated process, which
we describe next.
\short{\vfill\eject}

In more detail (but still at a high level): in each round, some $\theta
\approx \Delta^{-1/4}$ fraction of the vertices get activated, and each
tentatively chooses a color from its own probability distribution. (This
per-vertex distibution is initially the uniform distribution.) Any
vertex that gets the same color as its neighbor rejects its color; since
the number of these is small, we can ignore these for now. Then each
tentatively colored vertex (say $v$ with color $\g$), with probability
$\frac12$ accepts color $\g$ permanently and deletes the probability
mass corresponding to $\g$ from its neighbors (so that they cannot take
color $\g$); with the remaining probability $\frac12$, $v$ rejects color
$\gamma$ for this round and waits for another round. In order to ensure
the total probability mass at each vertex remains about $1$, since the
first option caused the probability mass for color $\g$ to decrease at
the neighbors, the second option must increase color $\g$'s mass at the
neighbors. If two of these neighbors $u, w$ are connected by an edge,
this means that we're increasing the chance that both these will get
color $\gamma$; this is potentially worrisome.

This problem does not arise if the graph is triangle-free, because there
are no edges in the neighborhood of any vertex. In this case Johansson's
previous preprint~\cite{Joh-k3} argued that the entropy of each vertex's
distribution remains high---itself a clever and delicate argument
(see~\cite[Chapters~12-13]{MR02}). However, if we just assume that the
graph is locally $r$-colorable, the existence of edges in node $v$'s
neighborhood means that the probability mass for a color at both
endpoints of an edge may become higher, creating undesirable positive
correlations.  What Johansson's new proof does is simple but ingenious:
it ``reshuffles'' the measure for the color randomly to some independent set
in the neighborhood. This is where the $r$-colorability condition kicks
in: since there are large independent sets (of size $\Delta/r$) in each
neighborhood, the reshuffling does not change the probabilities too
suddenly. Now carefully applied concentration bounds and LLL show a
similar behavior as in the triangle-free case, and proves
Theorem~\ref{thm:jo-local}.

The argument for $K_r$-free graphs requires a more involved
recursive reshuffling operation: in this case the size of the
independent sets may be too small (if we just use Ramsay's bound, for
instance), so the idea is to move the measure (on average) to sets that
avoid $K_t$ for $t$ smaller than $r$. This process (which Johansson
calls a ``trimming modifier'') creates a very slight negative
correlation on the edges, but this suffices to show
Theorem~\ref{thm:jo-kr}.

\ifstoc
\else
\subsection{Notation and Preliminaries}

We now define some notation and concepts, and give properties useful for
the following proofs.
We will interchangeably use $u \sim v$ and $u \in N(v)$ to denote that $u$ and $v$ are adjacent.

\subsubsection{Mean-One Random Variables}
\label{sec:morv}

An r.v.\ $X$ is a \emph{mean-one random variable} (\morv) if $X$ only
takes on values in $\{0\} \cup [1,\infty)$, and $\Exp[X] = 1$. One
simple class of \morvs take on some value $c \geq 1$ w.p.\ $1/c$, and
$0$ w.p.\ $1-1/c$.

\subsubsection{The Stopped Product}
\label{sec:stopped}

Given a sequence of non-negative random variables $Y_1, Y_2, \ldots,
Y_m$, and a ``threshold'' value $a \geq 0$, define a stopping time
$\tau_a$ as
\[ \ts \tau_a := \min \bigg\{ t \mid \prod_{i \leq t} Y_i \geq a
\bigg\} \] Then the \emph{stopped product} $\sprod_i Y_i$ is defined as
\[ \ts \sprod_i Y_i := \prod_{i \leq \min(m, \tau_a)} Y_i \]

\subsubsection{The \texorpdfstring{$\kappa$}{\kappa} and \texorpdfstring{$\widehat{\kappa}$}{\kappa} Functions}
\label{sec:kappa}

For a random variable $X$, define the function 
\begin{gather}
  \kappa(X) := \Exp[ X \ln X ]. 
\end{gather}
 The following facts are easy to verify, and will be useful in calculations.
\begin{TwoLiners}
\item[(a)] If $X, Y$ are independent, then $\kappa(XY) = \kappa(X)\,\Exp[Y]
  + \kappa(Y)\,\Exp[X]$.
\item[(b)] Hence if $X,Y$ are independent \morvs, then $\kappa(XY) =
  \kappa(X) + \kappa(Y)$.
\item[(c)] Also, $\kappa(aX) = \Exp[X](a \ln a) + a\kappa(X)$.
\item[(d)] For an event $\E$ and the associated \morv $X =
  \frac{\one{\E}}{\Pr[\E]}$, $\kappa(X) = \ln(1/\Pr[\E])$.
\item[(e)] For a stopped product $X = \sprod X_i$ (as defined in
  \S\ref{sec:stopped}) of independent \morvs
  with respect to some threshold $a$,
\[ \ts \kappa(X) \leq \sum_i
  \kappa(X_i). \] 
\item[(f)] If $X = (1 - \one{\E}) + \one{\E} \cdot Y$, and $Y$ is
  independent of the event $\E$, then $\kappa(X) = \Pr[\E]\,
  \kappa(Y)$.
\end{TwoLiners}
It is also useful to define $\khat(X)$ as an absolute upper
bound on $X$:
\begin{gather}
  \khat(X) := \inf \{ c \mid X \leq c~~ a.s. \}. \label{eq:jo-khat}
\end{gather}

\subsection{The Algorithm for Locally-Colorable Graphs}
\label{sec:randproc}

Let us present the algorithm for Theorem~\ref{thm:jo-local} about
finding colorings of $r$-locally-colorable graphs. The proof follows in
\S\ref{sec:jo-analysis}.

Let $s = O(\Delta \frac{\ln \Delta}{\ln r})$ be the number of colors
we are aiming for, and $L$ is the set of $s$ colors. For vertex $v$,
let $\P_v = \{v\} \times L$ be a collection of tuples indicating which
colors are still permissible for $v$. The term \whpD denotes ``with
probability at least $1 - 1/\poly(\Delta)$''. 

We follow the algorithmic outline from the overview n
\S\ref{sec:overview}.  The algorithm starts with each vertex $v$ having
a uniform probability distribution $p^0\vg = 1/s$ over the colors $\g
\in L$. In each stage $t$ we pick some vertices from the current graph
$G^t$ and color them based on the current values of $p^t\vg$, then
update the probability distributions of the other vertices to get
$p^{t+1}\vg$, drop the colored vertices to get $G^{t+1}$, and proceed to
the next stage. (This is the so-called ``nibble''.) The goal is to show
that after sufficiently many stages we have a partial proper coloring
using at most $s$ colors, and the degree of the graph induced by the
yet-uncolored vertices is $\Delta^{1-\e}$. We can then use a greedy
algorithm to color the remaining vertices.

The process in a generic stage $t$ is as follows (we drop superscripts
of $t$ to avoid visual clutter). 

\begin{enumerate}
\item Let $\phat \in (0,1)$ be a threshold to be defined later. Define
  \begin{align}
    \pa\vg &:= p\vg \cdot \one{p\vg \leq \phat} \\
    \pc\vg &:= \pa\vg \cdot \one{\sum_{u \sim v} p(u,\gamma) \leq
      \crowd}.
  \end{align}
  Hence $\pa$ zeroes out any (vertex, color) tuple $\vg$ which has a
  high value, and $\pc$ additionally zeroes out $\vg$ when $v$'s
  neighbors have a lot of probability mass on color~$\g$. \footnote{The
    definition of $\pc$ is not required to prove
    Theorem~\ref{thm:jo-local}, but is useful in extending the result to
    $K_r$-free graphs. The reader only interested in the former result
    should think of $\pc = \pa$ for this discussion.} 

\item For each vertex $v$ in the current graph $G$ and each color
  $\gamma$ in $L$, independently flip a coin with probability $\theta
  \pc\vg$. (The parameter $\theta \in (0,1)$ is defined later.) Let
  $A_\vg$ be this indicator variable. If $A_\vg = 1$ then color $\gamma$
  is \emph{tentatively assigned} to $v$. Many colors may be tentatively
  assigned to $v$.

  Also, let $\eta_\vg \sim \text{Bin}(1/2)$ be an unbiased coin-flip
  independent of all else.

\item For $v \in V$, let $\T_v = \{ \vg \mid A_\vg = 1\}$, and let
  $\T = \cup_v \T_v$ be all the tentatively assigned tuples. Let
  \[ \C_v = \{ \vg \mid A_\vg = 1 \land \eta_\vg = 1 \land A_\ug = 0 \,
  \forall u \sim v \}, \] and let $\C = \cup_v \C_v$ similarly. Note
  that the pair $\vg \in \T$ is dropped from $\C$ if any neighbor of $v$
  is tentatively assigned color~$\gamma$ (this ensures proper
  colorings), or if its own coin-flip $\eta_\vg$ comes up tails (this
  gives us a ``damping'' that is useful for the rest of the process).

\item For $v \in G$, if there exists some $\gamma$ such that $(v,
  \gamma) \in \C$, then color $v$ with an arbitrary such $\gamma$ and
  remove $v$ from $G'$. So the events $\{ v \in G' \} = \{ \C_v =
  \emptyset\}$.
\end{enumerate}

Now comes the changing of the probabilities via the ``modifiers'' (which
are simply mean-one r.v.s, as defined in \S\ref{sec:morv}).
\begin{enumerate}
\setcounter{enumi}{4}
\item For each pair $\wg$, generate modifiers $M^w_\vg$  for all $v \in
  N(w)$:
  \begin{itemize}
  \item If $\wg \not \in \T$ (i.e., $A_\wg = 0$) then $M^w_\vg = 1$
    for all $v \in N(w)$.
  \item Else, if $\wg \in \T$, then for all $v \in N(w)$,
    \begin{gather}
      M^w_\vg = \underbrace{2(1-\eta_\wg)}_{m.o. r.v.}
      \cdot \underbrace{r \one{v \in S\wg}}_{m.o. r.v.} \label{eq:jo-13}
    \end{gather}
    where $S\wg \sse N(w)$ is a random color class from an $r$-coloring
    of $N(w)$, this randomness is independent of all other $(w,
    \gamma')$, and $(w', \cdot)$.
  \end{itemize}
\item For each pair $\vg$, collect modifiers $M^w_\vg$ from its
  neighbors $w \sim v$, and define $M_\vg := \sprod_{w \sim v}
  M^w_\vg$. Here $\sprod$ is the stopped product (as in
  \S\ref{sec:stopped}) w.r.t.\ threshold $\phat/p\vg$.\footnote{The
    stopped product is with respect to a sequence, so let us assume a
    total order on the vertices, and the variables $\{M^w_\vg\}_{w \in
      N(v)}$ are considered in this order.} Finally set the probability
  values for the next stage to be
  \begin{gather}
    p'\vg := p\vg \cdot M_\vg\label{eq:jo-prob-update}
  \end{gather}
  Recalling the $\khat$ function from~(\ref{eq:jo-khat}), observe that
  $\khat := \max_{w,v,\g} \khat(M^w_\vg) = 2r$; hence the stopped
  product ensures $p'\vg < \khat \cdot \phat = 2r \phat$; define $\pstar
  = \khat\phat$. 
\end{enumerate}

We've now finished defining the new probabilities $p^{t+1}\vg := p'\vg$
and the new graph $G^{t+1} = G'$, and proceed to the next stage. This is
done for some $T= \Theta(\frac1\theta \ln \Delta)$ many stages, after
which we claim that the degree of $G^T$ becomes $\ll s$, and a
na\"{\i}ve coloring suffices to color the rest of the vertices.

The run time: the most expensive step is to find an $r$-coloring of the
neighborhood of the vertices. For each vertex this can be done in time 
$O(r\cdot \Delta \cdot 2^\Delta)$ using $O(2^\Delta)$ space 
(see~\cite{BHKM09}). That is followed by 
$T = O(\frac{\ln \Delta}{\theta})$ rounds of partial colorings, 
each taking $\poly(n\Delta)$ time. Hence
the total runtime is $O(nr\cdot 2^\Delta) + \poly(n)$.

\section{The Proof of Theorem~\ref{thm:jo-local}: I. The Setup}
\label{sec:jo-analysis}

The analysis of this coloring algorithm is similar in spirit (but more
technical than) Johansson's previous result for coloring triangle-free
graphs of maximum degree $\Delta$. Although that result also appears
only as an unpublished manuscript~\cite{Joh-k3}, a lucid presentation
appears in the book by Molloy and Reed~\cite{MR02}.

The idea is clever, but also natural in hindsight: as the process goes
on and probability values $p\vg$ increase, we want to show that for each
surviving vertex, its degree goes down rapidly, whereas it has many
colors still remaining. Showing the former, that many vertices in each
neighborhood are colored at each step, proceeds by showing that for each
$v$, there is not too much positive correlation between the colors of
its neighbors, and they behave somewhat independently. To show the
latter, that many colors remain valid for each vertex $v$, we show that
the entropy of the $p(v, \cdot)$ ``probability distribution'' remains
high. (The quotes are because we only have $\sum_\g p\vg \approx 1$, and
not equal to one, but this approximate equality suffices.) And
lower-bounding the entropy, as in~\cite{Joh-k3,MR02}, relies on
upper-bounding the ``energy'' of the edges which captures the positive
correlation between the colors of its endpoints, and is defined as
follows:

\begin{definition}[Energy]
  \label{def:energy}
  For an edge $uv$ the \emph{energy} with respect to the $p$ values is
  \[ \nrgy{uv}{p} := \sum_\g p\ug p\vg. \] For a non-edge $uv$, define
  $\nrgy{uv}{p} := 0$. For a graph $G$, the energy of a vertex $u$ is
  $\nrg_G(u,p) := \sum_{v \in G} \nrgy{uv}{p}$; when the graph is clear
  from context we drop the subscript and use just $\nrg(u,p)$.  
\end{definition}

\subsection{The Parameters}
\label{sec:param}

For ease of reference, we present the parameters used in the proof
here. Some of these have already been used in the algorithm description,
the others will be introduced in due course. 
\ifstoc
\begin{alignat*}{3}
  \e &:= 1/100 & \theta &:= \Delta^{-1/4 + 2\e} \\
  s &:= |L| := \frac{\Delta}{K} & \phat &:= \Delta^{-3/4 - 5\e} \\
  \khat &:= \max_{w,v,\gamma} \khat(M^w_\vg) = 2r & a &:= 1/2 - 3\e\\
  \pstar &:= \khat \phat = 2 r \phat  &  \cee &:= 0 \\
  T &:= \frac{2\e}{a \cdot \theta} \ln \Delta & \bee &:= a-\cee \\
    C &:=  \max \big( 1, \ln(\khat(M^w_\vg)) \big) = \ln(2r) &\quad K &:= \frac{(b/4) \e \ln \Delta}{C}\\
\end{alignat*}
\else
\begin{alignat*}{3}
  \e &:= 1/100 & \theta &:= \Delta^{-1/4 + 2\e} & s &:= |L| := \frac{\Delta}{K} \\
    \khat &:= \max_{w,v,\gamma} \khat(M^w_\vg) = 2r &\qquad \qquad \phat &:= \Delta^{-3/4 - 5\e} & \qquad\qquad \pstar &:= \khat \phat = 2 r \phat \\
  a &:= 1/2 - 3\e &  \cee &:= 0 & \bee &:= a-\cee \\
    C &:=  \max \big( 1, \ln(\khat(M^w_\vg)) \big) = \ln(2r) & T &:= \frac{2\e}{a \cdot \theta} \ln \Delta &\qquad\qquad K &:= \frac{(b/4) \e \ln \Delta}{C}\\
\end{alignat*}
\fi

We will assume that $r \leq \Delta^\e = \Delta^{1/100}$, else the
desired coloring number of $O(\frac{\Delta}{\ln \Delta} \ln r)$ will
just be $O(\Delta)$, which is trivial to achieve. Finally, we will
assume that $\Delta$ is large enough whenever necessary.  In particular,
\begin{gather}
    \ln \Delta \geq 10000  \label{eq:jo-delta-bound}
\end{gather}
suffices for the rest of the analysis, however no attempt has been made to
optimize any constants.

\subsection{Initial Values}
\label{sec:initial}

At the beginning, the probability values are $p^0\vg = 1/s$ for all
$\vg$, and the degrees are at most $\Delta$. Hence
\ifstoc
  \begin{align*}
    p^0(\P_v) &= 1 \tag{P0} \label{inv:p0} \\
    d(v,G^t) &\leq \Delta \tag{D0} \label{inv:d0} \\
    \nrg_{G_0}(v,p^0) &= \sum_{\gamma, u \sim v} p^0\vg p^0\ug \leq \Delta \cdot s
    \cdot 1/s^2 = K \tag{E0} \label{inv:e0} \\
    h(v,p^0) &= \sum_\gamma p^0\vg \ln 1/p^0\vg = \ln s \notag \\
    &= \ln \Delta - \ln K \tag{H0} \label{inv:h0}
  \end{align*}
\else
  \begin{align*}
    p^0(\P_v) &= 1 \tag{P0} \label{inv:p0} \\
    d(v,G^t) &\leq \Delta \tag{D0} \label{inv:d0} \\
    \nrg_{G_0}(v,p^0) &= \sum_{\gamma, u \sim v} p^0\vg p^0\ug \leq \Delta \cdot s
    \cdot 1/s^2 = K \tag{E0} \label{inv:e0} \\
    h(v,p^0) &= \sum_\gamma p^0\vg \ln 1/p^0\vg = \ln s = \ln \Delta - \ln K \tag{H0} \label{inv:h0}
  \end{align*}
\fi

\subsection{The Invariants}
\label{sec:invariants}

The proof is by induction over the stages.  We maintain the invariants
that for all $t \leq T$, for all $v \in G^t$, the following hold true:
\begin{align*}
  p^t(\P_v) &\in 1 \pm t \lambda_P \sse 1 \pm
    \sqrt{\theta} \sse 1 \pm \e \sse [1/2, 2] \tag{(InvP)} \label{inv:p} \\
  d(v,G^t) &\leq \Delta e^{-a\theta t} + t \lambda_D  \tag{(InvD)} \label{inv:d} \\
  \ts \nrg_{G_t}(v,p^t) &\leq K e^{-\bee\theta t} + \frac{\lambda_E}{\bee\theta}
    \leq 2K \tag{(InvE)} \label{inv:e} \\
  h(v,p^t) &\geq h(v, p^0) - \theta C \sum_{t' \leq t} \nrg_{G_{t'}}(v,\pa^{t'})
    - t \lambda_H  \notag \\
    &\geq (1 - \e) \ln \Delta \tag{(InvH)} \label{inv:H}
\end{align*}
where 
\begin{gather}
  \lambda_P = O(\sqrt{\pstar \ln \Delta}) \qquad
\lambda_D = O(K\theta^2\Delta) \notag \\
\lambda_E = O(\theta \Delta^{-2\e} \ln \Delta) \qquad
\lambda_H = O(\sqrt{\pstar \ln^3 \Delta})
\end{gather}
In the following, we will show that the tightest bounds hold for each
$t$. The weaker bounds given above are just implications useful in our
proofs, and in all cases these follow by algebra. Observe the initial
values from \S\ref{sec:initial} satisfy these invariants.

\section{The Proof of Theorem~\ref{thm:jo-local}: II. The Inductive Step}
\label{sec:induc}

We assume the invariants hold for all times upto and including time $t$,
and now want to show these are satisfied at the end of stage $t+1$. As
usual, we use $p$ to denote $p^t$, and $p'$ to denote $p^{t+1}$. The
plan is to show that for each fixed vertex $v$, each of the invariants
hold with high probability. Since we cannot take a union bound without
losing terms dependent on $n$, we show that the failure events depend
only on a small number of other failure events, whereupon we can apply
the Lov\'asz Local Lemma to complete the argument. 

In the following arguments, we assume that $d(v) \geq \frac{\Delta}{\ln
  \Delta}$ for all $v \in G_t$. Indeed, at the beginning of any stage we
can repeatedly remove vertices with degree less than $\frac{\Delta}{\ln
  \Delta}$, and having found a coloring for the rest of the vertices, we
can color these removed vertices greedily at the end.

\subsection{The Total Probability}
\label{sec:probability}

By construction, the probability values $p^t\vg$ form a martingale, and
hence it is not surpising that their sum remains concentrated around
$1$. Here is the formal proof.

\begin{lemma}
  \label{lem:prob}
  For vertex $v$, if $p(\P_v) \in 1 \pm \e$, then $p'(\P_v) = p(\P_v)
  \pm O(\sqrt{\pstar \ln \Delta})$ \whpD.
\end{lemma}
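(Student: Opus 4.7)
\textbf{Proof proposal for Lemma~\ref{lem:prob}.}
The plan is to write $p'(\P_v) = \sum_{\gamma \in L} p\vg\, M_\vg$ and show that this sum concentrates around its mean $p(\P_v)$ via a standard concentration inequality for sums of bounded independent random variables. The two ingredients I need are (i) that the terms $\{p\vg\, M_\vg\}_{\gamma \in L}$ are independent across colors $\gamma$, each bounded in $[0,\pstar]$, and (ii) that each $M_\vg$ is mean-one, so that $\Exp[p'(\P_v)] = p(\P_v)$.

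For the independence across colors: for a fixed $v$, the random variable $M_\vg$ is a (stopped) product of the per-neighbor modifiers $\{M^w_\vg\}_{w \sim v}$, and each $M^w_\vg$ is a function only of $A_\wg$, $\eta_\wg$, and $S\wg$ as defined in~\eqref{eq:jo-13}. These three sources of randomness are generated independently across different $\g \in L$ (and across different $w$), so the variables $\{M_\vg\}_{\gamma}$ are mutually independent. For the mean-one property: conditional on $A_\wg = 1$, the two factors in~\eqref{eq:jo-13} are independent mean-one random variables (using $\Exp[2(1-\eta_\wg)] = 1$ and $\Pr[v \in S\wg] = 1/r$ since $S\wg$ is a uniformly chosen color class in a fixed $r$-coloring of $N(w)$), and when $A_\wg = 0$ we have $M^w_\vg = 1$ deterministically. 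Hence each $M^w_\vg$ is a \morv, and by the optional stopping argument summarized in \S\ref{sec:stopped} (the stopped product of independent \morvs is itself a \morv in expectation), $\Exp[M_\vg] = 1$. Finally, by construction of the stopping rule with threshold $\phat/p\vg$ and the uniform bound $\khat(M^w_\vg) \leq \khat = 2r$, we get the deterministic bound $p\vg\, M_\vg \leq \khat\phat = \pstar$.

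With these pieces in place, set $X_\gamma := p\vg\, M_\vg$. Then $X_\gamma \in [0,\pstar]$ are independent, $\Exp[\sum_\gamma X_\gamma] = p(\P_v)$, and the variances satisfy
\begin{gather*}
  \mathrm{Var}(X_\gamma) \;\leq\; \Exp[X_\gamma^2] \;\leq\; \pstar\,\Exp[X_\gamma] \;=\; \pstar\, p\vg,
\end{gather*}
so $\sum_\gamma \mathrm{Var}(X_\gamma) \leq \pstar\, p(\P_v) \leq 2\pstar$ using the hypothesis $p(\P_v) \leq 1+\e \leq 2$. I would then apply Bernstein's inequality: for $t = C_0\sqrt{\pstar \ln \Delta}$ with a sufficiently large absolute constant $C_0$,
\begin{gather*}
  \Pr\big[\,|p'(\P_v) - p(\P_v)| > t\,\big] \;\leq\; 2\exp\!\left(-\frac{t^2}{4\pstar + \tfrac{2}{3}\pstar t}\right),
\end{gather*}
and since $t = o(1)$ for our parameter regime, the denominator is $\Theta(\pstar)$ and the bound becomes $2\exp(-\Omega(\ln \Delta)) = 1/\poly(\Delta)$, giving the desired \whpD statement.

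The argument is almost mechanical once the independence structure is unpacked; the only point that requires a bit of care is justifying $\Exp[M_\vg] = 1$ in the presence of the stopping rule (the threshold depends on $p\vg$ but not on the values of the $M^w_\vg$, so the optional stopping theorem applies cleanly). I anticipate this to be the only subtle step; the concentration bound itself is routine.
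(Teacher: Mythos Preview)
Your proposal is correct and matches the paper's proof essentially line for line: the paper also observes that $p'(\P_v)$ is a sum of independent $\pstar$-bounded mean-one terms with $\Exp[p'(\P_v)] = p(\P_v) \leq 2$, and then invokes its Theorem~\ref{thm:chernoff} (a Bernstein-type bound) to get deviation $O(\sqrt{\pstar \ln \Delta} + \pstar \ln \Delta) = O(\sqrt{\pstar \ln \Delta})$. Your extra unpacking of why $\Exp[M_\vg]=1$ via optional stopping for the stopped product is a welcome elaboration of what the paper glosses as ``by construction of the modifiers as mean-one r.v.s.''
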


\begin{proof}
  By construction of the modifiers as mean-one r.v.s, $\Exp[p'\vg] =
  p\vg \cdot \Exp[M_\vg] = p\vg$ for all $\vg$. Moreover, $p'(\P_v)$ is
  the sum of independent $\pstar$-bounded random variables, with
  $\Exp[p'(\P_v)] \leq (1 + \e) \leq 2$, Theorem~\ref{thm:chernoff}
  implies that the deviation $\abs{p'(\P_v) - p(\P_v)}$ is at most
  $\lambda_P := O(\sqrt{\pstar \ln \Delta} + \pstar \ln \Delta) =
  O(\sqrt{\pstar \ln \Delta})$ \whpD.
\end{proof}

\subsubsection{The Lov\'asz Local Lemma Argument} 
\label{sec:lll-arg}

To get the property of Lemma~\ref{lem:prob} for all vertices $v$
simultaneously requires the LLL, since we cannot take a union bound over
all the $n$ vertices. For this, define the bad event $\B^p_{v} =
\{\omega: p'(\P_v) > p(\P_v) + \lambda_P \}$. Note that this event
depends only on a subset of the variables $\{ A_\ug, \eta_\ug, M_\ug\}$
for $u \in \{v\} \cup N(v)$ and $\g \in L$, and all these $A, \eta, M$
variables are independent. Hence, $\B^p_{v}$ and $\B^p_w$ are clearly
independent if $N(u) \cap N(v) = \emptyset$, and the dependency graph
has degree at most $\Delta^2 s \leq \Delta^3$. Using the Moser-Tardos
framework~\cite{MT10} we can get a coloring where none of these bad events
happen.

For each of the the other invariants, we use a similar approach using
the LLL: we define ``local'' bad events --- i.e., the bad event at a
vertex will depend only on r.v.s for vertices at some constant distance
from it --- and hence the degree of the dependency graph over the bad
events will be bounded by $\poly(\Delta)$. Moreover, the probability of
each bad event will be at most $1/\poly(\Delta)$. Using the Moser-Tardos
framework will allow us to find an outcome that will avoid all the bad
events at all vertices simultaneously. Since the arguments will be very
similar, we will henceforth just explain what the bad events are, and
omit the details. 

\subsection{The Degree}
\label{sec:degree}

\begin{lemma}
  \label{lem:degree}
  For vertex $v$, \whpD the new degree is
  \[d'(v) \leq (1 - a \theta)\, d(v) \pm O(K\theta^2 \Delta). \]
\end{lemma}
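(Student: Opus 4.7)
\textbf{Plan for Lemma~\ref{lem:degree}.} The plan follows the standard nibble strategy: first I would show $\Exp[d'(v)] \leq (1 - a\theta)\,d(v) + O(\theta^2 \Delta)$ by a direct computation that exploits the color-wise independence of the coin flips $A_\ug, \eta_\ug$, and then I would argue that $d'(v)$ concentrates around its mean to within $O(K\theta^2\Delta)$. Since we cannot afford a union bound over the $n$ vertices, the concentration failure is formulated as a \emph{local} bad event $\B^d_v$ and the Lov\'asz Local Lemma / Moser--Tardos resampling of \S\ref{sec:lll-arg} is invoked as in Lemma~\ref{lem:prob}.

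\textbf{Mean of $d'(v)$.} Fix $u \sim v$. By construction, $u \in G^{t+1}$ iff $\C_u = \emptyset$, and $(u,\g) \in \C$ iff $A_\ug = 1$, $\eta_\ug = 1$, and $A_\wg = 0$ for every $w \sim u$. Because all the $A$'s and $\eta$'s are mutually independent,
\[ q_\g \;:=\; \Pr[(u,\g)\in \C] \;=\; \tfrac{\theta}{2}\,\pc(u,\g)\prod_{w \sim u}\bigl(1 - \theta\,\pc(w,\g)\bigr), \]
and the events $\{(u,\g)\in\C\}_\g$ are mutually independent across $\g$, so $\Pr[u \in G'] = \prod_\g (1-q_\g) \leq \exp(-\sum_\g q_\g)$. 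The crowding cap in the definition of $\pc$ guarantees $\sum_{w \sim u}\pc(w,\g) \leq \crowd = 100\ln\Delta$, hence each factor in the product defining $q_\g$ is at least $1 - \theta\crowd = 1 - o(1)$, giving $\sum_\g q_\g \geq \tfrac{\theta}{2}(1-o(1))\,\pc(u,\P_u)$. By~\eqref{inv:p} we have $p(u,\P_u) \geq 1-\e$, while the mass lost in passing from $p$ to $\pc$ splits into a ``crowded'' part bounded by $\nrg(u,p)/\crowd \leq 2K/\crowd = o(1)$ using~\eqref{inv:e}, and a ``heavy'' part $\sum_\g p(u,\g)\,\one{p(u,\g) > \phat}$ that must be controlled via~\eqref{inv:H}. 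Together these yield $\pc(u,\P_u) \geq 1 - O(\e)$, hence $\sum_\g q_\g \geq a\theta$ for $a = \tfrac{1}{2} - 3\e$. Therefore $\Pr[u \in G'] \leq e^{-a\theta} \leq 1 - a\theta + O(\theta^2)$, and summing over $u \sim v$,
\[ \Exp[d'(v)] \;\leq\; (1 - a\theta)\,d(v) + O(\theta^2\Delta). \]

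\textbf{Concentration and Local Lemma.} The variable $d'(v) = \sum_{u \sim v}\one{u \in G'}$ is determined by the Bernoulli variables $\{A_\ug, \eta_\ug\}$ attached to vertices $u$ at distance at most two from $v$. Writing $\one{u \in G'} = \prod_\g (1-\one{(u,\g)\in\C})$ and applying inclusion--exclusion (truncated at low order since $\Pr[|\C_u|\geq 2] = O(\theta^2)$), $d'(v)$ is well-approximated by a low-degree polynomial in these independent Bernoullis, and a polynomial concentration inequality (in the spirit of Kim--Vu) yields $|d'(v) - \Exp[d'(v)]| = O(K\theta^2\Delta)$ \whpD. Defining $\B^d_v := \{d'(v) > (1-a\theta)d(v) + C K\theta^2\Delta\}$ for a suitable absolute constant $C$, the event $\B^d_v$ depends only on variables in the two-neighborhood of $v$, so in the dependency graph over $\{\B^d_v\}_{v\in V}$ each event is adjacent to at most $\poly(\Delta)$ others. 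Since $\Pr[\B^d_v] \leq 1/\poly(\Delta)$, the Local Lemma / Moser--Tardos resampling (just as in \S\ref{sec:lll-arg}) produces an outcome in which no $\B^d_v$ fires, establishing~\eqref{inv:d} at time $t+1$.

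\textbf{Main obstacle.} The most delicate ingredient is the lower bound $\pc(u,\P_u) \geq 1 - O(\e)$. The crowded part is immediate from~\eqref{inv:e}, but controlling the $\phat$-heavy mass $\sum_\g p(u,\g)\,\one{p(u,\g) > \phat}$ is not: the stopping-rule bound $p(u,\g) \leq \pstar = 2r\phat$ alone only gives the trivial estimate $O(r)$. The clean handle is~\eqref{inv:H}: since $h(u,p)$ remains $(1-\e)\ln\Delta$ and a $\phat$-heavy atom contributes at most $\ln(1/\phat) = O(\ln \Delta)$ per unit mass to the entropy, any substantial heavy mass would force a corresponding deficit in the entropy, which the invariant forbids. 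A secondary technical point is that the target deviation $O(K\theta^2\Delta)$ is small enough that na\"ive Azuma-type bounds (with the worst-case Lipschitz constant of order $\Delta$) are insufficient, and one must genuinely exploit the low-degree polynomial structure of $d'(v)$ together with the smallness of each $q_\g$.
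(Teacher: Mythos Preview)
Your proposal is correct and follows essentially the same architecture as the paper: bound the mean of $d'(v)$ via the per-color probabilities and the invariants~\ref{inv:p},~\ref{inv:e},~\ref{inv:H}; then prove concentration by exploiting that the degree is (up to second-order corrections) a degree-two polynomial in the independent Bernoullis $A_{(\cdot,\g)},\eta_{(\cdot,\g)}$; then wrap up with the Local Lemma on the two-neighborhood dependency graph. Your identification of Lemma~\ref{lem:p-to-pa} (the entropy-based control on the $\phat$-heavy mass) as the key auxiliary step is exactly right.

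One technical difference worth noting: to control the loss from the ``neighbor collision'' constraint $A_\wg=0$ for all $w\sim u$, you use the crowding cap in the definition of $\pc$ directly, bounding $\prod_{w\sim u}(1-\theta\pc\wg)\geq 1-\theta\crowd$. The paper instead works with the sandwich $Y_u-Y'_u\leq X_u\leq Y_u$, where $Y_u$ ignores neighbor collisions entirely and $Y'_u=\sum_\g\sum_{w\sim u}A_\ug A_\wg$ has $\Exp[Y'_u]\leq\theta^2\nrg(u,\pc)\leq 2K\theta^2$ via~\ref{inv:e}. Your route is slightly cleaner for the mean but makes essential use of the $\pc$ crowding threshold, whereas the paper's decomposition works already with $\pa$ (see their footnote). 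The paper's decomposition also pays off in the concentration step: $Y=\sum_{u\sim v}Y_u$ is a sum of \emph{independent} $\{0,1\}$ variables (so ordinary Chernoff suffices), and only the correction $Y'$ needs the polynomial concentration bound; this makes the ``low-degree polynomial'' step you sketch completely explicit.
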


It is more convenient to study $X_u$, the indicator of whether $u$ was
colored in this round; let $X = \sum_{u \sim v} X_u$. Then $d'(v) = d(v)
- X$. We will first show that $\Exp[X]$ is large, and then that $X$ is
concentrated around its mean. Let
\begin{gather}
  Y_u := 1 - \prod_{\gamma} ( 1 - A_\ug \eta_\ug)
\end{gather}
indicate whether
$u$ was tentatively assigned at least one color that was not dropped due
to the $\eta$ coin flip; clearly $X_u \leq Y_u$. Moreover, let
\begin{gather}
  Y'_u := \sum_\gamma \sum_{w \sim u} A_\ug A_\wg.
\end{gather}
It is easy to see that $Y_u - Y'_u \leq X_u$.

\begin{claim}
  \label{clm:xu-yu}
  $\Pr[ u \text{ is colored}\,] = \Exp[X_u] \geq \left(\frac12 - \e 
  \right) \theta \geq (a + \e) \theta$.
\end{claim}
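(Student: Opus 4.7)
The plan is to establish the combinatorial inequality $X_u \geq Y_u - Y'_u$ and then bound each expectation separately. The inequality itself is clear: when $Y_u = 0$ both sides are zero; when $Y_u = 1$ but $X_u = 0$, every $\g$ with $A_\ug = \eta_\ug = 1$ (at least one by assumption) must be blocked by some neighbor $w \sim u$ with $A_\wg = 1$, contributing to $Y'_u$. Taking expectations, $\Exp[X_u] \geq \Exp[Y_u] - \Exp[Y'_u]$.

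The negative term is the easy direction: by mutual independence of the $A$'s across vertices,
\[ \Exp[Y'_u] = \theta^2 \sum_\g \sum_{w \sim u} \pc\ug \pc\wg \leq \theta^2 \nrg(u,p) \leq 2K\theta^2, \]
using $\pc \leq p$ and the energy invariant (InvE). With $K = O(\e \ln\Delta/C)$ and $\theta = \Delta^{-1/4+2\e}$, this is $o(\e\theta)$ and will be absorbed into the slack.

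The main task is to lower bound $\Exp[Y_u]$. Since the events $\{A_\ug \eta_\ug = 1\}$ are mutually independent across $\g$ with success probability $\theta \pc\ug/2$,
\[ \Exp[Y_u] = 1 - \prod_\g \left(1 - \tfrac{\theta}{2}\pc\ug\right) \geq 1 - \exp\!\left(-\tfrac{\theta}{2}\sum_\g \pc\ug\right), \]
so by $1 - e^{-x} \geq x(1-x/2)$ and the tiny value of $\theta$, it suffices to show $\sum_\g \pc\ug \geq 1 - O(\e)$. The mass dropped in passing from $p\ug$ to $\pc\ug$ splits into a ``heavy'' part $\alpha := \sum_{\g : p\ug > \phat} p\ug$ (removed in $p \to \pa$) and a ``crowded'' part $\beta := \sum_{\g : p\ug \leq \phat,\ \sum_{w \sim u} p\wg > \crowd} p\ug$ (removed in $\pa \to \pc$). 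The crowded part is easy: each such $\g$ contributes at least $p\ug \cdot \crowd$ to $\nrg(u,p) \leq 2K$, so $\beta \leq 2K/\crowd = O(\e/C) = O(\e)$ by the choice of $K$.

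The real technical obstacle is bounding $\alpha$, and here the entropy invariant (InvH) is what we leverage. Split $h(u,p) = \sum_\g p\ug \ln(1/p\ug)$ across heavy and light colors: on the heavy set $\ln(1/p\ug) < \ln(1/\phat) = (3/4+5\e)\ln\Delta$, bounding that contribution by $\alpha(3/4+5\e)\ln\Delta$; on the light set a Jensen/maximum-entropy argument (the uniform distribution maximizes entropy) gives at most $(p(\P_u)-\alpha)\ln s + O(1) \leq (1+\e-\alpha)\ln\Delta + O(1)$. Combining with (InvH) $h(u,p) \geq (1-\e)\ln\Delta$ and rearranging yields $\alpha(1/4-5\e)\ln\Delta \leq 2\e\ln\Delta + O(1)$, hence $\alpha = O(\e)$. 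Together with (InvP), $\sum_\g \pc\ug \geq p(\P_u) - \alpha - \beta \geq 1 - O(\e)$, whence $\Exp[Y_u] \geq (1/2 - O(\e))\theta$; subtracting the $\Exp[Y'_u]$ error gives $\Exp[X_u] \geq (1/2 - \e)\theta$ once the implicit constants are pinned down via the smallness of~$\e$.
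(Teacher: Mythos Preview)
Your proof is correct and follows essentially the same route as the paper: bound $\Exp[X_u] \geq \Exp[Y_u] - \Exp[Y'_u]$, control $\Exp[Y'_u]$ via the energy invariant, and reduce the lower bound on $\Exp[Y_u]$ to showing $\pc(\P_u) \geq 1 - O(\e)$ by separately bounding the mass lost to ``heavy'' and ``crowded'' colors via the entropy and energy invariants respectively. The paper packages this last step as a standalone lemma (Lemma~\ref{lem:p-to-pa}) and bounds the light-color entropy contribution using the pointwise fact that nonzero probabilities satisfy $p\ug \geq 1/s$ (a consequence of the modifiers being mean-one r.v.s), whereas you use a Jensen/maximum-entropy argument that does not need this structural property; both yield the same $\alpha = O(\e)$ conclusion, and the remaining differences (inclusion--exclusion versus $1 - e^{-x} \geq x(1-x/2)$ for $\Exp[Y_u]$) are cosmetic.
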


\begin{proof} By inclusion-exclusion,
  \ifstoc
    \begin{align*} 
      \Exp[Y_u] \geq \sum_\g \frac{\theta}{2}\pc\ug - \sum_{\g, \g'}
      \frac{\theta^2}{4}\pc\ug \pc(u, \g') \notag \\
      = \frac{\theta}{2}\pc(\P_u)\bigg(
      1 - \frac{\theta}{2} \pc(\P_u) \bigg). 
    \end{align*} 
  \else
    \begin{align*} 
      \Exp[Y_u] \geq \sum_\g \frac{\theta}{2}\pc\ug - \sum_{\g, \g'}
      \frac{\theta^2}{4}\pc\ug \pc(u, \g') = \frac{\theta}{2}\pc(\P_u)\bigg(
      1 - \frac{\theta}{2} \pc(\P_u) \bigg). 
    \end{align*} 
  \fi
  Using~\ref{inv:p},~\ref{inv:e} and~\ref{inv:H} in Lemma~\ref{lem:p-to-pa}, we know
  that $\pc(\P_u) \geq 1 - 6(\sqrt{\theta} + \theta) -2\e \geq 1 -
  12\sqrt{\theta} - 2\e$. 
  Also, by~\ref{inv:p}, $\pc(\P_u) \leq p(\P_u) \leq 2$, so by algebra
  we get $\Exp[Y_u] \geq \theta (1/2 -\e) - O(\theta^{3/2})$.

  Moreover,
  \[ \Exp[Y'_u] = \theta^2 \sum_\gamma \sum_{w \sim u} \pc\ug \pc\wg =
  \theta^2 \nrg(u, \pc). \] Using that $\nrg(u, \pc) \leq \nrg(u, p) \leq 2K$ for all $u$
  (from~\ref{inv:e}), we infer $\Exp[Y'_u] \leq 2K\theta^2$. Since
  $K\theta^2 = O(\theta^{3/2})$, 
  \ifstoc
    \begin{align*}
      \textstyle \Pr[u \text{ colored} ] = \Exp[X_u] \geq \Exp[Y_u - Y_u'] \\
      \geq \theta(1/2-\e) - O(\theta^{3/2}) \geq
      \left( \frac12 - 2\e \right) \theta 
    \end{align*} 
  \else
    \begin{align*}
      \textstyle \Pr[u \text{ colored} ] = \Exp[X_u] \geq \Exp[Y_u - Y_u'] 
      \geq \theta(1/2-\e) - O(\theta^{3/2}) \geq
      \left( \frac12 - 2\e \right) \theta 
    \end{align*} 
  \fi
  The final inequality holds for large enough $\Delta$~(\ref{eq:jo-delta-bound}), and the
  claim follows using the definition of $a$.
\end{proof}

\begin{corollary}
  \label{cor:exp-degree}
  $\Exp[d'(v)] \leq (1 - a \theta)\cdot d(v)$
\end{corollary}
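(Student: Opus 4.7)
The corollary is an immediate consequence of Claim~\ref{clm:xu-yu} combined with linearity of expectation, so the plan is short. Recall from the setup that a neighbor $u \sim v$ is removed from the graph in this round precisely when $u$ gets permanently colored, i.e., when $X_u = 1$. Hence the new degree satisfies $d'(v) = d(v) - X$, where $X = \sum_{u \sim v} X_u$ counts the number of neighbors of $v$ that got colored.

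By linearity of expectation,
\[
\Exp[d'(v)] \;=\; d(v) - \sum_{u \sim v} \Exp[X_u].
\]
Claim~\ref{clm:xu-yu} provides the key per-vertex bound $\Exp[X_u] \geq (a+\e)\theta$, which holds for every $u$ on which the invariants \ref{inv:p}, \ref{inv:e}, \ref{inv:H} are assumed (and hence in particular for each neighbor of $v$ in the current graph $G^t$). Summing over the $d(v)$ neighbors and plugging in gives
\[
\Exp[d'(v)] \;\leq\; d(v)\big(1 - (a+\e)\theta\big) \;\leq\; (1-a\theta)\cdot d(v),
\]
using that the positive extra slack $\e\theta$ is simply dropped (it will be useful later to absorb additive error terms when proving the concentration statement of Lemma~\ref{lem:degree}).

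There is no real obstacle here: the substantive work was already done in Claim~\ref{clm:xu-yu}, where one had to control the second-order inclusion-exclusion correction via the energy bound \ref{inv:e} and to relate $\pc$ to $p$ using Lemma~\ref{lem:p-to-pa}. The corollary itself is a one-line averaging step, included mainly to record the clean expectation bound before attacking the harder concentration step in Lemma~\ref{lem:degree}.
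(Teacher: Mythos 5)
Your proof is correct and follows exactly the paper's argument: decompose $d'(v) = d(v) - X$ with $X = \sum_{u \sim v} X_u$, apply Claim~\ref{clm:xu-yu} to each neighbor, and sum by linearity of expectation. The only cosmetic difference is that you keep the extra slack $\e\theta$ explicitly before dropping it, whereas the paper uses the weaker bound $\Exp[X_u] \geq a\theta$ directly.
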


\begin{proof}
  Suppose $X := \sum_{u \sim v} X_u$ is the expected number of neighbors
  of $v$ that get colored; by Claim~\ref{clm:xu-yu}, we get that \[
  \Exp[X] \geq \sum_{u \sim v} \Exp[X_u] \geq d(v) \cdot a\theta. \] Finally, observing that $d'(v) = d(v) - X$
  completes the proof.
\end{proof}

\subsubsection{The Concentration Bound for Degrees}
\label{sec:degree-conc}

We want to show that $X - \Exp[X]$ is small \whpD. Let $Y := \sum_{u
  \sim v} Y_u$, and $Y' := \sum_{u \sim v} Y'_u$. For the upper tail,
observe that as $Y - Y' \leq X \leq Y$, we have
\begin{gather}
  X - \Exp[X] \leq Y-\Exp[Y-Y'] = (Y - \Exp[Y]) + \Exp[Y'].\label{eq:jo-4}
\end{gather}
But $Y$ is the sum of independent $\{0,1\}$-valued r.v.s $\{Y_u\}_{u
  \sim v}$, and by inclusion-exclusion again each $\Exp[Y_u] \leq
\sum_\g \frac\theta2 p\ug$. Thus
\[ \Exp[Y] \leq \theta/2 \sum_{u \sim v} \sum_\g \pc\ug \leq O(\theta
d(v)) = O(\theta\Delta). \] 
By the tail bound Theorem~\ref{thm:chernoff}, setting $\lambda_D^{(1)} := O(\sqrt{\theta\Delta
  \ln \Delta} + \ln \Delta)$ suffices to give
$\Pr[\abs{Y - \Exp[Y]} \leq \lambda_D^{(1)}] \leq 1/\poly(\Delta)$. Plugging
this into~(\ref{eq:jo-4}) and using the bound of $\Exp[Y'] =
\Exp[\sum_{u\sim v} Y_u'] \leq 2K\theta^2 \Delta =: \lambda^{(2)}_D$ gives
us that the total deviation of $X$ above its mean is at most $\lambda_D^{(1)}
+ \lambda^{(2)}_D = O(K\theta^2 \Delta)$ \whpD.

For the lower tail, observe that
\ifstoc
  \begin{gather}
    \Exp[X] - X \leq \Exp[Y]-(Y-Y') \notag \\
    = (\Exp[Y] - Y) + \Exp[Y'] + (Y' - \Exp[Y']).\label{eq:jo-5}
  \end{gather}
\else 
  \begin{gather}
    \Exp[X] - X \leq \Exp[Y]-(Y-Y') = (\Exp[Y] - Y) + \Exp[Y'] + (Y' - \Exp[Y']).\label{eq:jo-5}
  \end{gather}
\fi
Since $\Exp[Y] - Y \leq \lambda_D^{(1)}$ and
$\Exp[Y'] \leq \lambda^{(2)}_D$ by the preceding argument, so we focus on
bounding the upper tail of $Y'$.  For this we use the concentration
inequality for polynomials from Theorem~\ref{thm:ss-poly}. The
parameters are:

\ifstoc
  \begin{align}
    \mu_0 &= \Exp[Y'] \leq d(v)\cdot 2K\theta^2 \leq 2K\theta^2 \Delta\\ 
    \mu_1 &\leq 2\max_\vg \sum_{u \sim
      v} \Exp[A_\ug] \notag \\
   &= 2\max_\vg \sum_{u \sim v} \theta \pc\ug \leq 2\theta
    \phat \Delta = O(\Delta^{-3\e}) \leq 1 \\
    \mu_2 &= 2.
  \end{align}
\else
  \begin{align}
    \mu_0 &= \Exp[Y'] \leq d(v)\cdot 2K\theta^2 \leq 2K\theta^2 \Delta\\ 
    \mu_1 &\leq 2\max_\vg \sum_{u \sim
      v} \Exp[A_\ug] 
   = 2\max_\vg \sum_{u \sim v} \theta \pc\ug \leq 2\theta
    \phat \Delta = O(\Delta^{-3\e}) \leq 1 \\
    \mu_2 &= 2.
  \end{align}
\fi 
Plugging this into Corollary~\ref{cor:ss-poly-2} of the aforementioned
concentration inequality, we get that \whpD the deviation $\abs{Y' -
  \Exp[Y']}$ is $O(\sqrt{2K\theta^2 \Delta \ln \Delta} + \ln^2 \Delta)
=: \lambda^{(3)}_D$. Substituting into~(\ref{eq:jo-5}), we have that \whpD,
\begin{gather}
  \Exp[X] - X \leq \lambda^{(1)}_D + \lambda^{(2)}_D + \lambda^{(3)}_D = O(K
  \theta^2 \Delta) =: \lambda_D
\end{gather}
This proves Lemma~\ref{lem:degree}.

Finally, the LLL part: here the bad event $\B^d_v = \{ \omega: d'(v) > (1
- a\theta)d(v) + \lambda_D) \}$, and this depends on the $A, \eta, M$ r.v.s for
vertices at distance at most $2$ from $v$ (since it depends on whether
$u \in N(v)$ survive, which depend on their neighbors). Hence the
dependency is at most $\Delta^4 \times s$.

\subsection{The Entropy}
\label{sec:entropy}

Now to show invariant~\ref{inv:H}, that the entropy of $\{p'\vg\}_\g$
remains high, where the entropy is defined as
\begin{gather}
  h(v, p') := - \sum_\g p'\vg \ln p'\vg
\end{gather}

\begin{lemma}
  \label{lem:entropy}
  $\Exp[h(v, p')] \geq h(v,p) - C \theta\, \nrg(v, \pa)$.
\end{lemma}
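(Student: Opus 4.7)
The plan is to expand $\Exp[h(v,p')] = -\sum_\g \Exp[p'\vg \ln p'\vg]$, use the factorization $p'\vg = p\vg \cdot M_\vg$ with $M_\vg = \sprod_{w\sim v} M^w_\vg$, and then reduce everything to the $\kappa$-calculus from Section~\ref{sec:kappa}. Since $M_\vg$ is a stopped product of independent mean-one random variables, optional stopping yields $\Exp[M_\vg]=1$; separating the logarithm inside the expectation then gives
\[ \Exp[p'\vg \ln p'\vg] = (p\vg \ln p\vg)\,\Exp[M_\vg] + p\vg\,\Exp[M_\vg \ln M_\vg] = p\vg \ln p\vg + p\vg\,\kappa(M_\vg), \]
(with the usual $0\ln 0 = 0$ convention handling the atom at $M_\vg = 0$). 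Summing over $\g$ reduces the lemma to the bound $\sum_\g p\vg\, \kappa(M_\vg) \leq C\theta\,\nrg(v,\pa)$.

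The crucial observation --- and what pins the right-hand side to $\pa$ rather than to $p$ --- is that whenever $p\vg > \phat$, the stopped-product threshold $a = \phat/p\vg$ is strictly less than $1$, so the empty product $1 \geq a$ already triggers the stopping rule at $\tau_a = 0$; consequently $M_\vg \equiv 1$ deterministically and $\kappa(M_\vg) = 0$. Only colors with $p\vg \leq \phat$ contribute to the sum, and for exactly these $\g$ one has $\pa\vg = p\vg$.

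For such $\g$, property~(e) of $\kappa$ gives $\kappa(M_\vg) \leq \sum_{w\sim v} \kappa(M^w_\vg)$. Each factor $M^w_\vg$ equals $1$ on the event $\{A_\wg = 0\}$ and on its complement equals the product of the two independent m.o.r.v.s $2(1-\eta_\wg)$ and $r\one{v\in S\wg}$, the product being independent of $A_\wg$ itself. Applying property~(f), and then (b) together with (d) to the two factors of the conditional value, I compute
\[ \kappa(M^w_\vg) = \Pr[A_\wg = 1]\bigl(\kappa(2(1-\eta_\wg)) + \kappa(r\one{v\in S\wg})\bigr) = \theta\pc\wg\,(\ln 2 + \ln r) = \theta\pc\wg\, C. \]
Combining and using $\pc\wg \leq \pa\wg$ together with $p\vg = \pa\vg$ on the surviving terms,
\[ \sum_\g p\vg\,\kappa(M_\vg) \;\leq\; \theta C \sum_\g \pa\vg \sum_{w\sim v} \pa\wg \;=\; \theta C\,\nrg(v,\pa), \]
which gives the claim.

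The only genuinely delicate step will be the stopped-product observation: one must notice that the rule $\tau_a = \min\{t : \prod_{i\leq t} Y_i \geq a\}$ forces $M_\vg$ to be identically $1$ (not merely bounded by $\khat$) as soon as $p\vg$ crosses $\phat$, and this is precisely what truncates the tail $\{p\vg > \phat\}$ and converts the naive bound $\theta C\,\nrg(v,p)$ into the sharper $\theta C\,\nrg(v,\pa)$ that the downstream entropy accounting in (InvH) requires. Everything else is a mechanical unwinding of the identities collected in Section~\ref{sec:kappa}.
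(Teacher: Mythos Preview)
Your proof is correct and follows essentially the same route as the paper: expand $h(v,p')$ via $p'\vg = p\vg M_\vg$, use $\Exp[M_\vg]=1$ to recover $h(v,p)$, invoke the stopped-product observation to replace $p\vg$ by $\pa\vg$ in the correction term, and then bound $\kappa(M_\vg)$ via the $\kappa$-calculus properties~(e), (f), (b), (d). Your write-up is in fact more explicit than the paper's at the last step --- you compute $\kappa(M^w_\vg) = \theta\pc\wg(\ln 2 + \ln r) = \theta\pc\wg\, C$ exactly, whereas the paper wraps this into a one-line appeal to ``properties of $\kappa$'' and the definition of $C$.
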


\begin{proof}
  Recall that $p'\vg := p\vg \cdot M_\vg$ from~(\ref{eq:jo-prob-update}),
  where $M_\vg$ is a \morv.
  \begin{align}
    h(v, p') &= - \sum_\g p'\vg \ln p'\vg \notag \\
    &= - \sum_\g p\vg M_\vg \ln p\vg - \sum_\g p\vg M_\vg \ln
    M_\vg \notag
  \end{align}
  Hence, taking expectations,
  \ifstoc
    \begin{align*}
      \Exp[h(v, p')] &= - \sum_\g \Exp[M_\vg]\, p\vg \ln p\vg \notag \\
        & \qquad - \sum_\g p\vg
      \, \Exp[M_\vg \ln M_\vg] \notag \\
      &= h(v,p) - \sum_\g p\vg \, \Exp[M_\vg \ln M_\vg] \notag
    \end{align*}
  \else
    \begin{align*}
      \Exp[h(v, p')] &= - \sum_\g \Exp[M_\vg]\, p\vg \ln p\vg - \sum_\g
      p\vg
      \, \Exp[M_\vg \ln M_\vg] \notag \\
      &= h(v,p) - \sum_\g p\vg \, \Exp[M_\vg \ln M_\vg] \notag
    \end{align*}
  \fi
  In fact, if the probability $p\vg$ is greater than $\phat$ for some
  $\g$ (i.e., if $\pa\vg > 0$), then the definition of the stopped
  product implies that $M_\vg \equiv 1$. Hence, we get the stronger
  claim that
  \begin{align}
    \Exp[h(v, p')] &= h(v,p) - \sum_\g \pa\vg \, \Exp[M_\vg \ln
    M_\vg] \label{eq:jo-1}
  \end{align}
  Now, recall that for an r.v.\ $X$, we defined $\kappa(X) = \Exp[X \ln
  X]$ in \S\ref{sec:kappa}.

  \begin{claim}
    \label{cl:kappa}
    $\kappa(M_\vg) = \Exp[M_\vg \ln M_\vg] \leq \theta\, C\, \sum_{w \sim v} \pc\wg$.
  \end{claim}

  \begin{proof}
    Observe that $M_\vg$ is a stopped product of a bunch of \morvs
    $M^w_\vg$ of neighbors $w \sim v$, each of which is either $1$ (if
    $\wg \not\in \T$) or itself a product of two \morvs as
    in~(\ref{eq:jo-13}).
    Using properties of the $\kappa(\cdot)$ function from
    \S\ref{sec:kappa}, we get
    \begin{gather}
      \kappa(M_\vg) \leq \sum_{w \sim v} (\theta \pc\wg)\cdot
      \max \kappa( M^w_{v,\gamma} )
    \end{gather}
    Using the definition of $C$ gives us the claim.
  \end{proof}

  Substituting Claim~\ref{cl:kappa} into~(\ref{eq:jo-1}), 
  \ifstoc
    \begin{align}
      \Exp[h(v, p')] &\geq h(v,p) - \theta\, C\, \sum_\g\sum_{w \sim v} \pa\vg \pc\wg \notag \\ 
      &\geq h(v,p) - \theta\, C\, \nrg(v,\pa).
    \end{align}
  \else
    \begin{align}
      \Exp[h(v, p')] &\geq h(v,p) - \theta\, C\, \sum_\g\sum_{w \sim v}
      \pa\vg \pc\wg \geq h(v,p) - \theta\, C\, \nrg(v,\pa).
    \end{align}
  \fi
  This proves the lemma.
\end{proof}

\begin{lemma}
  \label{lem:entropy-whp}
  The deviation $\abs{h(v,p') - \Exp[h(v,p')]}$ is at most $\sqrt{\pstar
      \ln^3 \Delta}$ \whpD.
\end{lemma}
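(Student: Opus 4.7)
The plan is to express $h(v,p') = \sum_\g (-p'\vg \ln p'\vg)$ as a sum of independent, bounded random variables and then apply the concentration bound of Theorem~\ref{thm:chernoff}. The key observation is that the modifiers $\{M_\vg\}_{\g \in L}$ are mutually independent across colors: each $M_\vg = \sprod_{w \sim v} M^w_\vg$ depends only on the random variables $\{A_\wg, \eta_\wg, S\wg\}_{w \sim v}$, and these families are disjoint across distinct colors $\g$. Hence $\{p'\vg = p\vg M_\vg\}_\g$, and therefore $\{-p'\vg \ln p'\vg\}_\g$, are independent.

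Next I would bound each summand pointwise and in variance. The stopped-product construction guarantees $p'\vg \leq \khat\phat = \pstar$, so using $|x \ln x| \leq x \ln(1/x)$ on $(0,\pstar]$ each term lies in $[0, O(\pstar \ln \Delta)]$. For the variance, I would use that $x(\ln x)^2$ is monotone increasing on $(0, e^{-2}]$, which contains $[0,\pstar]$ by the parameter choices in \S\ref{sec:param}. This gives
\[
  \mathrm{Var}(-p'\vg \ln p'\vg) \leq \Exp[(p'\vg)^2 (\ln p'\vg)^2] \leq \pstar (\ln \pstar)^2\, \Exp[p'\vg] = O(p\vg \, \pstar \ln^2 \Delta),
\]
where I used $\Exp[p'\vg] = p\vg \Exp[M_\vg] = p\vg$. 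Summing over $\g$ and applying invariant~\ref{inv:p} ($\sum_\g p\vg \leq 2$), the total variance of $h(v,p')$ is $O(\pstar \ln^2 \Delta)$.

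Plugging the pointwise bound $B = O(\pstar \ln \Delta)$ and variance $\sigma^2 = O(\pstar \ln^2 \Delta)$ into Bernstein's inequality (Theorem~\ref{thm:chernoff}) yields
\[
  \bigl|h(v,p') - \Exp[h(v,p')]\bigr| = O\bigl(\sqrt{\sigma^2 \ln \Delta} + B \ln \Delta\bigr) = O\bigl(\sqrt{\pstar \ln^3 \Delta} + \pstar \ln^2 \Delta\bigr) = O\bigl(\sqrt{\pstar \ln^3 \Delta}\bigr),
\]
with probability $1 - 1/\poly(\Delta)$, where the last equality uses $\pstar \ln \Delta \ll 1$ from the parameters. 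To get this simultaneously for all $v$, I would define a bad event $\B^h_v$ as in \S\ref{sec:lll-arg} and apply the Moser--Tardos LLL; since $\B^h_v$ depends only on the random variables at vertices in $N(v)$, the dependency degree is $\poly(\Delta)$.

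The main obstacle is extracting the correct $\pstar$ (rather than $\pstar^2$) scaling in the variance, which is what yields the target deviation $\sqrt{\pstar \ln^3 \Delta}$ rather than the much weaker $\sqrt{\pstar^2 s \ln \Delta}$ one would get from a crude Hoeffding-style bound with $B^2$ per term. This requires the pointwise observation that the variance of each summand is proportional to $p\vg \cdot \pstar$, so that the sum over $\g$ contributes only an $O(1)$ factor via the normalization $\sum_\g p\vg \leq 2$, rather than a factor of $|L| = s$.
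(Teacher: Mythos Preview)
Your proof is correct and follows essentially the same approach as the paper: write $h(v,p')$ as a sum over colors of independent $[0,\pstar\ln\Delta]$-bounded terms and apply Theorem~\ref{thm:chernoff}. The only difference is that you take a detour through a variance computation (Bernstein-style), whereas the paper simply bounds the \emph{mean} by $\Exp[h(v,p')]\le \ln s\le \ln\Delta$ and plugs $\mu=\ln\Delta$, $m=\pstar\ln\Delta$ directly into Theorem~\ref{thm:chernoff}; since $\mu m=\pstar\ln^2\Delta$ coincides with your $\sigma^2$ bound, both routes yield $O(\sqrt{\pstar\ln^3\Delta})$. Note also that Theorem~\ref{thm:chernoff} as stated in the paper is a Chernoff-type bound in $(\mu,m)$, not a Bernstein bound in $(\sigma^2,B)$, so your variance estimate, while correct, is not actually needed---the ``obstacle'' you flag in your last paragraph does not arise in the paper's argument.
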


\begin{proof}
  By definition, the entropy $h(v,p')$ is a sum of independent r.v.s
  $p'\vg \ln \frac1{p'\vg}$; since $p'\vg \in [1/s, \pstar]$, these
  r.v.s are $[0,m]$-bounded where $m := (\pstar \ln s) \leq \pstar \ln
  \Delta$. Moreover, the mean $\mu$ satisfies $\mu \leq \ln \Delta$
  because the entropy can be at most $\ln s \leq \ln \Delta$. Hence by
  the tail bound from Theorem~\ref{thm:chernoff}, the deviation from the mean is at most
  $\lambda_H = O(\sqrt{\mu m \ln \Delta} + m \ln \Delta) =
  \sqrt{\pstar \ln^3 \Delta}$ \whpD.
\end{proof}

For the LLL application, since the entropy depends only on the $p'\vg$
values, the bad event for $v$ is dependent only on the random choices of
$\{v\} \cup N(v)$, and hence easily handled for the usual reasons.

\subsection{The Energy}
\label{sec:energy}

The calculations above show that the decrease in entropy at vertex $v$
depends on the energy of edges incident to $v$, so it remains to show
that this energy is small (and in fact, decreases over the course of the
algorithm). This is technically the most interesting part of the
analysis. We're interested in
\begin{gather}
  \nrg_{G}(v, \pa) = \sum_{u \sim_{G} v} \underbrace{\nrg(uv,
  \pa)}_{\sum_\g \pa \ug \pa\vg},
\end{gather}
and want to show that this energy drops by a factor of $\approx (1 -
\theta/2)$ each time. Formally, we prove the following:
\begin{lemma}
  \label{lem:energy-whp}
  For a vertex $v$, \whpD the energy \[ \nrg_{G'}(v, \pa') \leq (1 -
      \bee \theta) \cdot \nrg_G(v, \pa) + O(\theta \Delta^{-2\e}
  \ln \Delta). \]
\end{lemma}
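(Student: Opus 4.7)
The plan mirrors the analyses for the degree and entropy invariants in \S\ref{sec:degree}--\S\ref{sec:entropy}. First we show that $\Exp[\nrg_{G'}(v,\pa')]$ decreases by the claimed $(1-b\theta)$ factor; next we establish concentration around this expectation via the polynomial tail bound; and finally we invoke the Local Lemma so that the deviation holds simultaneously at every vertex, as in \S\ref{sec:lll-arg}.

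For the expectation, decompose
\[
  \Exp[\nrg_{G'}(v,\pa')] = \sum_{u\sim_G v}\sum_\g \Exp\bigl[\one{u\in G'}\,\pa'(u,\g)\pa'(v,\g)\bigr].
\]
Claim~\ref{clm:xu-yu} already gives $\Pr[u\in G'] \leq 1-a\theta+o(\theta)$, so if each per-pair expectation $\Exp[\pa'(u,\g)\pa'(v,\g) \mid u\in G']$ is at most $p(u,\g)p(v,\g)$ up to negligible slack, then using $b=a-\cee=a$ we obtain the desired factor, with the second-order terms absorbed into $\lambda_E$. The heart of the proof is the per-pair inequality. Writing $p'(u,\g)p'(v,\g) = p(u,\g)p(v,\g)\cdot M_{u,\g}M_{v,\g}$ and expanding both stopped products, the only factor pairs whose expectations do not trivially multiply to one are the coupled factors $M^w_{u,\g}\cdot M^w_{v,\g}$ at sources $w\in\{u,v\}\cup(N(u)\cap N(v))$. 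Here the reshuffling step pays off: since $u\sim v$ in $G$, the two vertices cannot both lie in the same independent color class $S_{w,\g}$ of the $r$-coloring of $N(w)$, so $\one{u\in S_{w,\g}}\cdot\one{v\in S_{w,\g}} = 0$ and hence $M^w_{u,\g}\cdot M^w_{v,\g} = 0$ whenever $A_{w,\g}=1$. Therefore $\Exp[M^w_{u,\g}M^w_{v,\g}]\leq 1 = \Exp[M^w_{u,\g}]\Exp[M^w_{v,\g}]$; this is precisely where local $r$-colorability is used. The truncation $\pa = p\cdot\one{p\leq\phat}$ and the stopped-product cap only strengthen the inequality, at the cost of a small additive slack.

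For concentration, $\nrg_{G'}(v,\pa')$ is a polynomial of bounded degree (at most $4$) in the independent random variables $\{A_{w,\g},\eta_{w,\g},S_{w,\g}\}$ attached to vertices within distance two of $v$, with coefficients bounded by $(\pstar)^{2} = O(r^2\phat^2)$. Using invariants~\ref{inv:p} and~\ref{inv:e} to bound the mean by $\mu_0 = O(K)$, and using $\pa(u,\g)\leq\phat$ together with $\pstar=\Delta^{-3/4-O(\e)}$ to bound the partial-derivative expectations $\mu_1,\mu_2$ by $\Delta^{-\Omega(\e)}$, Corollary~\ref{cor:ss-poly-2} yields a deviation of at most $O(\theta\Delta^{-2\e}\ln\Delta) = \lambda_E$ with probability $1-1/\poly(\Delta)$. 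The bad event $\B^E_v = \{\nrg_{G'}(v,\pa') > (1-b\theta)\nrg_G(v,\pa) + \lambda_E\}$ depends only on the random variables within distance two of $v$, so the dependency graph over $\{\B^E_v\}_v$ has degree $\poly(\Delta)$ and the Moser--Tardos framework applies as before.

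The main obstacle will be the expectation calculation: the event $\{u\in G'\}$ and the modifiers $M_{u,\g},M_{v,\g}$ share the randomness $A_{u,\cdot},\eta_{u,\cdot}$, and the stopped product combined with the truncation $\pa = p\cdot\one{p\leq\phat}$ create conditioning dependencies that must be unwound carefully. The reshuffling cancellation at common sources is the mechanism that closes the gap, but bookkeeping exactly which modifier factors decouple across the pair $(u,v)$, once one conditions on $u$ surviving, is the technically delicate part of the argument.
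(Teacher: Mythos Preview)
Your expectation calculation captures the right idea: the reshuffling step ensures $M^w_{u,\g}M^w_{v,\g}=0$ whenever $A_{w,\g}=1$ for a common source $w$, which is exactly the content of the paper's Claim~\ref{clm:energy1}. You also correctly flag that decoupling $\{u\in G'\}$ from the color-$\g$ randomness is the delicate step. The paper resolves this not by conditioning but via the identity
\[
  \{u\in G'\} \;=\; \bigl\{\C\cap(\{u\}\times(L\setminus\{\g\}))=\emptyset\bigr\}\;-\;\bigl\{\C_u=\{(u,\g)\}\bigr\},
\]
noting that the first event is independent of all color-$\g$ randomness while the second forces $\pa'(v,\g)=0$ through the modifier $M^u_{v,\g}$. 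This is the missing ingredient in your expectation argument; without it the conditional expectation you write down is not obviously bounded.

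The genuine gap, however, is in the concentration step. The random variable $\nrg_{G'}(v,\pa')$ is \emph{not} a degree-$4$ polynomial in the underlying coins: the indicator $\one{u\in G'}=\one{\C_u=\emptyset}$ is a product over all colors and all neighbors of $u$, of degree $\Theta(\Delta\,|L|)$, and the truncation $\pa'=p'\cdot\one{p'\leq\phat}$ is likewise not polynomial. So Corollary~\ref{cor:ss-poly-2} cannot be applied directly. The paper instead writes $Q_u=\nrgy{uv}{\pa'}$, $R_u=\one{u\in G'}$, $q_u=\Exp[Q_u]$, $r_u=\Exp[R_u]$ and uses the triangle-inequality splitting
\[
  \Bigl|\sum_u Q_uR_u-\Exp[Q_uR_u]\Bigr|\;\le\;\Bigl|\sum_u(Q_u-q_u)R_u\Bigr|+\Bigl|\sum_u q_u(R_u-r_u)\Bigr|+\Bigl|\sum_u q_ur_u-\Exp[Q_uR_u]\Bigr|,
\]
then bounds each piece separately (Claims~\ref{clm:energy1whp}--\ref{clm:energy3whp}): the first via per-$u$ Chernoff on the $\phat^2$-bounded summands of $\nrgy{uv}{\pa'}$ together with Cauchy--Schwarz, the second by reusing the degree-concentration machinery of \S\ref{sec:degree-conc} on the weighted indicators, and the third deterministically via~(\ref{eq:jo-6})--(\ref{eq:jo-7}). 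Your proposal needs a decomposition of this kind; treating the whole quantity as a single low-degree polynomial does not work.
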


As for the other invariants, we first bound the expectation and then show a
large-deviations bound. The expectation calculation itself proceeds via
two claims --- the first claim quantifies the change in energy due to
considering the new probability distribution $\pa'$ instead of $\pa$
(keeping the graph $G$ fixed), and the second captures the change due
to considering graph $G'$ instead of $G$ (but now keeping the
distribution $p'$ fixed). Remember that the energy of non-edges
is zero by definition.
\begin{claim}
  \label{clm:energy1}
  For any $v$, $\sum_{u \in G} \Exp[\nrgy{uv}{\pa'}] \leq \sum_{u \in G} 
    \nrgy{uv}{\pa}$.
\end{claim}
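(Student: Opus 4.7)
Since $\nrgy{uv}{\cdot} = 0$ for non-edges of $G$, the claim reduces to showing $\Exp[\nrgy{uv}{\pa'}] \leq \nrgy{uv}{\pa}$ for every edge $uv \in E(G)$ and then summing over $u \sim v$. My first step would be to establish a pointwise dominance $\pa'\vg \leq \pa\vg\cdot M_\vg$. If $p\vg > \phat$ then $\pa\vg = 0$, and the stopped-product threshold $\phat/p\vg < 1$ is met by the empty product, forcing $M_\vg = 1$, $p'\vg = p\vg > \phat$, and hence $\pa'\vg = 0$. If $p\vg \leq \phat$ then $\pa\vg = p\vg$ and $\pa'\vg \leq p'\vg = p\vg M_\vg$. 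Multiplying gives
\[
  \Exp[\nrgy{uv}{\pa'}] \leq \sum_\gamma \pa\ug\,\pa\vg\cdot \Exp[M_\ug M_\vg],
\]
which reduces the claim to proving $\Exp[M_\ug M_\vg] \leq 1$ for each edge $uv$ and each color $\gamma$.

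The heart of the argument is the anti-correlation across edges induced by the independent-set reshuffling in~(\ref{eq:jo-13}). For a common neighbor $w \in N(u)\cap N(v)$, the product $M^w_\ug \cdot M^w_\vg$ equals $1$ on $\{A_\wg = 0\}$, and on $\{A_\wg = 1\}$ it equals $4(1-\eta_\wg)^2 r^2\,\one{u \in S\wg}\,\one{v \in S\wg}$. Since $uv$ is an edge of $G$ and $S\wg$ is an independent set of $N(w)$, at most one of $u,v$ lies in $S\wg$, so this product vanishes. Hence $M^w_\ug M^w_\vg = \one{A_\wg = 0}$ has mean $1 - \theta\pc\wg \leq 1$. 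For a non-common neighbor $w$, the relevant factor is a plain \morv with mean $1$, and the triples $(A_\wg,\eta_\wg,S\wg)$ at distinct $w$ are mutually independent. If $M_\vg$ and $M_\ug$ were ordinary products, factoring would immediately give $\Exp[M_\vg M_\ug] = \prod_{w \in N(u)\cap N(v)}(1-\theta\pc\wg) \leq 1$.

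The main obstacle is that $M_\vg$ and $M_\ug$ are \emph{stopped} products, so the clean factorization above is not directly available. To handle this I would list the neighbors in a shared linear order that places $N(u)\cap N(v)$ first, and expose the per-vertex randomness one $w$ at a time along a filtration $\F_0\sse\F_1\sse\cdots$. The stopping events for both $M_\vg$ and $M_\ug$ are adapted to this filtration, and at each step the multiplicative increment to the running product $M_\vg M_\ug$ is a non-negative random variable with conditional mean at most $1$: exactly $1-\theta\pc\wg \leq 1$ when $w$ is a common neighbor and neither side has stopped; $1$ when only one side is still active (a single unstopped $M^w_\vg$ or $M^w_\ug$ is a \morv conditionally independent of the past); and $1$ when both sides have already stopped. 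Hence $M_\vg M_\ug$ is a non-negative supermartingale starting at $1$, giving $\Exp[M_\vg M_\ug]\leq 1$. Combining this with the per-edge reduction and summing over $u \in N_G(v)$ proves the claim.
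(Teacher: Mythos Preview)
Your approach is correct and reaches the same endpoint, but it takes a longer route than the paper. The paper sidesteps the stopped-product difficulty entirely via Fact~\ref{fct:stopped}: whenever the stopped product for $\vg$ actually stops, $p'\vg \geq \phat$ and hence $\pa'\vg = 0$. So $\pa'\ug\,\pa'\vg$ is nonzero only when \emph{neither} stopped product stops, in which case each equals the full unstopped product, giving the pointwise bound
\[
  \pa'\ug\,\pa'\vg \;\leq\; \pa\ug\,\pa\vg \prod_{w \sim u} M^w_\ug \prod_{w \sim v} M^w_\vg.
\]
Taking expectations and using independence across distinct $w$ yields a clean factorization; your observation that $M^w_\ug M^w_\vg = \one{A_\wg = 0}$ for common neighbors $w$ (since $S\wg$ is independent and $uv$ is an edge) then finishes the proof immediately, with no filtration or martingale machinery. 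In short, the paper strengthens your first step from $\pa'\vg \leq \pa\vg\, M_\vg$ to $\pa'\vg \leq \pa\vg \prod_{w} M^w_\vg$, which is exactly what lets it avoid the supermartingale detour.

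Your supermartingale argument is a legitimate alternative, but the reordering step is both unnecessary and, as written, technically off: $M_\ug$ and $M_\vg$ are defined with respect to the algorithm's \emph{fixed} global vertex order, so their stopping times are not adapted to a filtration built from a different, edge-dependent order. Fortunately the supermartingale property already holds when you expose vertices in the global order---at each step the multiplicative increment has conditional mean $\leq 1$ whether $w$ is a common neighbor, a one-sided neighbor, or neither---so you can simply drop the reordering.
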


\begin{claim}
  \label{clm:energy2}
  For any $v$, $\Exp\left[\sum_{u \in G'} \nrgy{uv}{\pa'}\right] \leq 
  (1 - a\theta) \sum_{u \in G} \Exp[\nrgy{uv}{\pa'}]$.
\end{claim}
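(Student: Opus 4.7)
\textbf{Proof plan for Claim~\ref{clm:energy2}.} My plan is to expand the energy as a double sum over neighbors $u \sim v$ and colors $\g$, and argue that each term $\Exp[\one{u \in G'}\,\pa'\ug\pa'\vg]$ picks up a multiplicative $(1 - a\theta)$ discount from conditioning on $u$'s survival. Since $\nrgy{uv}{\pa'} = \sum_\g \pa'\ug \pa'\vg$ vanishes for non-edges, it suffices to show, for every edge $uv$ of $G$ and every color $\g$,
\[
  \Exp\bigl[\one{u \in G'}\,\pa'\ug\,\pa'\vg\bigr] \;\leq\; (1 - a\theta)\,\Exp\bigl[\pa'\ug\,\pa'\vg\bigr];
\]
summing over $u \sim v$ and $\g \in L$ then yields the claim.

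The key structural observation I would exploit is that the algorithm's randomness factorizes cleanly by color. From~\eqref{eq:jo-13} and the definition of the stopped product, each modifier $M^w_\ug$ is a function only of the color-$\g$ coins $\{A_\wg,\eta_\wg\}$ together with the random color class $S_\wg$; hence both $\pa'\ug$ and $\pa'\vg$ are measurable with respect to the color-$\g$ coins indexed by $N(u) \cup N(v)$. On the other hand, the event that $u$ gets colored admits a witness using \emph{any} color, so I would introduce
\[
  \tilde X_{u,\g} \;:=\; \one{\exists\, \g' \neq \g \,:\, A_{u,\g'} = 1,\ \eta_{u,\g'} = 1,\ A_{w,\g'} = 0 \text{ for all } w \sim u},
\]
observe that $\tilde X_{u,\g} \leq \one{u \notin G'}$, and note that $\tilde X_{u,\g}$ depends only on coins indexed by colors other than $\g$ and is therefore independent of $\pa'\ug\,\pa'\vg$. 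This immediately gives
\[
  \Exp[\one{u \in G'}\,\pa'\ug\,\pa'\vg] \;\leq\; \bigl(1 - \Exp[\tilde X_{u,\g}]\bigr)\,\Exp[\pa'\ug\,\pa'\vg].
\]

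It then remains to show $\Exp[\tilde X_{u,\g}] \geq a\theta$. I would obtain this by re-running the inclusion-exclusion of Claim~\ref{clm:xu-yu} with color $\g$ simply deleted from consideration: since $\pc(u,\g) \leq \phat$ is tiny, deleting one color costs only $O(\theta\phat + \theta^2)$ in the lower bound on $\Exp[\tilde Y_u]$, comfortably absorbed into the $\e\theta$ slack already present in the proof of Claim~\ref{clm:xu-yu}. The main obstacle---and the one step requiring real care---is verifying the ``independence across colors'' claim: I must confirm that the stopped product $\sprod$, the stopping threshold $\phat/p\ug$, and the random color class $S_\wg$ all respect the per-color partition of the probability space. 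Since the $A$, $\eta$, $S$ variables are all indexed by $(\text{vertex},\text{color})$ pairs and are mutually independent by construction, this reduces to unwinding the definitions; once in hand, the remainder is routine bookkeeping.
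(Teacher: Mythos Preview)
Your proof is correct and follows essentially the same route as the paper: both exploit that $\pa'\ug\,\pa'\vg$ depends only on color-$\g$ randomness, decouple it from the event ``$u$ is colored via some $\g' \neq \g$,'' and bound the complement of that event by $(1 - (a+\e)\theta) + \theta\phat \leq 1 - a\theta$ using Claim~\ref{clm:xu-yu}. The paper phrases the decoupling slightly differently---writing $\{u \in G'\} = \{\C_u \subseteq \{(u,\g)\}\} \setminus \{\C_u = \{(u,\g)\}\}$ and noting that the second event forces $\eta_\ug = 1$ and hence $M^u_\vg = 0$, so $\pa'\vg = 0$---which yields an \emph{equality} in~(\ref{eq:jo-6}) that is later reused in Claim~\ref{clm:energy3whp}; but for the present claim your inequality $\one{u \in G'} \leq 1 - \tilde X_{u,\g}$ is enough.
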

In the latter claim, observe that $G'$ is a random variable itself, and
hence we cannot just push the expectation inside the sum.  Before we
prove Claim~\ref{clm:energy1}, we will use the following observation
which allows us to consider the unstopped product instead of the stopped
product.
\begin{fact}
    \label{fct:stopped}
    For any $u,v\in G$, and $\gamma \in L$, 
    $\pa'\ug \pa'\vg \leq \pa \ug \pa \vg \prod_{w\sim u} M^w_\ug \prod_{w \sim u} M^w_\ug$
\end{fact}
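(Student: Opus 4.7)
The plan is to reduce the fact to a per-vertex inequality
\[ \pa'\vg \leq \pa\vg \prod_{w \sim v} M^w_\vg, \]
from which the stated claim follows by multiplying the inequalities at the two endpoints $u$ and $v$ and using that every factor is nonneg. The per-vertex bound is then proved by splitting on whether $\pa\vg$ is zero.

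In the case $p\vg \leq \phat$, we have $\pa\vg = p\vg$ and by construction $\pa'\vg \leq p'\vg = p\vg\, M_\vg$, where $M_\vg = \sprod_{w \sim v} M^w_\vg$ is the stopped product with threshold $\phat/p\vg$. If $p'\vg > \phat$, then $\pa'\vg = 0$ and the bound is trivial; otherwise $p'\vg \leq \phat$, which means $M_\vg \leq \phat/p\vg$, i.e., the running partial product has not genuinely crossed the stopping threshold. On this event the stopped product coincides with (or is dominated by) the unstopped product $\prod_{w \sim v} M^w_\vg$, so $\pa'\vg \leq p\vg \cdot \prod_{w \sim v} M^w_\vg = \pa\vg \prod_{w \sim v} M^w_\vg$.

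In the case $p\vg > \phat$, we have $\pa\vg = 0$, so the RHS vanishes and the task is to show $\pa'\vg = 0$. The structural point here is that each $M^w_\vg$ is a \morv, hence valued in $\{0\} \cup [1, \infty)$, and this set is closed under (stopped) products. So the stopped product $M_\vg$ is either $0$, in which case $p'\vg = 0$, or at least $1$, in which case $p'\vg \geq p\vg > \phat$. In both subcases, $\pa'\vg = p'\vg \one{p'\vg \leq \phat} = 0$, as needed.

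The main subtlety is the stopped-vs-unstopped comparison in the first case: one has to observe that the event $\{p'\vg \leq \phat\}$ captures exactly the scenario where the running product has not crossed the stopping threshold, so that the stopped product $M_\vg$ used to define $p'\vg$ can be safely replaced by the full unstopped product $\prod_{w \sim v} M^w_\vg$ appearing on the right-hand side. Everything else is a direct case analysis resting on the $\{0\} \cup [1, \infty)$-valued structure of mean-one random variables.
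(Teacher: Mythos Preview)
Your proof is correct and follows the same idea as the paper's one-line argument (``the products are stopped only when the terms multiplied give us value at least $\phat$, but then the corresponding $\pa'$ values get zeroed out''), just unpacked into an explicit per-vertex case analysis. In particular, you handle the case $p\vg > \phat$ (forcing $\pa'\vg = 0$ via the $\{0\}\cup[1,\infty)$ structure of the modifiers) that the paper's terse proof leaves implicit.
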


\begin{proof}
  Indeed, the products are stopped only when the terms multiplied give
  us value at least $\phat$, but then corresponding $\pa'$ values get
  zeroed out.
\end{proof}
    
\begin{proofof}{Claim~\ref{clm:energy1}} 
  Using Fact~\ref{fct:stopped}, we can replace the stopped product by
  the usual ones:
  \ifstoc
    \begin{align*}
      \Exp[\pa'\ug \pa'\vg]
      &\leq \pa\ug \pa\vg \Exp\left[ \prod_{w \sim u} M^w_\ug \prod_{x \sim v}
        M^x_\vg\right]  \\
      \leq p\ug p\vg &\Exp\left[
        \prod_{w \sim u: w \not\sim v} M^w_\ug\right] \cdot \Exp\left[ \prod_{x
          \sim v: x \not\sim u} M^x_\vg \right] \notag \\
        &\cdot \Exp\left[ \prod_{w: uvw
          \in \triangle} M^w_\ug M^w_\vg \right] \label{eqn:energy1}
    \end{align*}
  \else
    \begin{align*}
      \Exp[\pa'\ug \pa'\vg]
      &\leq \pa\ug \pa\vg \Exp\left[ \prod_{w \sim u} M^w_\ug \prod_{x \sim v}
        M^x_\vg\right]  \\
      &\leq p\ug p\vg \Exp\left[
        \prod_{w \sim u: w \not\sim v} M^w_\ug\right] \Exp\left[ \prod_{x
          \sim v: x \not\sim u} M^x_\vg \right] \Exp\left[ \prod_{w: uvw
          \in \triangle} M^w_\ug M^w_\vg \right] \label{eqn:energy1}
    \end{align*}
  \fi
  But the first two expectations equal to one, since each of these is a
  product of independent \morvs. For the third one, with probability $(1
  - \theta \pc\wg)$ the pair $\wg \not\in \T$ and we get $1$, else with
  probability $\theta \pc\wg$ the pair $\wg \in \T$ and the random color
  class modifier gives zero. In all cases the final expectation would be
  at most one.
\end{proofof}

\begin{proofof}{Claim~\ref{clm:energy2}} 
  In this claim, it suffices to bound,
  for each color $\g$, $\Exp\left[ \pa'\ug \pa'\vg \one{u \in G'}
  \right]$. The crucial observation is that for any $\g$, the events
  \begin{align}
    \{ u \in G' \} &= \{ \C_u = \emptyset \} \\
    &= \{ \C_u \sse \{\ug\} \} - \{ \C_u = \{\ug\} \} \\
    &= \{ \C \cap (\{u\} \times (L \setminus \{\gamma\})) = \emptyset \}
    - \{ \C_u = \{\ug\} \} \label{eq:jo-3}
  \end{align}
  However, when $\C_u = \{\ug\}$, then $A_\ug = 1$ and $\eta\ug = 1$,
  which means that $\pa'\vg = 0$ due to the modifier $M^u_\vg$, and hence
  $\Exp[ \pa'\ug \pa'\vg \one{\C_u = \{\ug\}}] = 0$. Moreover, the former
  event says that $u$ did not get any color from the set $L \setminus
  \gamma$, which is independent of all decisions for the
  color~$\g$. Hence
  \begin{gather}
    \Exp\left[ \pa'\ug \pa'\vg \one{u \in G'} \right] = \Exp[ \pa'\ug \pa'\vg ]
    \cdot \Pr[ \C \cap (\{u\} \times (L \setminus \{\gamma\})) =
    \emptyset ] \label{eq:jo-6}
  \end{gather}
  Using~(\ref{eq:jo-3}) again, we know that
  \begin{align}
    \Pr[ \C \cap (\{u\} \times (L \setminus \{\gamma\})) = \emptyset ]
    &=
    \Pr[ u \in G' ] + \Pr[ \C_u = \{\ug\} ] \\
    &\leq  \Pr[ u \in G' ] + \Pr[ A_\ug = 1 ] \label{eq:jo-7} \\
    &\leq (1 - (a + \e) \theta) + \theta\phat, \label{eq:jo-9}
  \end{align}
  where the first expression is from Claim~\ref{clm:xu-yu}
  and the second one from $\pc\ug \leq \phat$. But
  $\theta\phat \leq \e\theta$ for large $\Delta$. Thus the claim is proved.
\end{proofof}

Combining Claims~\ref{clm:energy1} and~\ref{clm:energy2},  and using $a =
 b$, 
\begin{gather}
  \Exp[ \nrg_{G'}(v,\pa')] = \sum_{u\in G'} \Exp[\nrg(uv,\pa')] \leq
  \left(1-\bee\theta \right) \sum_{u\in G} \nrg(uv,\pa) = \left(1-\bee\theta
  \right) \nrg_G(v,\pa). \label{eq:jo-14}
\end{gather}
We next show that the r.v.\ $\nrg_{G'}(v,\pa')$ is concentrated around its
mean.

\subsubsection{The Concentration Bound for Energy}
\label{sec:energy-conc}

Fix some $v \in G$. We want to show that $\sum_{u \in G} \nrgy{uv}{\pa'}
\one{u \in G'}$ is concentrated around its mean. The idea is simple:
denoting $Q_u = \nrgy{uv}{p'}$ and $R_u = \one{u \in G'}$, and letting
$q_u = \Exp[Q_u], r_u = \Exp[R_u]$ being their expectations, the triangle inequality gives us that
\[ |\sum_u Q_uR_u - \Exp[Q_uR_u]| \leq |\sum_u (Q_u - q_u)\,R_u| + |\sum_u
q_u(R_u - r_u)| + |\sum_u q_ur_u - \Exp[Q_uR_u]| \]
The following three claims now bound the three expressions on the right.

\begin{claim}
  \label{clm:energy1whp}
  $\abs{\sum_{u \in G} \nrgy{uv}{\pa'} \one{u \in G'} - \sum_{u \in G}
    \Exp[\nrgy{uv}{\pa'}] \one{u \in G'}} \leq O(\theta \Delta^{-7\e} \ln
  \Delta)$ \whpD.
\end{claim}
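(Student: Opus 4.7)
The plan is to show that
\[ \Xi := \sum_{u \in G}\bigl(Q_u - q_u\bigr)\,R_u \]
has $|\Xi| = O(\theta\Delta^{-7\e}\ln\Delta)$ \whpD, where $Q_u := \nrgy{uv}{\pa'}$, $q_u := \Exp[Q_u]$, and $R_u := \one{u \in G'}$. I would carry this out in three steps.

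\emph{Per-term control.} First, combining the stopping rule (which ensures $\pa'\ug \le \pstar = 2r\phat$) with invariant~\ref{inv:p} (which gives $\sum_\g \pa'\vg \le p'(\P_v)\le 2$) yields
\[ Q_u \;\le\; \pstar\cdot p'(\P_v) \;=\; O(r\phat), \]
so each summand $(Q_u-q_u)R_u$ is $O(r\phat)$-bounded. Moreover, by Claim~\ref{clm:energy1} and invariant~\ref{inv:e}, the total expected magnitude obeys $\sum_u q_u \le \Exp[\nrg_G(v,\pa')]\le\nrg_G(v,\pa)\le 2K$.

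\emph{Structural decoupling.} Second, every $Q_u$ and every $R_u$ is a function of the atomic independent random variables $\{A_\wg,\eta_\wg,S^w_\g\}$, with those contributing to $Q_u$ localized to $N(u)\cup\{u\}\cup N(v)\cup\{v\}$ (through the factors $\pa'\vg$ and $\pa'\ug$) and those contributing to $R_u$ localized to $N(u)\cup\{u\}$. By conditioning on the atoms inside $\{v\}\cup N(v)$, the numbers $\pa'\vg$ become constants and each residual $Q_u - q_u$ now depends only on atoms outside this inner set. Two terms with $u\ne u'$ then share atomic variables only when $u$ and $u'$ have a common neighbor, giving an interaction graph of degree at most~$\Delta$ on the index set.

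\emph{Concentration.} Third, I would apply the polynomial tail bound of Theorem~\ref{thm:ss-poly} (alternatively, a Freedman/Bernstein argument on a martingale formed by exposing the modifiers in a carefully chosen order) to the conditioned expression, using the per-term bound $O(r\phat)$ and the total-mass bound $\sum_u q_u \le 2K$ to control the zeroth and first moment parameters, and the limited dependency structure above for the higher-moment terms. Plugging in $\phat = \Delta^{-3/4-5\e}$, $r\le\Delta^\e$, $\theta = \Delta^{-1/4+2\e}$, and $K = O(\polylog\Delta)$ from Section~\ref{sec:param}, the resulting deviation is
\[ O\bigl(\sqrt{r\phat\cdot K\cdot\ln\Delta}\;+\;r\phat\cdot\ln^2\Delta\bigr) \;=\; O\bigl(\theta\Delta^{-7\e}\ln\Delta\bigr), \]
with room to spare (for $\e=1/100$). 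As in~\S\ref{sec:lll-arg}, the bad event $\B^E_v := \{|\Xi|>\Theta(\theta\Delta^{-7\e}\ln\Delta)\}$ depends only on atoms within a constant-radius neighborhood of $v$, so the dependency graph has $\poly(\Delta)$ degree and Moser--Tardos handles the union-bound step globally.

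The main obstacle is the strong correlation between $Q_u$ and $R_u$: both involve the modifiers and coin-flips in $N(u)$, so the summands are neither mean-zero nor independent; a direct Chernoff/Hoeffding analysis therefore fails. The conditional decoupling in step two---fixing the atoms at $\{v\}\cup N(v)$ first so that the factors $\pa'\vg$ are frozen and the remaining $Q_u$-randomness factors through disjoint second-neighborhoods---is precisely what lets the concentration bound achieve the polynomial saving $\Delta^{-7\e}$ rather than only a logarithmic improvement.
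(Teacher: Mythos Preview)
Your approach is more elaborate than needed and has a real gap. The paper's proof is one line: since $R_u \in \{0,1\}$, the left-hand side is at most $\sum_{u \sim v} |Q_u - q_u|$, and each $Q_u = \sum_\gamma \pa'\ug\,\pa'\vg$ is a sum over \emph{colors} of independent $[0,\phat^2]$-bounded random variables (the atoms $A_{\cdot,\gamma},\eta_{\cdot,\gamma},S(\cdot,\gamma)$ for distinct colors are disjoint). A direct application of Theorem~\ref{thm:chernoff} to each $Q_u$, a union bound over the $\le\Delta$ neighbors, and then Cauchy--Schwarz on $\sum_u \sqrt{q_u}$ using $\sum_u q_u \le 2K$ gives $O(\phat\sqrt{\Delta K\ln\Delta}) = O(\theta\Delta^{-7\e}\ln\Delta)$. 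No polynomial concentration, no decoupling, no martingales.

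The gap in your argument: Theorem~\ref{thm:ss-poly} applies only to \emph{multilinear polynomials} of the atomic variables, but $\pa'\ug = p\ug\cdot M_\ug\cdot\one{p'\ug\le\phat}$ involves a stopped product and a threshold indicator, and $R_u = \one{u\in G'}$ is a complicated Boolean function of many $A,\eta$ variables---none of these are polynomial, let alone low-degree. Your fallback (``a Freedman/Bernstein martingale in a carefully chosen order'') is not carried out, and your conditioning step on atoms in $\{v\}\cup N(v)$ destroys the centering: conditionally, $\Exp[Q_u\mid\cdot]\neq q_u$, so $(Q_u-q_u)R_u$ is no longer mean-zero and you would need an additional argument bounding the conditional bias. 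Finally, a small slip: $\pa'\ug\le\phat$ (the $\pa'$ indicator truncates at $\phat$, not $\pstar$), so the per-term bound is $\phat^2$, not $r\phat$; this does not break your numerics but shows you missed the point of the $\pa$/$\pa'$ truncation. The key observation you overlooked is the independence across colors, which reduces the problem to the most basic Chernoff bound applied $u$-by-$u$.
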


\begin{proof}
  The left hand side is at most $\sum_{u \in N_G(v)} \abs{
    \nrgy{uv}{\pa'} - \Exp[\nrgy{uv}{\pa'}]}$. Each $\nrgy{uv}{\pa'}$ is a sum of
  independent $\phat^2$-bounded r.v.s, and hence deviates from its mean by at most
  $O(\sqrt{\Exp[\nrgy{uv}{\pa'}] \phat^2 \ln \Delta} + \phat^2 \ln
  \Delta)$ \whpD by the tail bound in Theorem~\ref{thm:chernoff}. Summing this over all $u \in N_G(v)$ (and taking a
  union bound over these $\Delta$ events), we get the total deviation
  \whpD is
  \begin{align*}
    &\sum_{u \in N_G(v)} O(\phat \sqrt{\Exp[\nrgy{uv}{\pa'}] \ln \Delta} +
    \phat^2 \ln \Delta) \\
    &\leq O\bigg(\phat \sqrt{\ln \Delta}\cdot \sqrt{\Delta} \, \sqrt{\Exp\bigg[ \sum_{u \in
      N_G(v)} \nrgy{uv}{p'} \bigg]}\bigg) + O(\phat^2 \Delta \ln \Delta) \\
    &\leq O(\phat \sqrt{\Delta K \ln \Delta} + \phat^2 \Delta \ln
    \Delta).
  \end{align*}
  The first inequality uses Cauchy-Schwarz; the next one uses the
  expectation bound from Claim~\ref{clm:energy1} and
  invariant~\ref{inv:e}. The dominant term is $O(\phat \sqrt{\Delta K
    \ln \Delta}) = O(\theta \Delta^{-7\e} \ln \Delta)$, which completes
  the claim.
\end{proof}

\begin{claim}
  \label{clm:energy2whp}  
  $\abs{\sum_{u \in G} \Exp[\nrgy{uv}{\pa'}] \one{u \in G'} -
    \sum_{u \in G} \Exp[\nrgy{uv}{\pa'}] \Exp[\one{u \in G'}] } \leq 
  O(\theta \Delta^{-2\e} \ln \Delta)$ \whpD.
\end{claim}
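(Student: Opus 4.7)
The quantity to concentrate is $X := \sum_{u} w_u \one{u\in G'}$ with \emph{deterministic} weights $w_u := \Exp[\nrgy{uv}{\pa'}]$. These weights are supported on $u\in N_G(v)$, satisfy $\sum_u w_u \leq \nrg_G(v,\pa)\leq 2K$ by Claim~\ref{clm:energy1} and invariant~\ref{inv:e}, and obey the individual bound $w_u \leq 2\pstar$ (using $\pa'\ug \leq \pstar$ from the stopped-product cap together with $p'(\P_v) \leq 2$ from invariant~\ref{inv:p}). The crucial structural observation is that the per-vertex surrogate $Y_u := 1 - \prod_\gamma(1-A_\ug\eta_\ug)$ already introduced in Section~\ref{sec:degree-conc} depends \emph{only} on coin flips at $u$ itself, so the collection $\{Y_u\}_{u \in N(v)}$ is \emph{mutually independent}; any correlation between the survival events $\one{u \in G'}$ is pushed into a low-order polynomial correction.

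My plan is: first, use the inclusion--exclusion sandwich $Y_u - Y'_u \leq \one{u\text{ coloured}} \leq Y_u$ from Section~\ref{sec:degree-conc} (with $Y'_u := \sum_\gamma\sum_{u'\sim u} A_\ug A_{u',\gamma}$) to bracket $X$ between $\sum_u w_u(1-Y_u)$ and $\sum_u w_u(1-Y_u) + \sum_u w_u Y'_u$. Second, handle the dominant term $\sum_u w_u Y_u$: it is a weighted sum of independent $[0,2\pstar]$-bounded random variables with mean at most $\sum_u w_u \cdot \theta \leq 2K\theta$, so Theorem~\ref{thm:chernoff} gives deviation $O(\sqrt{K\theta\pstar\ln\Delta} + \pstar\ln\Delta)$ \whpD. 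Substituting the parameters of Section~\ref{sec:param} and using $r\leq \Delta^\e$, one verifies this is $O(\theta\Delta^{-2\e}\ln\Delta)$ with room to spare. Third, handle the correction $\sum_u w_u Y'_u$: its expectation is $\theta^2 \sum_u w_u\,\nrg(u,\pc)\leq 4K^2\theta^2$, already comfortably below the target deviation, and its own concentration (as a degree-two polynomial in the independent Bernoullis $\{A_\cdot\}$) follows from Theorem~\ref{thm:ss-poly} applied exactly as in Section~\ref{sec:degree-conc}.

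The main obstacle is resisting the temptation to apply a generic dependence-tolerant inequality such as McDiarmid directly to the correlated indicators $\{\one{u\in G'}\}_{u\in N(v)}$: the individual bounded-difference coefficients, of order $\Delta\cdot\pstar$, are far too coarse to give the desired $\Delta^{-2\e}$ saving. The point of the Johansson surrogate is exactly that the per-vertex $Y_u$ decouple (since tentative colourings are chosen independently at each vertex), so a direct Chernoff bound handles the dominant term, while the genuine correlation among survivals is quarantined inside the quadratic polynomial $\sum_u w_u Y'_u$ whose tiny mean of $O(K^2\theta^2)$ makes even its deviations negligible. Once the per-$v$ bound is established, the standard Lov\'asz Local Lemma argument of Section~\ref{sec:lll-arg} upgrades it to all vertices simultaneously: the bad event at $v$ depends only on basic random variables for vertices within distance two of $v$, yielding a dependency graph of degree $\poly(\Delta)$.
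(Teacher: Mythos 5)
Your proof is correct and is essentially the paper's own argument: the paper likewise treats $c_u := \Exp[\nrgy{uv}{\pa'}]$ as deterministic weights supported on $N_G(v)$ and reduces the claim to the degree-concentration argument of \S\ref{sec:degree-conc} (the $Y_u$/$Y'_u$ sandwich with independent per-vertex $Y_u$, Chernoff for the linear part, the Schudy--Sviridenko polynomial bound for the quadratic correction, then the Local Lemma), noting only that the summands are now $[0,2\phat]$-bounded instead of $\{0,1\}$-valued. The one cosmetic difference is that you cap the weights by $2\pstar$ via the stopped-product bound, whereas the paper gets the sharper $c_u \leq 2\phat$ from Claim~\ref{clm:energy1} (indeed $\pa'\ug \leq \phat$ by the truncation); this costs a factor $2r \leq 2\Delta^{\e}$, but as your parameter check shows every term still falls below $O(\theta \Delta^{-2\e} \ln \Delta)$, provided you remember to scale the additive tail terms by the weight cap exactly as you did for the Chernoff step.
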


\begin{proof}
  By Claim~\ref{clm:energy1}, each term 
  \begin{gather}
    \Exp[\nrgy{uv}{\pa'}] \leq \nrgy{uv}{\pa} = \sum_\g \pa\ug \pa\vg \leq
    \phat \sum_\g \pa\vg \leq 2\phat. \label{eq:jo-8}
  \end{gather}
  Using $c_u :=
  \Exp[\nrgy{uv}{\pa'}]$ and observing that $c_u = 0$ for all $u \not\in
  N_G(v)$, we want to bound the deviation
  \[ \ts \abs{ \sum_{u \in N_G(v)} c_u \left( \one{u \in G'} - \Pr[ u
      \in G'] \right) }. \] The argument now follows that of
  \S\ref{sec:degree-conc}, with the only difference that the
  variables are in $[0,2\phat]$ rather than $\{0,1\}$, which merely
  multiplies the deviation from the mean by a factor of $2\phat$. Hence,
  \whpD, we have
  \[ \ts \abs{ \sum_{u \in N_G(v)} c_u \one{u \in G'} - \sum_{u \in
      N_G(v)} c_u \Pr[ u \in G']} \leq O(\theta^2 K \phat \Delta) =
  O(\theta \Delta^{-2\e} \ln \Delta). \]
  This proves the claim. 
\end{proof}

\begin{claim}
  \label{clm:energy3whp}
  $\abs{\sum_{u \in G} \Exp[\nrgy{uv}{\pa'}] \Exp[\one{u \in G'}] - \sum_{u \in G}
    \Exp[\nrgy{uv}{\pa'} \one{u \in G'}] } \leq O(\theta \Delta^{-1/2})$ \whpD.
\end{claim}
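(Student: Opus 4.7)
The plan is to observe that the quantity inside the absolute value is entirely deterministic---both terms are expectations---so the $\whpD$ tag is vacuous and we can establish the bound outright. The approach will leverage the key identity already proved in the course of Claim~\ref{clm:energy2}, which essentially says that $\pa'\ug \pa'\vg$ and $\one{u \in G'}$ are nearly independent: all the ``correlation'' between them is mediated by a single low-probability event.

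Concretely, equation~(\ref{eq:jo-6}) from the proof of Claim~\ref{clm:energy2} gives
\[
\Exp[\pa'\ug \pa'\vg \,\one{u \in G'}] = \Exp[\pa'\ug \pa'\vg] \cdot \Pr[\,\C \cap (\{u\} \times (L \setminus \{\gamma\})) = \emptyset\,].
\]
Since $\Pr[u \in G'] = \Pr[\C_u = \emptyset]$ and the two probability events differ by exactly the event $\{\C_u = \{(u,\gamma)\}\}$, the per-$(u,\gamma)$ discrepancy is exactly $\Exp[\pa'\ug \pa'\vg] \cdot \Pr[\C_u = \{(u,\gamma)\}]$, which is nonnegative (so the absolute value is automatic and we only need to upper bound a sum of positive terms).

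The first step is to bound this probability: the event $\{\C_u = \{(u,\gamma)\}\}$ forces $A_{u\gamma} = 1$, so its probability is at most $\theta \pc\ug \leq \theta \phat$. Next, summing over $\gamma$ bounds the per-$u$ discrepancy by $\theta\phat \cdot \Exp[\nrgy{uv}{\pa'}]$, which by Claim~\ref{clm:energy1} is at most $\theta\phat \cdot \nrgy{uv}{\pa}$. Summing then over $u \in N_G(v)$ and invoking invariant~\ref{inv:e} yields a total discrepancy of at most $\theta\phat \cdot \nrg_G(v,\pa) \leq 2K\theta\phat$.

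Finally, plugging in parameters gives $2K\theta\phat = O(K \cdot \Delta^{-1 - 3\e})$, which, since $K = O(\ln\Delta / \ln r)$, is far smaller than the target $\theta \Delta^{-1/2} = \Delta^{-3/4 + 2\e}$ by a polynomial factor in $\Delta$. There is really no obstacle here: all the conceptual work was done when Claim~\ref{clm:energy2} established the near-independence identity~(\ref{eq:jo-6}), and the present claim just quantifies the small correction coming from the one ``bad'' configuration at~$u$ where $\C_u$ reduces to the singleton $\{(u,\gamma)\}$---in which case $\eta_{u\gamma}=1$ kills the factor $M^u_\vg$, contributing zero to the joint expectation but a nonzero term to the product of marginals.
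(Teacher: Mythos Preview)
Your proof is correct and follows essentially the same route as the paper: both arguments reduce to the identity~(\ref{eq:jo-6}) together with~(\ref{eq:jo-3}), which show that the per-$(u,\gamma)$ discrepancy equals $\Exp[\pa'\ug\pa'\vg]\cdot\Pr[\C_u=\{\ug\}]\le \Exp[\pa'\ug\pa'\vg]\cdot\theta\phat$. The only minor difference is in the final summation: the paper bounds $\sum_{u\sim v}\Exp[\nrgy{uv}{\pa'}]$ crudely by $\Delta\cdot 2\phat$ via~(\ref{eq:jo-8}), arriving at $2\Delta\theta\phat^2=O(\theta\Delta^{-1/2})$, whereas you invoke Claim~\ref{clm:energy1} and invariant~\ref{inv:e} to get the sharper bound $2K\theta\phat$; both are comfortably below the target.
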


\begin{proof}
  Using~(\ref{eq:jo-6}) and~(\ref{eq:jo-7}), summing over all colors $\g$, and
  using the fact that $\Pr[ A_\ug = 1 ] \leq \theta \phat$ for all $\g$, 
  \[ 
  \Exp[ \nrgy{uv}{\pa'}] \Exp[ \one{u \in G'} ] \leq \Exp[ \nrgy{uv}{\pa'}
  \one{u \in G'} ] \leq \Exp[ \nrgy{uv}{\pa'}] \left( \Exp[ \one{u \in G'}
    ] + \theta \phat \right). \] Rearranging, summing over all $u \in
  G$, and using that $\Exp[\nrgy{uv}{\pa'}] \leq
  2\phat$ by the calculation in~(\ref{eq:jo-8}), 
  \begin{align}
    0 \leq \sum_{u \in G}
    \Exp[\nrgy{uv}{\pa'} \one{u \in G'}] - \sum_{u \in G}
    \Exp[\nrgy{uv}{\pa'}] \Exp[\one{u \in G'}] &\leq \sum_{u \in G}
    \Exp[\nrgy{uv}{\pa'}] \theta \phat \\
    &\leq \Delta \cdot 2\theta \cdot \phat^2.
  \end{align}
  This is at most $O(\theta \Delta^{-1/2})$, which proves the claim.
\end{proof}

Putting Claims~\ref{clm:energy1whp}--\ref{clm:energy3whp} together, we
get that \whpD
\begin{align}
  \sum_{u \in G} \nrgy{uv}{\pa'} \one{u \in G'} &\leq \sum_{u \in G}
    \Exp[\nrgy{uv}{\pa'} \one{u \in G'}] + O(\theta \Delta^{-2\e} \ln \Delta) \\
  &\leq (1 - \bee \theta) \cdot \nrgy{uv}{\pa} + O(\theta
  \Delta^{-2\e} \ln \Delta) 
\end{align}
where the latter inequality follows from Claims~\ref{clm:energy1}
and~\ref{clm:energy2}.

Finally, for the LLL, the bad events in this case are again dependent
only on the random choices within distance $2$ of $v$, which means the
dependency is $O(\Delta^4 s)$.

\subsection{Behavior after \texorpdfstring{$T$}{T} rounds, and Maintaining 
        Invariants}
\label{sec:inv}

In the previous sections, we showed that if the invariants
\ref{inv:p}--\ref{inv:H} held at the beginning of a specific round, then
using the LLL we can ensure that they hold at the end of the lemma, with
some additional loss. Using this, we now show that the bounds we have
derived suffice for the invariants hold at each round in $[1..T]$. For
all these bounds we use that $\Delta$ is large enough~(\ref{eq:jo-delta-bound}).

\begin{itemize}
\item \textbf{Probability.} After each round, the probability value may
  increase by $\lambda_P$. This means after $T$ rounds,
  \[ p^t(\P_v) \leq 1 \pm T \lambda_P = 1 \pm \frac{2\e \ln(\Delta)}{a\theta} \ln
  \Delta \cdot \theta \Delta^{-1/8-4\e} \sqrt{\ln \Delta} = 1 \pm
  \sqrt{\theta}. \]

\item \textbf{Degree.} In each round, the degree falls by a
  multiplicative factor of $(1 - a\theta)$, but
  potentially increases by an additive term of $\lambda_D$. This means
  that after $t$ rounds, the degree can be (wastefully) bounded
  by
  \[ d(v, G^t) \leq (1 - a\theta)^t \cdot \Delta +
  t \lambda_D \leq e^{-a\theta t} \Delta + O(K \theta^2 \Delta t) \]
  Now for $t = T$, we get that
  \begin{gather}
    d(v, G^t) \leq \Delta e^{-2 \e \ln \Delta} + O(K \theta \Delta) \leq
    \Delta^{1-\e}. \label{eq:jo-2}
  \end{gather}

\item \textbf{Energy.} Again the energy falls by a $(1-\bee\theta)$ factor
  but potentially increases by an additive term of $\lambda_E$. This
  time we use a slightly better bound\footnote{Given a system $x_{t+1}
    \leq \alpha x_t + \beta$ for $\alpha \leq 1$, we know that $x_t
    \leq \alpha^t x_0 + \frac{\beta}{1 - \alpha}$.} of
  \begin{align}
    \nrg(v, \pa^t) &\leq (1 - \bee\theta)^t \nrg(v, \pa^0) + \frac{\lambda_E}{\bee
      \theta} \label{eq:jo-10} \\
    &\leq e^{-\bee\theta t} K + O(\Delta^{-2\e} \ln \Delta) \leq
    2K.
  \end{align}

\item \textbf{Entropy.} From Lemmas~\ref{lem:entropy}
  and~\ref{lem:entropy-whp} we get that
  \begin{align*}
    h(v, p^t) &\geq h(v, p^0) - \theta C \sum_{t' \leq t} \nrg(v,
    p^{t'}_a) - t \lambda_H\\
    &\geq (\ln \Delta - \ln K) - \theta C \sum_{t' \leq t} \left( (1 - \bee\theta)^{t'}
    \nrg(v, \pa^0) + \frac{\lambda_E}{\bee\theta} \right) - t \sqrt{\pstar \ln^3 \Delta} \\
    &\geq (\ln \Delta - \ln K) - \theta C \left( 
    \frac{\nrg(v, \pa^0)}{\bee\theta} + \frac{t \lambda_E}{\bee\theta} \right) - t \sqrt{\pstar \ln^3 \Delta} \\
    &\geq (\ln \Delta - \ln K) -
        \frac{KC}{\bee } - \frac{t C \,\theta \cdot
      O(\Delta^{-2\e} \ln \Delta)}{\bee}  - t \sqrt{\pstar \ln^3 \Delta} 
  \end{align*}
  For $t \leq T = \frac{\e}{a\theta} \ln \Delta$, we get
  \begin{align*}
    h(v, p^t) &\geq (\ln \Delta - \ln K) - \frac{KC}{\bee}- 
    \frac{2\e \ln(\Delta)}{a\cdot \bee} O( \Delta^{-2\e} \ln^2 \Delta) 
    - O(\Delta^{-1/8} \ln^{2.5} \Delta) \\
    &\geq (1 - \e)\ln \Delta.
  \end{align*}

\end{itemize}

By~(\ref{eq:jo-2}), the degree of all surviving vertices falls below
$\Delta^{1-\e}$ after $T$ rounds, whence we can color them using
$\Delta^{1-\e}$ more colors. Hence the total number of colors used is $s
+ \Delta^{1-\e} = O(\frac{\Delta}{K}) = O(\frac{\Delta \cdot C}{\ln
  \Delta})$. It remains to bound the parameter $C$: since the \morvs
$M^w_\vg$ either take on value $0$ or $2r$, we know that
$\kappa(M^w_\vg) = \ln (2r) = \ln 2 + \ln r$. This means the number of
colors used is
\[ O\left(\frac{\Delta}{\ln \Delta} \; \ln r\right) \]
This completes the proof of Theorem~\ref{thm:jo-local}.

\section{\texorpdfstring{$K_r$}{K-r}-free Graphs}
\label{sec:kr-coloring}

The above analysis was tailored for graphs where each neighborhood is
$r$-colorable. To color $K_r$-free graphs, we use different modifiers
which give a weaker guarantee of $O(\frac{\Delta}{\ln \Delta}\, (r^2 + r
\ln \ln \Delta))$ colors as claimed in Theorem~\ref{thm:jo-kr}. Since a
coloring algorithm with $s$ colors gives an independent set of size
$n/s$, this result matches Shearer's bound for independent sets for
values of $r \in O(\ln \ln \Delta)$.

In this section, we will assume that $r \leq  c' \sqrt{\log \Delta}$ for
some suitably small constant $c'$; for values of $r \geq c' \sqrt{\log
  \Delta}$, the quantity $O(r^2 + r \ln \ln \Delta)$ is $\Omega(\ln
\Delta)$, and the trivial $\Delta$-coloring satisfies
Theorem~\ref{thm:jo-kr}. Again, we assume that $\Delta$ is a suitably
large constant.

\subsection{The Algorithm}
\label{sec:kr-algo}

The algorithm in this case is very similar in structure to that in
\S\ref{sec:randproc}. The only difference is in the modifiers: we
replace Step~(5) of that algorithm by the following:
\begin{enumerate}
\item[5'.] For each pair $\wg$, generate modifiers $M^w_\vg$ for all $v
  \sim w$ as follows:
  \begin{itemize}
  \item If $\wg \not \in \T$ (i.e., $A_\wg = 0$) then $M^w_\vg = 1$
    for all $v \in N(w)$.
  \item Else, if $\wg \in \T$, then let $H = G[N(w)]$ be the graph
    induced on the neighbors of $w$. Use Theorem~\ref{thm:kr-mod} on this
    graph $H$, with $c = \frac14$, and with values $\{ p(v, \g) \}_{v
      \in V(H)}$. This generates modifiers $\Mtilde^w_\vg$ for
    each $v \in V(H) = N(w)$, and then define
    \begin{gather} 
     M^w_\vg = \underbrace{2(1-\eta_\wg)}_{m.o. r.v.}
        \cdot \underbrace{\Mtilde^w_\vg}_{m.o. r.v.} \notag
    \end{gather}
  \end{itemize}
\end{enumerate}

We emphasize that we invoke the procedure in Theorem~\ref{thm:kr-mod}
once for each $\wg \in \T$. A few comments on the new modifiers:
\begin{itemize}
\item By Theorem~\ref{thm:kr-mod}(P3) and the fact that $c = \frac14$,
  each modifier has
  \[ \ts \kappa(M^w_\vg) \leq O\big(r^2 + r \ln \big(\sum_{v \sim w}
  p\vg \big) \,\big)\] as long as $\wg$ is a tentatively chosen pair;
  $\kappa(M^w_\vg) = 0$ if $\wg$ is not tentatively chosen. But the pair
  $\wg$ is tentatively chosen with probability $\theta p_c\wg$, and
  $p_c\wg \neq 0$ implies that the probability is at most $\crowd$. This
  implies that $\kappa(M^w_\vg) \leq O(r^2 + r \ln \ln \Delta)$.
\item The maximum value of the modifer is given by
  Theorem~\ref{thm:kr-mod}(P2), and again using the bound on total
  probability of any neighborhood, we get that
  \begin{gather}
    \khat(M^w_\vg) \leq O(\ln \Delta)^r \cdot O(1)^{r^2}. \label{eq:jo-15}
  \end{gather}
  If we define $\khat := \max_{w,v,g} \khat(M^w_\vg)$, we can use this
  to again define $\pstar := 2\khat \phat$, just the value of $\khat$
  has changed.  In the analysis for the $r$-local-colorability case, we
  used that $\pstar$ was only greater than $\phat$ by a $\khat = 2r \leq
  \Delta^\e$ factor. This time $\khat$ is given by~(\ref{eq:jo-15}), but
  for values $r \leq c'' \sqrt{\ln \Delta}$ for some suitably small
  constant $c''$, $\khat$ is again $\Delta^\e$ and we can use $\pstar
  \leq \phat \Delta^\e$ as before.
\end{itemize}

The runtime: Constructing the modifier requires $\poly(\Delta) \cdot r$ time and
we construct a modifier for each vertex at most $T$ times. Hence, this algorithm runs in 
$O(n\cdot r \cdot \poly(\Delta))$

\subsection{The Altered Parameters}
\label{sec:kr-param}

The analysis remains very similar, we just indicate the changes (in blue). The
parameters change slightly from \S\ref{sec:param}: we now set $c :=
\frac14$ (as mentioned in the algorithm description), and set $b := a -
c - \e$. This affects the value of $K$. 
Moreover, since the modifers $M^w_\vg$ change, the value of 
$C = \max_{w,v,g} \kappa(M^w\vg)$ becomes $O(r^2 +
r \ln \ln \Delta)$ by the discussion above. To summarize, here are the
new parameters; those in blue differ from their counterparts in \S\ref{sec:param}.
\begin{alignat*}{3}
  \e &:= 1/100 & \theta &:= \Delta^{-1/4 + 2\e} & s &:= |L| := \frac{\Delta}{K} \\
  \color{blue} \khat &:= \color{blue} 2^{O(r^2)} \cdot \ln(\Delta)^{O(r)} & \qquad \qquad \phat &:= \Delta^{-3/4 - 5\e} 
        & \color{blue}\qquad\qquad \pstar &:= \color{blue} \khat \phat  \\
  a &:= 1/2 - 3\e &  \color{blue}\cee &:= \color{blue}1/4 & \color{blue}\bee &:= \color{blue}a-\cee -\e \\
  \color{blue}C &:=  \color{blue}O(r^2 + r \ln \ln \Delta) & T &:= \frac{2\e}{a \cdot \theta} \ln \Delta 
    &\qquad\qquad \color{blue}K &:= \color{blue}\frac{(b/4) \e \ln \Delta}{C}
\end{alignat*}

\subsection{The Analysis}
\label{sec:kr-analysis}

Upon closer inspection, the proofs of invariants~\ref{inv:p},
\ref{inv:d} and~\ref{inv:H} go through verbatim, since they do not use
any properties of the modifiers being used.  Only the analysis of
the invariant~\ref{inv:e} bounding the energy needs to be changed. In
particular, Claim~\ref{clm:energy1} no longer holds, and we must use the 
slightly weaker claim below (we will defer the proof to the end of the section).
\begin{claim}
  \label{clm:energy3}
  For any $v$, $\Exp[\nrg_G(v, \pa')] = \sum_{u \in G} \Exp[\nrgy{uv}{\pa'}] 
      \leq \left(1+(c+\e) \theta \right) \nrg_G(v,\pa)$.
\end{claim}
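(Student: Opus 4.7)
The plan is to follow the structure of the proof of Claim~\ref{clm:energy1} with only one modification: the triangle contribution must now tolerate a mild positive edge correlation, since the trimming modifier of Theorem~\ref{thm:kr-mod} need not vanish on edges the way the color-class modifier did.

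First, I would use Fact~\ref{fct:stopped} to replace the stopped products by ordinary products, and split the product $\prod_{w \sim u} M^w_\ug \prod_{w \sim v} M^w_\vg$ into three independent groups: $w \in N(u)\setminus N(v)$, $w \in N(v)\setminus N(u)$, and the common neighbors $w \in N(u)\cap N(v)$ (triangles $uvw$). The first two groups are independent products of mean-one random variables, each contributing factor $1$ in expectation, so
\[
\Exp[\pa'\ug\,\pa'\vg] \;\le\; p\ug\,p\vg \cdot \prod_{w \in N(u)\cap N(v)} \Exp\!\bigl[M^w_\ug\, M^w_\vg\bigr].
\]

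Next, I would bound each triangle factor. Writing $M^w_\vg = (1-\one{\wg\in\T}) + \one{\wg\in\T}\cdot 2(1-\eta_\wg)\,\Mtilde^w_\vg$, using $\Pr[\wg\in\T]=\theta\,\pc\wg$ and $\Exp[4(1-\eta_\wg)^2]=2$, and invoking the edge-correlation guarantee of the trimming modifier of Theorem~\ref{thm:kr-mod} (which controls $\Exp[\Mtilde^w_\ug\,\Mtilde^w_\vg]$ for adjacent $u,v$ in $N(w)$, keeping it only slightly above $\tfrac12$), I expect an estimate of the form
\[
\Exp\!\bigl[M^w_\ug\, M^w_\vg\bigr] \;\le\; 1 + 2c\,\theta\,\pc\wg.
\]
In the locally-colorable setting the analogous step used $\Mtilde^w_\ug \Mtilde^w_\vg \equiv 0$ on edges and effectively had $c=0$; here the trimming modifier only provides weak anti-correlation, hence a positive constant $c$.

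Finally, I would take the product over triangles via $\prod(1+x_w)\le\exp(\sum x_w)\le 1+2\sum x_w$, multiply by $p\ug p\vg$, and sum over $\gamma$ and $u\sim v$. The resulting excess over $\nrg_G(v,p)$ is at most
\[
O(c)\,\theta \sum_{u\sim v}\sum_{\gamma} p\ug\,p\vg \sum_{w\in N(u)\cap N(v)} \pc\wg,
\]
which one bounds using the crowdedness cap baked into $\pc$ (whenever $\pc\wg>0$ the total color-$\gamma$ mass around $w$ is at most $\crowd$) together with the energy invariant~\ref{inv:e}. After absorbing lower-order terms using that $\Delta$ is large, this gives the claimed factor $(1+(c+\e)\theta)$.

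The main obstacle is precisely the calibration in the triangle step: in Claim~\ref{clm:energy1} the color-class modifier annihilated on edges, making each per-triangle factor trivially at most $1$, whereas here the trimming modifier only gives quantitative correlation control. Matching the constant $c$ from Theorem~\ref{thm:kr-mod} to the $c=1/4$ in the parameter table, and verifying that the aggregated per-edge positive slack across all triangles and colors fits into the $(c+\e)\theta$ budget --- via the crowdedness cutoff and invariant~\ref{inv:e} --- is the heart of the argument.
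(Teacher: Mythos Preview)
Your outline has a genuine gap at the triangle step. You write that Theorem~\ref{thm:kr-mod} ``controls $\Exp[\Mtilde^w_\ug\,\Mtilde^w_\vg]$ for adjacent $u,v$ in $N(w)$, keeping it only slightly above $\tfrac12$'', and from this you extract a per-triangle bound $\Exp[M^w_\ug M^w_\vg]\le 1+2c\,\theta\,\pc\wg$. But property~(P1) is \emph{not} a pointwise bound on $\Exp[\Mtilde^w_\ug\Mtilde^w_\vg]$; it is an aggregate bound
\[
\sum_{u\sim_H v} 2\,p(u)\,p(v)\,\Exp[\Mtilde^w(u)\Mtilde^w(v)] \;\le\; c\cdot p(v),
\]
weighted by the $p$-values. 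Individual correlations $\Exp[\Mtilde^w_\ug\Mtilde^w_\vg]$ can be as large as $\khat(\Mtilde)$, and~(P1) says nothing to the contrary. So your per-triangle factor, and hence your product-to-sum step $\prod(1+x_w)\le 1+2\sum x_w$, is unsupported.

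The paper's proof handles this differently: it keeps the raw correlations $\Exp[\Mtilde^w_\ug\Mtilde^w_\vg]$ inside each triangle factor, expands the product over triangles to first order (bounding the higher-order terms crudely by $(\theta\phat\khat)^2=o(\Delta^{-2})$), multiplies by $\pa\ug\pa\vg$, and sums over $\gamma$ and over $u\sim v$. The key move is then to \emph{interchange the order of summation} so that the innermost sum runs over $u$ for a fixed $w\sim v$; only at that point does the expression match the left side of~(P1), yielding $\sum_u 2\,\pa\ug\,\pa\vg\,\Exp[\Mtilde^w_\ug\Mtilde^w_\vg]\le c\,\pa\vg$. What remains is exactly $c\theta\sum_\gamma\sum_{w\sim v}\pa\wg\,\pa\vg = c\theta\,\nrg_G(v,\pa)$, which is precisely the multiplicative excess you need. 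Your proposed endgame --- bounding $\sum_{w\in N(u)\cap N(v)}\pc\wg$ via the crowdedness cap and then invoking~\ref{inv:e} --- would at best introduce an extra $\crowd=100\ln\Delta$ factor, giving $O(c\theta\ln\Delta)\,\nrg_G(v,\pa)$ rather than $(c+\e)\theta\,\nrg_G(v,\pa)$, so it would not close even if the per-triangle bound were available.
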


Combining Claims \ref{clm:energy3} and \ref{clm:energy2},
\[ \sum_{u\in G'}\Exp[\nrg(uv,\pa')] \leq \left(1-(a-c-\e)\theta \right)
\sum_{u\in G} \nrg(uv,\pa) = \left(1-\bee\theta \right) \sum_{u\in G}
\nrg(uv,\pa),
\]
since we redefined $b$ to be $a-c-\e$. This is the analog of
(\ref{eq:jo-14}); we now need to show the concentration. And indeed, the
argument in \S\ref{sec:energy-conc} is almost independent of the
modifiers, except for the use of Claim~\ref{clm:energy1} in the proof of
Claim~\ref{clm:energy1whp}; we can now use Claim~\ref{clm:energy3} instead
and get identical results up to changes in the constants (which are absorbed
in the \whpD claims). This proves that invariant \ref{inv:e} also
holds. The rest of the analysis is unchanged for the new modifiers.
Finally, the number of colors used is $O(\frac{\Delta}{K}) =
O(\frac{\Delta \cdot C}{\ln \Delta})$ again; plugging in the value of
$C$ calculated above gives Theorem~\ref{thm:jo-kr}.

It only remains to prove Claim~\ref{clm:energy3}, which we do next, and
to give the construction of the new modifers, which appears in the next
section \S\ref{sec:kr-modifier}.

\begin{proofof}{Claim~\ref{clm:energy3}}
  We begin as in the proof of Claim~\ref{clm:energy1}. Using Fact~\ref{fct:stopped}, 
  we can replace the stopped product with unstopped products, i.e.~for 
  any nodes $u,v \in G$, and color $\gamma \in L$
  \begin{align}
     \Exp[\pa'\ug \pa'\vg] 
    &\leq \pa\ug \pa\vg \Exp\left[ \prod_{w \sim u} M^w_\ug \prod_{x \sim v}
      M^x_\vg\right]. \notag
  \end{align}
    Using independence of the random variables, 
  \begin{gather}
      \Exp \left[ \bigg( \prod_{w\sim u} M^w_\ug \bigg) \bigg(
        \prod_{w\sim v} M^w_\vg\bigg) \right] = \prod_{w\sim 
    v:w\not\sim u} \Exp \left[ M^w_\ug \right] \prod_{w\sim v:w\not\sim
    u} \Exp \left[ M^w_\vg \right] \prod_{w:uvw\in \triangle} \Exp
  \left[ M^w_\ug M^w_\vg \right].  \label{eqn:energy-expan}
  \end{gather}
  The expectations in the first two
  products are~$1$, so focus on the expectations in the last product:
  \begin{align*}
    &\Exp\left[ M_{\ug}^w M_{\vg}^w\right] \\
    &= \Exp\left[ M_{\ug}^w M_{\vg}^w \mid w \notin
      \mathcal{T}_v \right] \cdot \Pr[ w\notin \mathcal{T}_v] +
    \Exp\left[ M_{\ug}^w M_{\vg}^w
    \mid w \in \mathcal{T}_v \right] \cdot
    \Pr[ w\in \mathcal{T}_v]
  \end{align*}
  When $w \notin \mathcal{T}_v$, then all modifiers $\{M^w_\vg\}_v$ have value
  $1$. On the other hand, if $w\in \mathcal{T}_v$, then either $\eta_\wg
  = 1$ (w.p.\ half) and $M^w_\vg M^w_\ug = 0$, or else $\eta_\wg = 0$
  (also w.p.\ half) and $M^w_\vg M^w_\ug = 4\Mtilde^w_\vg
  \Mtilde^w_\ug$. Moreover, $\Pr[w\in \mathcal{T}_v] =\theta
  p_c\wg \leq \theta \pa\wg$. Plugging this into the expression above,
  \begin{align*}
    \Exp\left[ M^w_\ug M^w_\vg \right] &= 
    1\cdot (1 - \theta p_c\wg) + 
    \ts\frac12 \cdot \Exp\left[ 4\Mtilde^w_\vg
      \Mtilde^w_\ug\right] \cdot \theta p_c\wg \\
    &\leq 1 + 
    2 \, \Exp\left[ \Mtilde^w_\vg
      \Mtilde^w_\ug\right] \cdot \theta \pa\wg 
  \end{align*}
  Taking the product over all $w$ that are common neighbors of $u,v$,
  \begin{align*}
   \prod_{w: uvw \in \triangle} & \left(1 + 
      2 \, \Exp\left[ \Mtilde^w_\vg
        \Mtilde^w_\ug\right] \cdot \theta \pa\wg \right) \\
    &= 1 + \sum_{w: uvw \in \triangle} 2 \theta \pa\wg \cdot
      \Exp[\Mtilde^w_\ug \Mtilde^w_\vg] 
      +o(\Delta^{-2-\e}) \\
    &\leq (1 +  \e\theta) + \sum_{w: uvw \in \triangle} 2 \theta \pa\wg \cdot
      \Exp[\Mtilde^w_\ug \Mtilde^w_\vg] 
  \end{align*}
  In the second expression, the second-order terms in the product
  expansion are bounded by $(\theta \cdot \phat \cdot \Delta^{\e})^2 \in
  o(\Delta^{-2-\e})$, since $\pa\vg \leq \phat$ and $\Mtilde^w_\ug \leq
  \khat \leq \Delta^\e$ (by the comments at the end of
  \S\ref{sec:kr-algo}). The subsequent inequality uses that
  $o(\Delta^{-2-\e})\leq \e \theta $ for large enough $\Delta$.

  Now summing over all colors $\g$ and over all $u \sim v$, we get that
  for vertex $v$,
  \begin{align*}
    \Exp[\sum_{u \sim v} \nrg(uv, \pa') ] &\leq \sum_{u \sim v} \sum_\g
    \pa\ug \pa\vg \left[ (1 + \e\theta) + \sum_{w: uvw \in \triangle} 2\theta\pa\wg
      \, \Exp\left[ \Mtilde^w_\vg \Mtilde^w_\ug \right]  \right] \\
    & \leq (1+\e\theta) \nrg_G(v,\pa) + 
    \sum_{\gamma} \sum_{u \sim v} \sum_{w: uvw \in \triangle} 2\theta
    \pa\ug \pa\vg \, \pa\wg\,
    \Exp[\Mtilde^w_\ug \Mtilde^w_\vg] \\
    \intertext{And now summing over all $w \sim v$ instead of only $w:
      uvw \in \triangle$, and interchanging the summations}
    & \leq (1+\e\theta)\nrg_G(v,\pa) +  \sum_{\gamma} \sum_{w \sim v} \theta   
    \pa\wg \left( \sum_{u \sim v} \pa\vg \pa\ug 
      \cdot 2 \Exp[\Mtilde^w_\ug \Mtilde^w_\vg] \right) \\
    \intertext{Applying Theorem~\ref{thm:kr-mod}(P1) on the inner
      summands,}
    & \leq (1+\e\theta)\nrg_G(v,\pa) + \sum_{\gamma} \sum_{w \sim v} \theta \pa\wg 
    \cdot c\, \pa\vg \\
    &\leq (1+(c+\e)\theta) \nrg_G(v,\pa)
    \end{align*}    
    This completes the proof of Claim~\ref{clm:energy3}. 
\end{proofof}

\subsection{Constructing a \texorpdfstring{$K_r$}{K-r}-free Graph Modifier}
\label{sec:kr-modifier}

\begin{theorem}
  \label{thm:kr-mod}
  Given an integer $t \leq r$, a $K_t$-free graph $H$, a constant $c < 1$
  called the ``contraction'' parameter, values $p: V(H) \to [0,1]$, 
  we can construct modifiers $\{M(v)\}_{v \in V}$ which are \morvs
  that satisfy the following properties:
  \begin{OneLiners}
  \item[(P1)]For every vertex $v \in V(H)$, $\sum_{u \sim v} 2
    p(u)p(v)\, \Exp[M(u)M(v)] \leq c \cdot p(v)$.

  \item[(P2)] The maximum value of any $M(v) \leq \left(\frac{8\,
        p(V)}{c}\right)^{t-2}\cdot 16^{\binom{t-2}{2}}$.

  \item[(P3)] $\Exp[M(v) \log M(v)] = O(t^2 + t \log (p(V)/c))$.
  \end{OneLiners}
  Moreover, this construction works in time  $\poly(|V(H)|)\cdot r$.
\end{theorem}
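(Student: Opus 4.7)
The plan is to build the modifiers by induction on $t$, using the fact that in a $K_t$-free graph the neighborhood of any vertex induces a $K_{t-1}$-free subgraph. The base case $t=2$ is trivial: $H$ has no edges, so setting $M(v)\equiv 1$ makes (P1) vacuous and gives $\khat(M)=1$, $\kappa(M)=0$, matching the bounds at $t=2$. For the inductive step, I would construct the modifier for a $K_t$-free graph $H$ via a two-stage random process. Let $s := 8p(V)/c$. With probability $1-c/8$, declare a null outcome and set $M(v) := \alpha_0$ for a constant $\alpha_0$ close to $1$ chosen so that each $M(v)$ is mean-one. With probability $c/8$, sample a pivot $W$ with $\Pr[W=w]=p(w)/p(V)$, observe that $H[N(W)]$ is $K_{t-1}$-free, and invoke the inductive construction on $H[N(W)]$ (with the same contraction parameter $c$) to obtain modifiers $\Mtilde^W(\cdot)$. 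For $v\in N(W)$ set $M(v):=s\cdot \Mtilde^W(v)$, and for $v\notin N(W)\cup\{W\}$ set $M(v):=0$ (with a natural assignment at $v=W$). Intuitively, this focuses the measure onto a random neighborhood, which is the step that exploits $K_t$-freeness.

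For the verification: property (P1) at an edge $uv$ requires bounding $\Exp[M(u)M(v)]$. This product vanishes in the null outcome and when at most one of $u,v$ lies in $N(W)$; when both do, $W$ is a common neighbor of $u$ and $v$, so $\{u,v,W\}$ forms a triangle in $H$ and the edge $uv$ persists in the $K_{t-1}$-free subgraph $H[N(W)]$. The inductive bound on $\Mtilde^W$ then gives $\sum_{u'\sim v'} 2p(u')p(v')\Exp[\Mtilde^W(u')\Mtilde^W(v')]\leq c\cdot p(v')$ inside $H[N(W)]$, and averaging over the choice of pivot against the $s^2$ scaling reduces to (P1) for $M$; the factor $s=8p(V)/c$ is chosen precisely so this reduction closes up with the desired constant. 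For (P2), $\khat$ picks up a factor of at most $s\leq 8p(V)/c$ per recursion level, plus a level-dependent adjustment contributing the $16^{\binom{t-2}{2}}$ term across $t-2$ levels. For (P3), the chain rule for $\kappa$ from \S\ref{sec:kappa} lets us bound $\kappa(M)$ by $O(\log s)+O(t)$ per level, accumulating to $O(t^2+t\log(p(V)/c))$ after $t-2$ levels. The runtime is $\poly(|V(H)|)$ per level times at most $r$ levels, matching the claim.

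The main obstacle is making the scaling work cleanly. The factor $s$ amplifies the correlation $\Exp[\Mtilde^W(u)\Mtilde^W(v)]$ by $s^2$ at each recursion level, while the probability of choosing a pivot $W$ in the common neighborhood $N(u)\cap N(v)$ contributes only a $1/p(V)$ factor. The ratio $s/p(V)=8/c$ per level is what allows the inductive contraction bound in (P1) to propagate through the recursion, but this requires the constant in $s$ to be set precisely, and equally requires the mean-one property to be maintained exactly at every level. Balancing the per-level contraction in (P1) against the multiplicative growth in (P2) and the entropy growth in (P3), all while preserving mean-one, is the delicate point that determines the exact numerical constants $8$ and $16$ appearing in the theorem statement.
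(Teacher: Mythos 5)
There is a genuine gap: your single-pivot construction is not a valid modifier, and (P1) fails in two places. First, mean-one fails. In your scheme the pivot branch alone contributes $\tfrac{c}{8}\sum_{w\sim v}\tfrac{p(w)}{p(V)}\cdot s\cdot\Exp[\Mtilde^W(v)] = p(N(v))$ to $\Exp[M(v)]$, where $p(N(v))=\sum_{w\sim v}p(w)$ can be much larger than $1$ in the intended application (the neighborhood mass is only capped at $O(\ln\Delta)$ by the crowding threshold). So no nonnegative null value $\alpha_0$ can restore $\Exp[M(v)]=1$; and even when $p(N(v))<1$ the required value $\alpha_0=\frac{1-p(N(v))}{1-c/8}$ is vertex-dependent and may fall below $1$, violating the definition of a mean-one r.v.\ (values in $\{0\}\cup[1,\infty)$). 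Second, your claim that ``the product vanishes in the null outcome'' is false: in the null outcome both endpoints have $M\approx 1$, so the null branch alone contributes roughly $\sum_{u\sim v}2p(u)p(v)\approx 2p(v)\,p(N(v))$ to the (P1) sum, which is not $\leq c\,p(v)$. Third, even the pivot branch does not contract: for a pivot $w$ the selection probability is $\tfrac{c}{8}\tfrac{p(w)}{p(V)}$ while the scaling contributes $s^2=\tfrac{64\,p(V)^2}{c^2}$, so applying the inductive (P1) inside $H[N(w)]$ and summing over common neighbors leaves a factor of order $\tfrac{p(V)\,p(N(v))}{1}$ rather than a contraction; the ratio $s/p(V)=8/c$ you invoke works per pivot only if the selection probability and the scale cancel \emph{exactly}, which your fixed global $s$ does not achieve. (Keeping the contraction parameter equal to $c$ at every level is also inconsistent with the stated (P2) bound, whose $16^{\binom{t-2}{2}}$ term arises precisely from halving $c$ at each level.)

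The paper's construction repairs exactly these points. Instead of one pivot, it samples a whole set $X$ with $|X|\approx p(V)/c$ via dependent rounding (marginals $q(v)=p(v)/c$, negative correlation), partitions $V$ into the pivot neighborhoods $V_1,\dots,V_m$ plus the leftover class $V_0=V\setminus N(X)$, then picks a single class $\alpha$ with probabilities $w_i$ ($w_0=\tfrac34$) and sets $M(v)=\one{\alpha=i}\one{v\in V_i}\tfrac{1}{w_i}M_i'(v)$: the scale $1/w_i$ is applied with probability exactly $w_i$, so mean-one holds by construction regardless of $p(N(v))$, and the recursion uses rescaled values $p'(v)=p(v)/w_i$ and halved contraction $c'=c/2$ so that the inductive (P1) gives $\sum_i w_i\,c'\,p'(v)\leq \tfrac{c}{2}p(v)$. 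The problematic ``both endpoints uncolored by any pivot'' event is handled by negative correlation: $\Pr[v\notin N(X)]\leq\prod_{u\sim v}(1-q(u))$, and the bound $x e^{-x}\leq 1/\mathrm{e}$ makes the $V_0$ contribution at most $\tfrac{c}{2}p(v)$, closing (P1); the factors $8$ and $16$ in (P2)--(P3) then come from $w_i\geq \tfrac{c}{8p(V)}$, $p'(V_i)\leq 8p(V)$, and the per-level halving of $c$. Your induction-on-$t$ skeleton and base case match the paper, but the sampling/normalization mechanism --- which is the actual content of the theorem --- needs to be replaced by something with this exact-cancellation and negative-correlation structure.
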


Remark: The graph $H$ should be viewed as the neighborhood of some vertex $w$.

\begin{proof}
  The proof is via induction. The base case is when $t = 2$; the graph
  $H$ is $K_2$-free (i.e., it has no edges), then we return $M_2(v) = 1$
  for all $v$. This ``trivial modifier'' satisfies the properties above.
  Else $t \geq 3$; in this case we recursively build the modifer. Let $V
  = V(H)$ be the vertex set of the graph.
  \begin{enumerate}
  \item Define $q(v) = p(v)/c$. Sample a set $X \sse V$ of vertices using a
    dependent sampling technique of Gandhi et al.~\cite{GKPS} satisfying
    the following properties:
    \begin{enumerate}[(i)]
    \item $|X| \in \{\floor{q(V)}, \ceil{q(V)}\}$,
    \item $ \Pr[v \in X] = q(v)$, and 
    \item $\Pr[ N(v) \notin X] \leq \prod_{u \sim v} (1-q(u))$.
    \end{enumerate}
    Assumption~\ref{ass:well-def} implies that $q(v) \leq 1$ for all $v$, so
    the step is well-defined.
    
  \item Let $m:=|X|$ and let $X= {x_1,x_2,\dots x_m}$. Define $V_0 := V
    - N(X)$ and let $V_i := N(x_i) - (\cup_{j=0}^{i-1} V_j)$. This
    partitions the vertex set $V$ into $m+1$ sets. For $i \in \{1,
    \ldots, m\}$ the induced graphs $H[V_i]$ are $K_{t-1}$-free;
    however, $H[V_0]$ might still contain a $K_{t-1}$.

  \item Let $A:= \left\{i\mid p(V_i) > c\right\}$ and $B:=[1 \ldots m]
    \setminus A$.  Let $\alpha$ be an r.v.\ with $\Pr[\alpha = i] =
    w_i$, where
    \begin{equation*}
      w_i = \begin{cases}
        \frac{3}{4} & \text{ if $i=0$}  \\
        \frac{p(V_i)}{8\,\sum_{i \in A} p(V_i)} & \text{ if $i \in A$}\\
        \frac{1}{8\,|B|} & \text{ if $i \in B$} 
      \end{cases}
    \end{equation*}

  \item For each $i\in \{1,\ldots,m\}$, recursively construct a modifier
    $M'_i$ on the induced $K_{t-1}$-free graph $H_i := H[V_i]$ using
    contraction parameter $c' = \frac12 c$ and values $p'(v) = p(v)/w_i$
    for all $v \in V_i$. (Assumption~\ref{ass:well-def}(ii) implies that the
    new values $p'$ satisfy the requirements of the Theorem.)  Define
    $M'_0$ to be the trivial modifier assigning value $1$ to all $v \in
    V_0$.  Define
    \[ M(v) = \sum_{i=0}^m \one{\alpha = i} \cdot \one{v \in V_i}
    \cdot \frac{1}{w_i} M'_i(v). \] In other words, the modifier
    picks $\alpha$, defines $M(v) = 0$ for all $v \in V
    \setminus V_\alpha$, and returns a scaled-up version of the
    recursively constructed modifier for $V_\alpha$.
  \end{enumerate}

  In order to ensure the algorithm is well-defined, we need some
  assumptions
  \begin{assumption}
    \label{ass:well-def}
    The construction above satisfies:
    \begin{OneLiners}
    \item[(i)] $q(v) = p(v)/c \leq 1$ for all $v \in V$.
    \item[(ii)] $p'(v) = p(v)/w_i \leq 1$ for all $v \in V$.
    \end{OneLiners}
  \end{assumption}
  We now prove that $M$ satisfies properties (P1)--(P3).

\subsubsection{Satisfying property~(P1)}

We need to show that for $v \in V$, we have
\[ \sum_{u \sim v} 2 p(u) p(v) \Exp[ M(u) M(v)] \leq c\cdot p(v). \]
Note that the expectation is over the random choice of $X$, the choice of
$\alpha$, and the internal randomness of the modifiers (denoted as IR).
\begin{align} 
&  \sum_{u \sim v} 2 p(u) p(v)\, \Exp_{X,\alpha,IR}[ M(u) M(v)]  \\
&= \sum_{u \sim v} 2 p(u) p(v)\, \Exp_{X,\alpha,IR} \left[ M(u) M(v) \mid 
            v \in N(X) \right] \, \Pr_X[v \in N(X)] \notag \\
&\qquad  + \sum_{u \sim v} 2 p(u) p(v)\, \Exp_{X,\alpha,IR} \left[ M(u) M(v) 
    \mid v \notin N(X)\right] \, \Pr_X[v \notin N(X)]  \notag\\
&\leq \sum_{u \sim v} 2 p(u) p(v)\, \Exp_{X,\alpha,IR} \left[ M(u) M(v) \mid 
            v \in N(X) \right] \notag \\
&\qquad  + \sum_{u \sim v} 2 p(u) p(v)\, \Exp_{X,\alpha,IR} \left[ M(u) M(v) 
    \mid v \notin N(X)\right] \, \Pr_X[v \notin N(X)] \label{eq:11}
\end{align}

Let us concentrate on the first summand in~(\ref{eq:11}), and condition
on some $X$ such that $v \in N(X)$; 
\begin{align*}
  &\sum_{u \sim v} 2 p(u) p(v) \Exp_{\alpha,IR} \left[ M(u) M(v)
    \mid v \in N(X), X \right]  \\
  &= \sum_{u \sim v} 2 p(u) p(v) \Exp_{\alpha,IR}
     \left[ \sum_{i=1}^m \one{\alpha=i} \cdot  \one{u\in V_i} \frac{1}{w_i} 
         \cdot M'_{i}(u) \cdot \one{\alpha = i} \cdot \one{v \in V_i} \frac{1}{w_i} 
         \cdot M'_{i}(v) \mid v \in N(X), X \right] \\
    \intertext{Since the internal randomness for the modifiers at the next level
        and $\alpha$ are independent, we get}
  &= \sum_{u \sim v} 2 p(u) p(v) \sum_{i=1}^m 
            \Exp_{\alpha}\left[\one{\alpha = i} \mid v \in N(X), X \right]
      \Exp_{IR} \left[  \frac{1}{w_i} \cdot M'_{i}(u) \cdot
          \frac{1}{w_i} \cdot M'_{i}(v) \mid v \in N(X), X \right] \\
      \intertext{Rearranging the sum, and noting that 
          $\Exp_{\alpha}[\one{\alpha = i}\mid X]= \Pr[\alpha=i\mid X] =
          w_i$, we get }
  &= \sum_{i=1}^m w_i \sum_{u\sim v, u \in V_i} 2 \frac{p(u)}{w_i} \,
      \frac{p(v)}{w_i} \, \Exp_{IR} \left[M'_i(u) M'_i(v) \right] \\
  &= \sum_{i=1}^m w_i \sum_{u\sim v, u \in V_i} 2\, p'(u)\,p'(v)\,
       \Exp_{IR} \left[M'_i(u) M'_i(v) \right] \\
      \intertext{Applying the induction hypothesis on $H[V_i]$ with values 
          $p'$,}
  &\leq \sum_{i=1}^m w_i\, c'\, p'(v) \leq \frac{c}{2}\,  p(v)
\end{align*}

We now turn to the second summand in~(\ref{eq:11}). In particular,
consider the expectation 
\begin{align*}
  &   \Exp_{X,\alpha,IR} \left[ M(u) M(v) \mid v \notin N(X) \right] \\
  &= \Pr[\alpha =0] \cdot
  \Exp_{X,IR} \left[ M(u) M(v) \mid v \notin N(X),\, \alpha=0 \right] \\
  &\qquad + \Pr[\alpha \neq 0] \cdot
  \Exp_{X,\alpha,IR} \left[ M(u) M(v) \mid v \notin N(X),\,
    \alpha\neq0 \right] 
\intertext{If we ensure that the value $w_0$ is chosen independently of
  $X$, we get that $\{\alpha = 0\}$ is independent of $X, IR$. Moreover,
since $v \notin N(X)$, it lies in $V_0$. By construction, $M(u)M(v)$
will be non-zero only if $u$ also lies in $V_0$, which causes the second
summand above to disappear, and give
}
  &=\Pr[\alpha =0] \cdot
  \Exp_{X,IR} \left[ M(u) M(v) \mid v \notin N(X) \right] \\
  &= \Pr[\alpha = 0] \,\Exp_{X}[\frac{1}{w_0} \bs{1}(u\notin N(X))
  \frac{1}{w_0}
  \bs{1}(v\notin N(X)) \mid v\notin N(X)] \\
  &= \Pr[\alpha = 0] \, \frac{1}{w_0^2}\, \Pr_X[ u\notin N(X) \mid
  v\notin N(X)] \\
  & \leq \Pr[\alpha = 0] \, \frac{1}{w_0^2} = \frac{1}{w_0}
\end{align*}
This shows that the second summand of~(\ref{eq:11}) is upper bounded by
\begin{align*}
  \sum_{u \sim v} p(u) p(v) \frac{1}{w_0} \,\Pr[v \notin N(X)] &\leq
  \frac{ p(v)}{w_0} \bigg(\sum_{u \sim v} p(u)\bigg) \bigg(
  \prod_{u \sim v} (1- q(u)) \bigg) \\
  &\leq \frac{p(v)}{w_0 }\cdot c \cdot \bigg(\sum_{u \sim v} q(u)\bigg)
  \exp\bigg( -\sum_{u \sim v} q(u)\bigg) \\ 
  &\leq \frac{p(v)}{w_0 } \cdot c \cdot \frac{1}{\mathrm{e}} \leq
  \frac{c}2\, p(v).
\end{align*}
In the first inequality we used the negative correlation property of the
dependent sampling scheme of Gandhi et al., in the second we used the
definition of $q(u) = p(u)/c$, in the third inequality we used that
$x\exp(-x) \leq \frac{1}{\mathrm{e}}$ for all $x$. The final inequality
uses $w_0 = 3/4 \geq 2/\mathrm{e}$. Hence the two summands sum up to at
most $c p(v)$, proving Property~(P1).

\subsubsection{Satisfying Properties (P2)}

\begin{claim}
  \label{clm:wi-pprime}
  The new values satisfy the following:
  \begin{OneLiners}
  \item[(i)] $w_i \geq \frac{c}{8\, p(V)}$ for all $i \in \{1,\ldots, m\}$.
  \item[(ii)] $p'(V_i) \leq 8 p(V)$.
  \end{OneLiners}
\end{claim}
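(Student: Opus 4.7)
The plan is to prove both parts by splitting on whether $i \in A$ or $i \in B$, and plugging in the definitions of $w_i$ together with the basic bound $m = |X| \leq \lceil q(V) \rceil = \lceil p(V)/c \rceil$ coming from property~(i) of the Gandhi et al.\ dependent sampling scheme.

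For part~(i), the case $i \in A$ is immediate: by definition $w_i = p(V_i)/(8\sum_{j\in A} p(V_j))$, the denominator is at most $8\,p(V)$, and membership in $A$ gives $p(V_i) > c$, so $w_i > c/(8\,p(V))$. For the case $i \in B$, we have $w_i = 1/(8|B|)$, and since $|B| \leq m \leq \lceil p(V)/c \rceil$, we obtain $w_i \geq 1/(8\lceil p(V)/c\rceil)$. Provided $p(V) \geq c$ (the only regime in which the claim is non-vacuous, since otherwise $A$ is empty and $|X| \leq 1$ makes the $B$-case trivial), this is at least $c/(16\,p(V))$; absorbing the factor of two into the constant already hidden at the top level gives the stated bound.

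For part~(ii), the case $i \in A$ is a direct computation: $p'(V_i) = p(V_i)/w_i = 8 \sum_{j\in A} p(V_j) \leq 8\,p(V)$. For $i \in B$, by definition $p(V_i) \leq c$, so $p'(V_i) = 8|B|\,p(V_i) \leq 8|B|\,c$. Combining with $|B| \leq m \leq \lceil p(V)/c\rceil \leq p(V)/c + 1$ gives $p'(V_i) \leq 8\,p(V) + 8c \leq 16\,p(V)$, which again matches the stated bound up to the constant absorbed globally.

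The main (minor) obstacle is handling the ceiling in the upper bound on $|X|$, which causes the ``$8$'' to effectively become a ``$16$'' unless we separately dispatch the degenerate range $p(V) < c$. That degenerate case is painless: there $A = \emptyset$ and $|X| \in \{0,1\}$, so either there is no index~$i$ to check or $|B| = 1$ and $w_i = 1/8$, $p'(V_i) \leq 8c \leq 8\,p(V)$ fails trivially unless we observe that we can just take $p(V) \geq c$ as the interesting regime (all other cases of the outer recursion contribute harmlessly to the final telescoping bounds in the proofs of (P2) and (P3) that follow).
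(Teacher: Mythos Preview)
Your argument is correct and follows the same $A$/$B$ case split as the paper, which likewise glosses over the ceiling in $|X| \leq \lceil p(V)/c\rceil$. One simplification the paper makes: for $i \in B$ in part~(ii) it invokes part~(i) directly, writing $p'(V_i) = p(V_i)/w_i \leq c/w_i \leq 8\,p(V)$, which avoids redoing the ceiling bookkeeping (and makes your degenerate-case paragraph unnecessary, since when $|X|=1$ one gets $p'(V_1)=8p(V_1)\leq 8p(V)$ directly); the paper also records the trivial $i=0$ case $p'(V_0)=\tfrac{4}{3}p(V_0)\leq 8p(V)$, which you omitted.
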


\begin{proof}
  For (i), for $i \in A$, $w_i = \frac{p(V_i)}{8\,\sum_{i \in A} p(V_i)}
  \ge \frac{c}{8\,p(V)}$ by the definition of $A$.  Moreover, for $i \in
  B$, $w_i \geq \frac{1}{8|X|} \geq \frac{c}{8\, p(V)}$. (Here we ignore
  the issues caused by $|X|$ being an integer adjacent to $p(V)$ rather
  than being equal to it.) Note that (i) does not make any claims about $w_0$.

  For (ii), $p'(V_0) = \frac43 p(V_0) \leq 8 p(V)$. For $i \in A$, 
  \[ p'(V_i) = \frac{p(V_i)}{w_i} = 8 \sum_{i \in A} p(V_i) \leq 8\,
  p(V). \] For $i \in B$, $p'(V_i) = \frac{p(V_i)}{w_i} \leq
  \frac{c}{w_i} \leq 8\, p(V)$ by part~(i).
\end{proof}

To prove property~(P2), by the IH the maximum value of the recursively
constructed modifer $M_i$ is 
\[ \left(\frac{8\cdot p'(V_i)}{c'}\right)^{t-3}\cdot 16^{\binom{t-3}{2}}
\leq \left(\frac{8\cdot p(V)}{c}\right)^{t-3} \cdot 16^{t-3} \cdot
16^{\binom{t-3}{2}} = \left(\frac{8\cdot p(V)}{c}\right)^{t-3} \cdot
16^{\binom{t-2}{2}}, \] using the definition of $c'$ and
Claim~\ref{clm:wi-pprime}(ii). If we consider $i \in \{1, \ldots, m\}$,
then scaling up by $1/w_i$ causes the maximum value to be at most
\[ \frac{1}{w_i} \times \left(\frac{8\cdot p(V)}{c}\right)^{t-3}
\cdot 16^{\binom{t-2}{2}} \leq \left(\frac{8\cdot p(V)}{c}\right)^{t-2}
\cdot 16^{\binom{t-2}{2}} \] by Claim~\ref{clm:wi-pprime}(i). If we
consider $i = 0$, then using that $M_0 \equiv 1$, the maximum value is
$4/3$, which is only smaller (since $t \geq 3$).

For property~(P3), observe that $\Exp[ M \log M ] \leq \Exp[ M ] \cdot
\log M_{\max}$, where $M_{\max}$ is the maximum value $M$ takes. But if
$M(v)$ is a \morv then $\Exp[M(v)] = 1$, so $\Exp[ M(v) \log M(v) ]$ is
bounded by the logarithm of the expression in property~(P2).
\end{proof}

\subsubsection{Satisfying the Assumptions}

We still have to address the issue of the validity of the assumptions in
Assumption~\ref{ass:well-def}. We assume we start off with a $K_r$-free
graph with $p(V) \leq O(\log \Delta)$, and
a contraction parameter $c = \frac14$ (say). Let $p^t(v)$ be the
probability values at some stage where the current vertex set is $V^t$
(which is $K_t$-free), then by Claim~\ref{clm:wi-pprime} and algebra.
\begin{OneLiners}
  \item[(a)] $p^t(V^t) \leq 8^{r-t} p(V)$, and
  \item[(b)] for all $v \in V^t$, $p^t(v) \leq p(v) \cdot
    \frac{p(V)^{r-t} 16^{\binom{r-t}{2}}}{c^{r-t}}$.
\end{OneLiners}
Consequently, $p^t(v) \leq p(v) \cdot (p(V)^r \cdot 16^{r^2})$, and if
we start off with $p(v) \leq \pstar$ and $r \ll \sqrt{\log \Delta}$, we
ensure assumption~(ii) that $p^t(v) \leq 1$ for all stages
$t$. Assumption~(i) demands $q^t(v) = \frac{p^t(v)}{c_t} = O(p^t(v) \cdot
2^t) \leq 1$ which is satisfied again by the same conditions.

\appendix

\section{Probabilistic Tools and Useful Lemmas}
\label{sec:tools}

\subsection{Concentration Bounds}

The following large-deviation bound is standard, see, e.g.,~\cite{AS}.
\begin{theorem}[A Large Deviation Bound]
  \label{thm:chernoff}
  For independent $[0,m]$-bounded random variables $X_1, X_2, \ldots,$ with
  $X := \sum_i X_i$ having mean $\Exp[X] \leq \mu$, given any $\lambda
  \geq 0$,
  \[ \Pr\left[ \abs{X  - \Exp[X]} \geq \lambda \right] \leq 2\exp\left\{ -
    \frac{\lambda^2}{m(2\mu + \lambda)} \right\} . \]
  In particular, this probability is at most $1/\poly(\Delta)$ when
  $\lambda = O(\sqrt{\mu m \ln \Delta} + m \ln \Delta)$.
\end{theorem}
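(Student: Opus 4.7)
The plan is to prove this via the classical moment generating function (MGF) approach, reducing first to the case of $[0,1]$-bounded variables. Let $Y_i := X_i/m$, so $Y_i \in [0,1]$ and $Y := X/m$ satisfies $\mu' := \Exp[Y] \leq \mu/m$. The claimed bound on $X$ is equivalent to the bound $\Pr[\abs{Y - \Exp[Y]} \geq \lambda/m] \leq 2\exp(-\lambda^2/(m(2\mu+\lambda)))$ on the scaled variable.

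For the upper tail, I would use the convexity inequality $e^{ty} \leq 1 + y(e^t - 1)$ valid for $y \in [0,1]$ to get $\Exp[e^{tY_i}] \leq \exp(\Exp[Y_i](e^t-1))$, and by independence $\Exp[e^{tY}] \leq \exp(\mu'(e^t-1))$. Markov's inequality on $e^{tY}$ and the standard Chernoff optimization over $t$ then yields the sharper Bernstein-form bound
\[ \Pr[X - \Exp[X] \geq \lambda] \leq \exp\!\left(-\frac{\lambda^2}{2m\mu + \tfrac{2}{3}m\lambda}\right), \]
which immediately implies the claimed upper-tail estimate since $2m\mu + \tfrac{2}{3}m\lambda \leq m(2\mu+\lambda)$. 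For the lower tail, the analogous computation with $-t$ in the MGF gives the even sharper Gaussian-type bound $\exp(-\lambda^2/(2m\mu))$ (the linear-in-$\lambda$ term in the denominator disappears because $e^{-t}-1+t \leq t^2/2$ without the cubic correction), which also implies the claimed bound. Summing the two one-sided tails accounts for the factor of $2$.

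For the ``in particular'' statement about polynomial decay in $\Delta$: set $\lambda = c(\sqrt{\mu m \ln \Delta} + m \ln \Delta)$ for a sufficiently large constant $c$. I would split into two cases according to which term in the denominator $m(2\mu+\lambda)$ dominates. If $\lambda \leq 2\mu$, the $2m\mu$ contribution dominates and the exponent is at least $\lambda^2/(4m\mu) \geq (c^2/4)\ln \Delta$; if $\lambda > 2\mu$, the $m\lambda$ contribution dominates and the exponent is at least $\lambda/(2m) \geq (c/2)\ln \Delta$. In either case, choosing $c$ large enough forces the failure probability below any prescribed inverse polynomial in $\Delta$.

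Since the statement is essentially a textbook Chernoff--Bernstein bound, I do not anticipate any serious technical obstacles; the only mild care required is in verifying that $m(2\mu+\lambda)$ is indeed a valid (slightly weakened) stand-in for the sharper denominator $2m\mu + \tfrac{2}{3}m\lambda$, so that a single clean display covers both tails uniformly and both cases of the ``in particular'' claim.
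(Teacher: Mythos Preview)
Your proof sketch is correct and is the standard textbook argument. The paper, however, does not give any proof of this statement: it simply records it as a ``standard'' large-deviation bound and cites the Alon--Spencer probabilistic-method text. So there is nothing to compare against beyond noting that your MGF/Bernstein derivation is exactly the kind of argument one finds in that reference.
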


For a multilinear polynomial $f(x) = f(x_1, x_2, \ldots, x_n)$ 
with nonnegative coeffecients and degree $\leq q$, and $n$ independent
random variables $Y=(Y_1,Y_2,\dots ,Y_n),$ we define $\mu_r$ 
(for every $r \leq q$) as follows 
    \[ \mu_r = \max_{\substack{S \subseteq [n], |S|=r\\ S=\{s_1,\dots,s_r\} }}
                        \left( \frac{\partial^r f}{ \partial x_{s_1} 
                                \partial x_{s_2} \dots \partial x_{s_r}}
                        \middle|_{\substack{\phantom{i} \\ \Exp[|Y_1|],\dots, 
                                    \Exp[|Y_n|]}} \right) \]
Building on a long line of work, the following bound is presented by Schudy and
Sviridenko~\cite{SS12}.

\begin{theorem}[Large Deviation for Polynomials]
  \label{thm:ss-poly}
  Consider independent $[0,m]$-bounded r.v.s $X_1, X_2,
  \ldots$ and let $X := (X_1, X_2, \ldots, X_n)$.  Let $f(x) = f(x_1,
  x_2, \ldots, x_n)$ be a multilinear polynomial of degree $q$ with
  non-negative coefficients, and let $f(X)$ have moment parameters
  $\mu_0, \mu_1, \ldots, \mu_q$. There exists a universal constant $C =
  C(q)$ such that
  \[ \Pr\left[ \abs{f(X)  - \Exp[f(X)]} \geq \lambda \right] \leq e^2
  \max\left\{ \max_{r = 1,\ldots, q} \exp\left\{ -
    \frac{\lambda^2}{C\cdot m^r \cdot \mu_0\mu_r} \right\} ~,~
  \max_{r = 1,\ldots, q} \exp\left\{ -
    \left( \frac{\lambda}{C\cdot m^r \cdot \mu_r} \right)^{1/r} \right\} \right\}. \]
\end{theorem}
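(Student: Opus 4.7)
The plan is to bound all central moments $\Exp[|f(X) - \Exp f(X)|^p]^{1/p}$ and then apply Markov's inequality with a choice of $p$ tailored to the target deviation $\lambda$. This moment-method strategy is standard in the polynomial concentration literature (Kim--Vu, Lata\l{}a, Boucheron--Lugosi--Massart, Schudy--Sviridenko), and I would follow it here.

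First I would reduce to a decoupled form: by the de la Pe\~{n}a--Montgomery--Smith decoupling inequality, it suffices---up to constants depending only on $q$---to bound the $L^p$ norm of the fully decoupled polynomial $\widetilde{f}(X^{(1)}, \ldots, X^{(q)})$, in which each monomial of degree $r \le q$ uses the $j$-th variable from the $j$-th of $q$ independent copies of the vector $X$. The decoupled form eliminates the intricate dependence between monomials sharing variables, which is the main source of bookkeeping difficulty in direct approaches.

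Next I would establish the moment bound by induction on $q$. The base case $q=1$ is Bernstein's inequality for sums of bounded independent variables, which already has the subgaussian-plus-subexponential form of the claim with parameters $\sqrt{p\, m\, \mu_0\mu_1}$ and $p\, m\, \mu_1$. For the inductive step, I would condition on the outermost copy $X^{(q)}$: then $\widetilde f$ is a linear form in $X^{(q)}$ with coefficients that are degree-$(q-1)$ polynomials in the remaining copies. A conditional Bernstein bound on the outer variable, combined with the inductive hypothesis applied to the norms of the coefficient polynomials, yields the desired moment inequality. The critical observation is that the partial derivatives of $f$ evaluated at the means, namely $\mu_r$, control exactly the expectations one needs for the coefficient polynomials at each recursion level; this is where the $\mu_r$'s of the statement arise.

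Finally, the Markov step is routine. For each $r \in [q]$, one chooses $p$ to balance $\lambda$ against the corresponding term in the moment bound: $p \asymp \lambda^2/(C\, m^r \mu_0 \mu_r)$ yields the subgaussian tail $\exp\{-\lambda^2/(C\, m^r \mu_0 \mu_r)\}$, while $p \asymp (\lambda/(C\, m^r \mu_r))^{1/r}$ yields the subexponential tail $\exp\{-(\lambda/(C\, m^r \mu_r))^{1/r}\}$, and taking the worse of these over $r$ and the two regimes produces the stated inequality. The main obstacle is tracking the $\mu_r$'s cleanly through the recursion: at each conditioning step the conditional variance is itself a polynomial in the remaining variables, and one must verify that its expectation is bounded by products of the original $\mu_r$'s rather than by more elaborate combinations of higher derivatives. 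Handling this uses multilinearity together with the fact that $\Exp[\widetilde f] = \Exp[f]$, and is the technical heart of the argument in \cite{SS12}.
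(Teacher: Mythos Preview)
Your proposal is not wrong as a sketch of how polynomial concentration inequalities of this type are proved, but it is far more than what the paper does. The paper's entire proof of Theorem~\ref{thm:ss-poly} is a two-line black-box citation: it invokes Theorem~1.2 of Schudy--Sviridenko~\cite{SS12} directly, together with the observation that any $[0,m]$-bounded random variable is automatically moment-bounded with parameter $L = m$ (the hypothesis required by their theorem). That is, the paper does not reprove the concentration inequality at all; it simply imports it.

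What you have written is essentially an outline of the proof \emph{inside}~\cite{SS12} (or of closely related arguments by Kim--Vu and Lata\l{}a). That is reasonable as a self-contained exposition, and the decoupling-plus-induction-on-degree-plus-optimized-Markov strategy you describe is a valid route to such bounds. But for the purposes of this paper the theorem is a known tool, and the only content in the paper's proof is the trivial verification that $[0,m]$-boundedness implies the moment-boundedness hypothesis of~\cite{SS12}. If your goal is to match the paper, you should replace your sketch with that citation and one-line observation.
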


\begin{proof}
  Use Theorem~1.2 of the Schudy-Sviridenko paper~\cite{SS12} and the
  observation that any $[0,m]$-bounded r.v.\ is moment bounded by
  parameter $L = m$.
\end{proof}

\begin{corollary}
  \label{cor:ss-poly-2}
  Consider independent $[0,1]$-bounded r.v.s $X_1, X_2,
  \ldots$ and let $X := (X_1, X_2, \ldots, X_n)$.  Let $f(x) = f(x_1,
  x_2, \ldots, x_n)$ be a multilinear polynomial of degree $2$ with
  non-negative coefficients, and let $f(X)$ have mean $\Exp[f(X)] \leq
  \mu$ and moment parameters $\mu_1, \mu_2 \leq O(1)$. Then
  \[ \Pr\left[ \abs{f(X)  - \Exp[f(X)]} \geq \lambda \right] \leq e^2
  \max\left\{ \exp\left\{ -
    \frac{\lambda^2}{O(\mu)} \right\} ~,~
  \exp\left\{ - O(\lambda)^{1/2} \right\} \right\}. \]
  In particular, this probability is at most $1/\poly(\Delta)$ when
  $\lambda = O(\sqrt{\mu \ln \Delta} + \ln^2 \Delta)$.
\end{corollary}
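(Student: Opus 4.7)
The plan is simply to specialize Theorem~\ref{thm:ss-poly} to the stated setting and then read off the deviation value $\lambda$ needed to drive the failure probability below $1/\poly(\Delta)$.

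First I would invoke Theorem~\ref{thm:ss-poly} with $q = 2$ and $m = 1$, since the variables $X_i$ lie in $[0,1]$ and $f$ is a degree-$2$ multilinear polynomial with non-negative coefficients. This yields a bound of the form
\[
\Pr[\,|f(X) - \Exp[f(X)]| \geq \lambda\,]
\leq e^2 \max\Big\{ \max_{r\in\{1,2\}} \exp(-\lambda^2/(C\mu_0\mu_r)),\ \max_{r\in\{1,2\}} \exp(-(\lambda/(C\mu_r))^{1/r}) \Big\}.
\]
Substituting $\mu_0 \leq \mu$ and $\mu_1,\mu_2 \leq O(1)$ collapses the first maximum to a single Gaussian-type tail $\exp(-\lambda^2/O(\mu))$. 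The second maximum, over $r\in\{1,2\}$, produces $\exp(-\Omega(\lambda))$ for $r=1$ and $\exp(-\Omega(\lambda^{1/2}))$ for $r=2$; the latter dominates for large $\lambda$, so the overall maximum is $\exp(-O(\lambda)^{1/2})$. Combining these gives the stated two-term bound.

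For the ``in particular'' clause, I would equate each of the two surviving exponents to $\Omega(\ln \Delta)$ separately. Requiring $\lambda^2/O(\mu) \geq \Omega(\ln \Delta)$ yields $\lambda \geq \Omega(\sqrt{\mu \ln \Delta})$; requiring $\Omega(\lambda^{1/2}) \geq \Omega(\ln \Delta)$ yields $\lambda \geq \Omega(\ln^2 \Delta)$. Taking $\lambda = O(\sqrt{\mu \ln \Delta} + \ln^2 \Delta)$ satisfies both, which is exactly what the corollary asserts.

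There is no real obstacle here beyond bookkeeping: the statement is a direct specialization, and the only thing to be careful about is verifying that the max over $r\in\{1,\dots,q\}$ reduces cleanly after plugging in $m=1$ and the constant upper bounds on $\mu_1,\mu_2$, so that we are free to absorb all $C$-dependent constants into the $O(\cdot)$ and $\Omega(\cdot)$ notation without complicating the form of the bound.
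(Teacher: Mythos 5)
Your proposal is correct and matches the paper's treatment: the paper states Corollary~\ref{cor:ss-poly-2} as an immediate specialization of Theorem~\ref{thm:ss-poly} (with $q=2$, $m=1$, $\mu_0 \leq \mu$, $\mu_1,\mu_2 = O(1)$) and gives no separate proof, which is exactly the bookkeeping you carry out. The only implicit point worth noting is that for small $\lambda$ the claimed bound exceeds $1$ and is trivially true, so the ``$r=2$ term dominates'' reduction only needs to hold for $\lambda$ above a constant, which your constant-absorption already covers.
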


\subsection{The Lov\'asz Local Lemma}

The following theorem essentially follows from Moser and Tardos~\cite{MT10}.

\begin{theorem}
  \label{thm:mt-lll}
  Consider a set of $n$ independent random variables $\F = \{X_i\}_{i =
    1}^n$, and assume that sampling each r.v.\ from the underlying
  distribution can be done in constant time.  Given a collection of $m$
  subsets $\{ S_j \sse \F \}_{j = 1}^m$ such that the ``bad'' event $\B_j$
  is completely determined by the r.v.s in subset $S_j$, define the degree
  $d_j = \abs{\{ j' \in [m] \mid S_j \cap S_{j'} \neq \emptyset
    \}}$. Define $p_j := \Pr[ \B_j]$. Suppose
  \[ (\max_j p_j) \cdot (\max_j d_j) \leq 1/4 \] then there is an
  algorithm running in time $\poly(m,n)$ to find a setting of the random
  variables $X_i$ such that none of the bad events occur.
\end{theorem}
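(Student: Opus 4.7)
The plan is to run and analyze the resampling procedure of Moser and Tardos. The algorithm draws each $X_i$ from its distribution in total time $O(n)$; it then repeatedly scans the events to find some currently violated $\B_j$ (each scan costs $O(mn)$, since each $\B_j$ is a function of the variables in $S_j$ and so can be re-evaluated in $O(|S_j|)=O(n)$ time), and then resamples every $X_i \in S_j$ independently from its distribution. When no bad event holds, it outputs the current assignment. Correctness is immediate; all the work is in bounding the expected number of resampling steps.

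To bound this I would use the witness-tree argument of Moser and Tardos. For a fixed resampling step in the trajectory, walk backwards through the log and build a rooted, labeled tree: the root is the event $\B_j$ that triggered the step, and as we traverse earlier resamplings we attach each $\B_{j'}$ whose defining set $S_{j'}$ intersects the set of some already-placed node as a child of the deepest such node. The central validity lemma states that any fixed witness tree $\tau$ with node labels $L(v)$ appears as the tree produced at some step of the algorithm's execution with probability at most $\prod_{v\in \tau} p_{L(v)}$; this is proved by exhibiting a measure-preserving injection from execution logs producing $\tau$ into an i.i.d.\ ``tape'' of fresh samples from the variables' distributions, and using the fact that $\B_{L(v)}$ must hold on its ``current'' assignment at the moment it is placed, which has probability exactly $p_{L(v)}$ over the independent tape.

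A standard Galton--Watson / Catalan-style counting estimate then shows that the number of proper witness trees of size $t$ rooted at label $j$ is at most $(e\, d_j)^{t}/t$, since each node contributes at most $d_j$ choices of child label per slot. Thus the expected number of resamplings of $\B_j$ is at most $\sum_{t\geq 1} (e\, d_j\, p_j)^{t}/t$, and the hypothesis $(\max_j p_j)(\max_j d_j) \leq 1/4$ gives $e\, p_j d_j \leq e/4 < 1$, so this geometric series is bounded by an absolute constant. Summing over $j \in [m]$ yields $O(m)$ expected total resamplings; each step costs at most $\poly(m,n)$ time for scanning and resampling, giving the claimed polynomial expected runtime. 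The main obstacle is the validity lemma for witness trees, but the comfortable slack of our hypothesis makes the plain (unweighted) form of the Moser--Tardos argument suffice, and we need not invoke the weighted $x_A$-LLL formulation to finish the proof.
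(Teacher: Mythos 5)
Your proposal is correct and is essentially the paper's own route: the paper offers no proof here, simply noting the theorem ``essentially follows from Moser and Tardos,'' and your sketch is precisely the standard Moser--Tardos resampling/witness-tree argument, with the hypothesis $(\max_j p_j)(\max_j d_j)\leq 1/4$ (where $d_j$ already counts $j$ itself, since $S_j\cap S_j\neq\emptyset$) giving $e\,p_j d_j\leq e/4<1$ and hence a convergent geometric series and $O(m)$ expected resamplings. This yields expected $\poly(m,n)$ time, which is exactly how the paper uses the theorem.
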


\subsection{Auxiliary Lemmas}

\begin{lemma}
  \label{lem:p-to-pa}
  Suppose $h(v,p) \geq (1 - \delta) \ln \Delta$ and $p(\P_v) \in (1 \pm
  \nu)$, then $\pa(\P_v) \geq 1 - 6(\delta+\nu)$. If $\nrg_{G}(v,p) \leq
  2K$ also holds, then $\pc(\P_v) \geq 1-6(\delta+\nu)- 2 \e$.
\end{lemma}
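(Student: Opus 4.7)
The plan is to bound two ``lost mass'' quantities separately: $\beta := p(\P_v) - \pa(\P_v) = \sum_{\g:\, p\vg > \phat} p\vg$, the mass zeroed out by the truncation step, and $\beta' := \pa(\P_v) - \pc(\P_v) = \sum_{\g:\, \sum_{u \sim v} p\ug > \crowd} \pa\vg$, the additional mass zeroed out for being ``crowded''. The first bound will come from the entropy hypothesis, and the second from the energy hypothesis via Markov's inequality.

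For the first part, I would split the entropy sum $h(v,p) = \sum_\g p\vg \ln(1/p\vg)$ into contributions from ``bad'' colors ($p\vg > \phat$) and the rest. On bad colors, $\ln(1/p\vg) < \ln(1/\phat) = (3/4+5\e)\ln \Delta$. On the remaining colors, the observation to use is that every modifier $M^w_\vg$ is a product of mean-one random variables supported on $\{0\} \cup [1,\infty)$, so once a coordinate becomes non-zero it never drops back below its initial value $1/s$; hence $\ln(1/p\vg) \leq \ln s \leq \ln\Delta$ for all $\g$ with $p\vg > 0$. Combining these gives
\[
(1-\delta)\ln\Delta \;\leq\; h(v,p) \;\leq\; \beta\cdot(3/4+5\e)\ln\Delta + (p(\P_v)-\beta)\ln\Delta,
\]
and rearranging yields $\beta \cdot (1/4 - 5\e) \leq p(\P_v) - (1-\delta) \leq \nu + \delta$. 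With $\e = 1/100$ this gives $\beta \leq 5(\nu+\delta)$, whence $\pa(\P_v) = p(\P_v) - \beta \geq (1-\nu) - 5(\nu+\delta) \geq 1 - 6(\delta+\nu)$.

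For the second part, the plan is a direct edge-level Markov argument. By definition, every color $\g$ zeroed out in passing from $\pa$ to $\pc$ has $\sum_{u \sim v} p\ug > \crowd$, so
\[
\crowd \cdot \beta' \;\leq\; \sum_{\g:\, \sum_{u \sim v} p\ug > \crowd} \pa\vg \sum_{u \sim v} p\ug \;\leq\; \sum_{\g} p\vg \sum_{u \sim v} p\ug \;=\; \nrg_G(v,p) \;\leq\; 2K.
\]
From the parameter table in \S\ref{sec:param}, $K = (b/4)\e\ln\Delta/C$ with $C \geq 1$, so $K \ll \ln \Delta$; hence $\beta' \leq 2K/\crowd \ll 2\e$. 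Thus $\pc(\P_v) \geq \pa(\P_v) - \beta' \geq 1 - 6(\delta+\nu) - 2\e$. The only non-mechanical step is the entropy split in the first part---the observation that an entropy close to $\ln\Delta$ forces probability mass to be spread across many colors and therefore caps the total mass on colors with $p\vg > \phat$; the second bound is then a routine Markov computation against the energy invariant.
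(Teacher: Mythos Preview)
Your proof is correct and follows essentially the same approach as the paper's: an entropy split to bound the high-probability mass $\beta$, followed by a Markov-type argument against the energy to bound the crowded mass $\beta'$. The paper phrases the second step slightly differently---it introduces an intermediate threshold $K/\e \leq \crowd$ and defines $S_v$ accordingly to land exactly on $2\e$---but this is cosmetic; your direct use of the threshold $\crowd$ gives $\beta' \leq 2K/\crowd \leq 2\e$ just as well, since $\crowd = \ln\Delta/\e \geq K/\e$.
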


\begin{proof}
  First, we prove the bound on $\pa(\P_v)$. 
  Recall that $\pa\vg = p\vg
  \cdot \one{p\vg \leq \phat}$. Since any non-zero probability is at
  least $1/s \geq 1/\Delta$, the entropy
  \begin{align}
    h(v,p) &= - \sum_\g p\vg \ln p\vg
    \leq \bigg( \sum_{\g: p\vg \in (1/s, \phat]} p\vg \bigg) \ln \Delta +
    \bigg( \sum_{\g: p\vg > \phat} p\vg \bigg) \ln 1/\phat \label{eq:jo-12}
  \end{align}
  Let $B := \sum_{\g: p\vg > \phat} \, p\vg$; since $p(\P_v) \in (1 \pm
  \nu)$, we have that \[ \pa(\P_v) = \sum_{\g: p\vg \in [0, \phat]} p\vg
  = \sum_{\g: p\vg \in (1/s, \phat]} p\vg \in (1 - B \pm \nu). \]
  Moreover, $\ln 1/\phat = (\frac34 + 5\e)\ln \Delta$. Finally, by
  assumption, $h(v,p) \geq (1 - \delta) \ln \Delta$.  Substituting
  into~(\ref{eq:jo-12}) and dividing throughout by $\ln \Delta$, and using
  $\e = 1/100$, we get
  \begin{align}
    (1 - B + \nu) + \bigg(\frac34 + 5\e\bigg) B \geq 1 - \delta \quad 
    \implies \quad B \leq 5(\delta + \nu).
  \end{align}
  Hence $\pa(\P_v) \in [1 - 6(\delta + \nu), 1 + \nu]$, which proves the
  first part of the claim.

  Next, the bound on $\pc(\P_v)$. Recall that $\pc\vg = \pa\vg \cdot
  \one{\sum_{u \sim v} p\ug \leq \crowd}$. Since $\e = 1/100$, the
  threshold for zeroing out is $\frac{\ln \Delta}{\e} \geq
  \frac{K}{\e}$. Consequently, if $S_v:=\{ \gamma \mid \sum_{u\sim v}
  p\ug \geq \frac{K}{\e} \}$, then $\pc(\P_v) \geq \pa(\P_v) -
  \sum_{\g \in S_v} p\vg$.  To bound the latter sum, observe that 
  \[ \sum_{\g \in S_v} p\vg \leq \frac{\e}{K} \sum_{\g \in S_v} p\vg
  \sum_{u \sim v} p\ug  \leq \frac{\e}{K} \sum_{u \sim v} \nrg_G(uv, p)
  = \frac{\e}{K} \nrg_G(v,p)
  \leq 2\e.\]
  Hence $\pc(\P_v) \geq 1 - 6(\delta+\nu) - 2\e$
\end{proof}
\fi

\end{document}